\newtheorem{lemma}{Lemma}[section]
\newtheorem{prop}[lemma]{Proposition}
\newtheorem{theorem}[lemma]{Theorem}
\newtheorem{rem}[lemma]{Remark}
\newcommand{\re}{\begin{rem}\rm}
\newcommand{\mar}{\end{rem}}
\newtheorem{defi}[lemma]{Definition}
\newcommand{\fo}{\begin{eqnarray*}}
\newcommand{\mel}{\end{eqnarray*}}
\newcommand{\cz}{{\mathbb C}}
\newcommand{\id}{\mathrm{id}}
\newcommand{\pl}{\hspace{.1cm}}
\newcommand{\A}{{\mathcal A}}
\newcommand{\C}{{\mathcal C}}
\newcommand{\calH}{{\mathcal H}}
\newcommand{\calM}{{\mathcal M}}
\newcommand{\calS}{{\mathcal S}}
\newcommand{\calT}{{\mathcal T}}
\newcommand{\qd}{\end{proof}\vspace{0.5ex}}
\newcommand\hsnorm[1]{\norm{#1}_{\mathrm{HS}}}
\newcommand{\opnorm}{\@ifstar\@opnorms\@opnorm}
\newcommand{\@opnorms}[1]{%
  \left|\mkern-1.5mu\left|\mkern-1.5mu\left|
   #1
  \right|\mkern-1.5mu\right|\mkern-1.5mu\right|
}
\newcommand{\@opnorm}[2][]{%
  \mathopen{#1|\mkern-1.5mu#1|\mkern-1.5mu#1|}
  #2
  \mathclose{#1|\mkern-1.5mu#1|\mkern-1.5mu#1|}
}
\newcommand{\norm}[1]{\Vert#1\Vert}
\newcommand{\trace}{\mathrm{tr}}
\newcommand{\transpose}{\top}
\newcommand{\diag}{\mathrm{diag}}
\newcommand{\conv}{\circledast}
\newcommand{\supp}[1]{\mathrm{supp}(#1)}
\newcommand{\fnorm}[1]{\|#1\|_{\rm F}}
\newcommand{\argmin}{\mathop{\rm argmin}}
\newcommand{\argmax}{\mathop{\rm argmax}}
\title{Blind Recovery of Sparse Signals from Subsampled Convolution}
\author{Kiryung Lee, Yanjun Li, Marius Junge, and Yoram Bresler}
\begin{document}
\doublespacing

\maketitle


\begin{abstract}
  Subsampled blind deconvolution is the recovery of two unknown signals from samples of their convolution. To overcome the ill-posedness of this problem, solutions based on priors tailored to specific application have been developed in practical applications. In particular, sparsity models have provided promising priors. However, in spite of empirical success of these methods in many applications, existing analyses are rather limited in two main ways: by disparity between the theoretical assumptions on the signal and/or measurement model versus practical setups; or by failure to provide a performance guarantee for parameter values within the optimal regime defined by the information theoretic limits. In particular, it has been shown that a naive sparsity model is not a strong enough prior for identifiability in the blind deconvolution problem. Instead, in addition to sparsity, we adopt a conic constraint, which enforces spectral flatness of the signals. Under this prior, we provide an iterative algorithm that achieves guaranteed performance in blind deconvolution at near optimal sample complexity. Numerical results show the empirical performance of the iterative algorithm agrees with the performance guarantee.
\end{abstract}

\section{Introduction}

Blind deconvolution is the estimation of two signals from their convolution with one another without knowing either signal.
The blind deconvolution problem arises in numerous applications including
astronomical speckle imaging, fluorescence microscopy, remote sensing, wireless communications, seismic data analysis,
speech dereverberation, and medical imaging (cf. \cite{kundar1996blind}).
Without further information, the blind deconvolution problem does not admit a unique solution.
However, prior information on signals of interest in practical applications enabled resolution of unknown signals from their convolution.
Most approaches formulated the recovery as a regularized or constrained nonlinear least squares problem using prior information.
In particular, subspace and sparsity models have been employed as promising priors and enabled empirical success in many practical applications.
For example, a finite impulse response (FIR) model corresponds to a subspace model where the subspace is spanned by the standard basis vectors.
Deterministic sparsity models have been employed as a signal prior in applications such as echo cancellation \cite{liu2005relevant} and seismic data analysis \cite{kazemi2014sparse}.
Statistical models with heavy-tailed distributions (e.g., \cite{kotera2013blind}) also promote sparsity of the solution.

In certain applications, the recovery from sub-sampled convolution is needed.
FIR or more general sparsity priors enabled the recovery from a small number of samples of the convolution in superresolution \cite{sroubek2007unified} and in parallel MRI \cite{ying2007joint}.
Such applications motivate our inclusion of the subsampled convolution case in this paper.
As we will see, our analysis addresses this case with little or no extra effort.

In wireless communications and image processing, the blind deconvolution of a single input signal from multiple channel outputs has been of interest. Under the assumption that the unknown channels responses can be represented as short FIR filters,
subspace methods and their algebraic performance guarantees have been studied \cite{abed1997blind}.
Unfortunately, neither algorithms nor theory directly apply to problems with other priors such as sparsity or nonnegativity,
in particular when there is only a single channel output.
In these more challenging scenarios, commonly used approaches have been alternating regularized least squares,
where regularizer terms are tailored to priors given in specific applications.
Although these methods were empirically successful, a rigorous performance guarantee has been missing or rather limited to date.

By lifting the reconstruction problem to a higher dimensional problem of recovering a rank-1 matrix,
Ahmed et al. \cite{ahmed2014blind} proposed a convex optimization approach to blind deconvolution.
They proved the following near optimal performance guarantee.
Under certain subspace models on the signals and assumptions of randomness,
with the subspace dimensions proportional (up to a logarithmic factor) to the signal length,
the recovery of the signals is guaranteed with high probability.
This performance guarantee has near optimal scaling in the information theory sense: the number of measurements required (equal to the signal length) is proportional up to a log factor to the number of degrees of freedom in the signals.
However, the subspace model is not general enough to describe signal priors employed in most practical applications.

Since the first result on a near optimal performance guarantee for blind deconvolution using subspace models \cite{ahmed2014blind},
a few subsequent works extended the result in various ways, in particular, in terms of generalizing the signal prior
from a subspace model to a sparsity model corresponding to a union-of-subspaces.
Ling and Strohmer \cite{ling2015self} proposed to use a convex relaxation approach that drops the rank-1 constraint in the lifted formulation.
Their performance guarantee allows the sparsity level to be almost proportional (up to a logarithmic factor) to the square root of the signal length, which is suboptimal scaling compared to the number of degrees of freedom in the signal model.
Chi \cite{chi2015guaranteed} studied a performance guarantee for a more challenging infinite dimensional blind deconvolution problem,
referred to as blind spike deconvolution, where one signal is spikes with continuous-valued shifts and the other signal belongs to a random subspace.
Her solution too employed a convex relaxation based on an atomic model and the corresponding guarantee was given at a sample complexity proportional to the product of the square of the sparsity level and the dimension of the subspace.

Motivated by blind deconvolution with sparsity priors and by other bilinear inverse problems sharing a similar structure,
some of the authors of this paper together with Yihong Wu proposed an iterative algorithm
called the \textit{sparse power factorization} (SPF) \cite{LeeWB2013spf}.
Similar to the aforementioned two theoretical works \cite{ahmed2014blind,choudhary2014identifiability},
a general bilinear inverse problem was lifted to the problem of recovering a simultaneously sparse and rank-1 matrix.
In a prototypical setup, where the linear measurements are obtained as inner products with i.i.d. Gaussian matrices,
a near optimal performance guarantee for SPF was shown, which achieves the information theoretic fundamental limit.
Furthermore, empirically, SPF outperformed combinations of convex relaxations of both low-rankness and sparsity priors.
The performance guarantees for SPF were derived based on the restricted isometry property of an i.i.d. Gaussian measurement operator.
Unfortunately, the linear operator arising in blind deconvolution does not satisfy such a strong property;
hence, the near optimal performance guarantees for SPF \cite{LeeWB2013spf} do not apply to the blind deconvolution problem straightforwardly.

In this paper, as a sequel to the previous work \cite{LeeWB2013spf},
we propose an alternating minimization algorithm modified from SPF and provide its performance guarantee for the blind deconvolution problem.
The linear measurements in the blind deconvolution problem are obtained by inner products with rank-1 matrices.
Similar to the matrix completion problem \cite{candes2010power}, some notion of incoherence is necessary:
in fact, a naive sparsity prior has been shown not strong enough for identifiability in the blind deconvolution problem \cite{choudhary2014sparse}. In addition to sparsity, we adopt a prior on the signal proposed by Ahmed et al. \cite{ahmed2014blind}
expressing a preference for flat spectra in the Fourier domain.
In this setup, when the unknown sparse signals are heavily peaked, i.e.,
a few dominant components contain a certain fixed fraction of the total signal energy,
the proposed alternating minimization algorithm provides stable recovery for signals of sparsity levels
almost proportional (up to a logarithmic factor) to the signal length.
This is a performance guarantee at near optimal sample complexity and improves on the known theoretical results in blind deconvolution significantly. (The comparison is presented in Table~\ref{tab:comparison}.)
The proofs of our theoretical results are based on a new RIP-like property derived in a companion paper \cite{LeeJunge2015}.
Unlike the aforementioned relevant works, our performance guarantee, derived under the additional flat spectrum property,
provides blind deconvolution at near optimal scaling of the sample complexity.
Furthermore, our work is the first to achieve a near optimal performance guarantee
for the significantly more challenging recovery problem of blind deconvolution from subsampled data.

\begin{table}
  \begin{center}
  \begin{tabular}{c|c|c|c|c}
  & Signal Model & \#Measurements & Subsampling & Additional Assumptions \\\hline\hline
  \cite{ahmed2014blind} & subspace & $O(s \log n)$ & No & spectral flatness \\\hline
  \cite{ling2015self} & union of subspaces & $O(s^2 \log n)$ & No & \\\hline
  \cite{chi2015guaranteed} & mixed (off-grid) & $O(s^3 \log n)$ & No & separation of frequencies \\\hline
  This work & union of subspaces & $O(s \log n)$ & $\checkmark$ & spectral flatness \\\hline
  \end{tabular}
  \end{center}
  \caption{Comparison of the performance guarantees for blind deconvolution.}
  \label{tab:comparison}
\end{table}

Another line of research \cite{choudhary2014identifiability,li2015unified,li2015identifiability} on blind deconvolution
has pursued algebraic identifiability results.
Chowdhary et al. \cite{choudhary2014identifiability} explored algebraic conditions that enable unique identification of unknown signals in blind deconvolution (and in its generalization to arbitrary bilinear inverse problems) in a deterministic setup.
However, their analysis, even in a noise-free scenario without any restriction on computational cost,
did not provide a condition for recovery explicitly given in terms of sample complexity.
Unlike the earlier algebraic analysis, recent results by Li et al. \cite{li2015unified,li2015identifiability}
provide performance guarantees given explicitly in terms of sample complexity.
While these results are restricted to the noiseless scenario and do not provide a constructive algorithm at a polynomial computational cost,
they provide sharp analysis on the relation among model parameters without involving conservative absolute constants.
Most notably, a very recent result by Li et al. \cite{li2015identifiability} provides a tight analysis by
closing the gap between necessary and sufficient conditions in previous works.
Unlike the result in this paper, the extension to the subsampled case remains an open problem.

The rest of this paper is organized as follows.
The subsampled blind deconvolution problem is formulated with signal priors in Section~\ref{sec:formulation}
and a provably near optimal iterative algorithm and its practical implementation are discussed in Section~\ref{sec:algorithm}.
We present performance guarantees via RIP-like properties and their implications for sample complexity in Section~\ref{sec:main}.
The proofs of the main results are presented in Section~\ref{sec:proofs}.
After a discussion of the numerical results in Section~\ref{sec:numres},
we conclude the paper with a summary in Section~\ref{sec:concl}.

\section{Problem Statement}
\label{sec:formulation}

\subsection{Notation}

Various norms are used in this paper.
The Frobenius norm of a matrix is denoted by $\norm{\cdot}_{\mathrm{F}}$.
The operator norm from $\ell_p^n$ to $\ell_q^n$ will be $\norm{\cdot}_{p\to q}$.
Absolute constants will be used throughout the paper.
Symbols $C,c_1,c_2,\ldots$ are reserved for real-valued positive absolute constants.
Symbol $\beta$ is a positive integer absolute constant.
For a matrix $A$, its element-wise complex conjugate, its transpose, and its Hermitian transpose
are respectively written as $\overline{A}$, $A^\transpose$, and $A^*$.
For a linear operator $\A$ between two vector spaces, $\A^*$ will denote its adjoint operator.
The matrix inner product $\trace(A^*B)$ between two matrices $A$ and $B$ is denoted by $\langle A, B \rangle$.
Matrix $F \in \cz^{n \times n}$ will be used to denote the unitary discrete Fourier transform, and $\conv$ stands for the circular convolution.
We will use the shorthand notation $[n] = \{1,2, \ldots, n\}$.
Let $J \subset [n]$. Then, $\Pi_J: \cz^n \to \cz^n$ denotes the coordinate projection
whose action on a vector $x$ keeps the entries of $x$ indexed by $J$ and sets the remaining entries to zero.
The identity map on $\cz^{n \times n}$ will be denoted by $\id$.
Finally, we use $\Gamma_s$ to denote the set of $s$-sparse vectors in $\cz^n$, i.e.,
\begin{align*}
\Gamma_s := \{u \in \cz^n:\pl \norm{u}_0 \leq s\},
\end{align*}
where $\norm{u}_0$ counts the number of nonzero elements in $u$.

\subsection{Measurement model}

The subsampled blind deconvolution problem is formulated as a bilinear inverse problem as follows.
Let $\Omega = \{\omega_1,\omega_2,\ldots,\omega_m\} \subset [n]$ denote the ordered set of $m$ distinct sampling indices.
Given $\Omega$, the sampling operator $S_\Omega: \cz^n \to \cz^m$ is defined so that
the $k$th element of $S_\Omega x \in \cz^m$ is the $\omega_k$th element of $x \in \cz^n$ for $k = 1,\ldots,m$.
Then, the $m$ samples of the convolution $x \conv y$ indexed by $\Omega$ corrupted by additive noise $z$ constitute
the measurement vector $b \in \cz^m$, which is expressed as
\begin{equation}
\label{eq:mdl_meas}
b = \sqrt{\frac{n}{m}} S_\Omega (x \conv y) + z.
\end{equation}

Let $x,y \in \cz^n$ be uniquely represented as $x = \Phi u$ and $y = \Psi v$ over dictionaries $\Phi$ and $\Psi$.
Then, the recovery of $(x,y)$ is equivalent to the recovery of $(u,v)$.
Since each element of $b$ corresponds to a bilinear measurement of $(u,v)$,
the subsampled blind deconvolution problem corresponds to the bilinear inverse problem of recovering $(u,v)$
from its bilinear measurements in $b$, when $\Omega$, $\Phi$, and $\Psi$ are known.

Ahmed et al. \cite{ahmed2014blind} proposed to solve the blind deconvolution problem as recovery of a rank-1 matrix from its linear measurements.
By the lifting procedure, bilinear measurements of $(u,v)$ are equivalently rewritten as linear measurements of the matrix $X = u v^\transpose$, i.e., there is a linear operator $\A: \cz^{n \times n} \to \cz^m$ such that
\begin{equation}
\label{eq:defcalA}
b = \A(X) + z \pl.
\end{equation}
Then, each element of the measurement vector $b$ corresponds to a matrix inner product.
Indeed, there exist matrices $M_1,M_2,\ldots,M_m \in \cz^{n \times n}$ that describe the action of $\A$ on $X$ by
\begin{equation}
\label{eq:defcalAbyMell1}
\A(X) = [\langle M_1, X \rangle, \ldots, \langle M_m, X \rangle]^\transpose.
\end{equation}
Since the circular convolution corresponds to the element-wise product in the Fourier domain, the matrices are explicitly expressed as
\begin{equation}
\label{eq:defcalAbyMell2}
M_\ell = \sqrt{n} \Phi^* F^* \diag(f_\ell) \overline{F} \overline{\Psi}, \quad \ell = 1,\ldots,m,
\end{equation}
where $f_\ell$ denotes the $\ell$th column of the unitary DFT matrix $F \in cz^{n \times n}$.
The blind deconvolution problem then becomes a linear inverse problem
with a matrix-valued unknown variable $X$ that is additionally constrained to the set of rank-1 matrices.

In the lifted formulation, a reconstruction $\widehat{X}$ of the unknown matrix $X$ is considered successful
if it satisfies the following stability criterion:
\begin{equation}
\label{eq:success}
\frac{\norm{\widehat{X} - X}_{\mathrm{F}}}{\norm{X}_{\mathrm{F}}}
\leq C \left( \frac{\norm{z}_2}{\norm{\A(X)}_2} \right)
\end{equation}
for an absolute constant $C$.
This is a natural criterion, requiring the relative reconstruction error to be at most proportional to the signal to noise ratio in the measurements. Indeed, up to the particular value of $C$, this is the best one could hope for in the linear inverse problem (\ref{eq:mdl_meas})
even if the system of linear equation were fully determined.
Furthermore, this definition of success is free of the inherent scale ambiguity
($x \conv y =(\alpha^{-1} x) \conv (\alpha y), ~ \forall \alpha \neq 0$) in the original bilinear formulation.
The shift ambiguity coming from the circular convolution is removed when
the dictionaries $\Phi$ and $\Psi$ are not shift-invariant,
i.e., not all circular shifts of the atoms in $\Phi$ and $\Psi$ are also included in $\Phi$ and $\Psi$, respectively.
Once $\widehat{X}$ is recovered, $u$ (resp. $v$) is identified up to a scale factor as the left (resp. right) factor of the rank-1 matrix $\widehat{X}$.

\subsection{Priors on signals}
\label{subsec:priors}

The subsampled blind deconvolution problem does not admit a unique solution and cannot be solved
without placing some restrictions on the unknown signals.
In this paper, we use priors based on sparsity models to solve the sub-sampled blind deconvolution problem.

First, we assume that signals $x$ and $y$ are sparse over $\Phi$ and $\Psi$, respectively.
In other words, the coefficient vectors $u$ and $v$ are sparse at sparsity levels $s_1$ and $s_2$, respectively,
which we denote by $u \in \Gamma_{s_1}$ and $v \in \Gamma_{s_2}$, where $\Gamma_s$ is the set of $s$-sparse vectors in $\cz^n$.
Geometrically, $u$ (resp. $v$) belongs to the union of all subspaces spanned by $s_1$ (resp. $s_2$) standard basis vectors.
Equivalently, the signal $x$ belongs to a union of subspaces,
each spanned by $s_1$ columns of $\Phi$ (with an analogous statement for $y$).
From this perspective, the subspace model  considered by previous authors \cite{ahmed2014blind} corresponds to the special case
where the particular subspace in the union to which $u$ (resp. $v$) belongs is known a priori.

A union of subspaces model has proved to be an effective prior for various ill-posed linear inverse problems,
including, most notably, reconstruction in compressed sensing.
However, blind deconvolution is a more challenging ill-posed bilinear inverse problem and
it has been shown \cite{choudhary2014sparse} that sparsity alone does not provide a prior strong enough for stable recovery in blind deconvolution.
Therefore, we augment the sparsity prior by an additional prior called ``spectral flatness'' \cite{ahmed2014blind}.

For a signal $x \in \cz^n$, its spectral flatness level is defined by
\[
\textsf{sf}(x) := \frac{n \norm{F x}_\infty^2}{\norm{F x}_2^2}.
\]
Clearly, $\texttt{sf}(x) \geq 1$ for any $x \in \cz^n$, with equality achieved by a signal with a perfectly flat spectrum.
The set $C_\mu$ of signals at spectral flatness level $\mu$ is then
\begin{equation}
\label{eq:defCmu}
\C_{\mu} := \{x \in \cz^n:\pl \textsf{sf}(x) \leq \mu \}.
\end{equation}
Combining these constraints with the sparsity constraints over dictionaries $\Phi$ and $\Psi$, respectively,
we assume that $x \in \Phi \Gamma_{s_1} \cap \C_{\mu_1}$ and $y \in \Psi \Gamma_{s_2} \cap \C_{\mu_2}$.
When $\Phi$ and $\Psi$ are invertible\footnote{For simplicity, we restrict our analysis to the case where $\Phi$ and $\Psi$ are invertible matrices. However, it is straightforward to extend the analysis to the case with overcomplete dictionaries by replacing the inverse by the preimage operator.}, this is equivalent to assuming that $u \in \Gamma_{s_1} \cap \Phi^{-1} \C_{\mu_1}$ and $v \in \Gamma_{s_2} \cap \Psi^{-1} \C_{\mu_2}$.
In fact, a ``flat-spectrum'' property with $\textsf{sf}(y) = O(\log n)$ was crucial
in deriving a near optimal performance guarantee under the subspace model \cite{ahmed2014blind}.

The non-convex cone $\C_{\mu}$ does not share some of the useful properties satisfied by a union of subspaces $\Gamma_s$.
For example, whereas $\Gamma_s$ satisfies $\Gamma_s + \Gamma_s \subset \Gamma_{2s}$,
$\C_{\mu}$ does not satisfy the analogous property $\C_{\mu} + \C_{\mu} \subset \C_{2\mu}$.
Technically, as we will see, the lack of this property requires new RIP-like properties to derive our guarantees.
We derive such RIP-like properties in a companion paper \cite{LeeJunge2015}.

Note that the equivalence between the bilinear formulation and the lifted rank-constrained linear formulation remains valid
in the presence of the additional constraints corresponding to the signal priors.
In the lifted formulation, $X$ is factorized as $X = u v^\transpose$ subject to the constraints
that $u \in \Gamma_{s_1} \cap \Phi^{-1} C_{\mu_1}$ and $v \in \Gamma_{s_2} \cap \Psi^{-1} \C_{\mu_2}$.
Although factorizing a rank-1 matrix as the outer product of two vectors involves a scale ambiguity, this has no effect on the constraints: 
since $\Gamma_{s_1} \cap \Phi^{-1} C_{\mu_1}$ and $\Gamma_{s_2} \cap \Psi^{-1} \C_{\mu_2}$ are closed under scalar multiplication,
the constraints due to signal priors in the lifted formulation coincide with those assumed for the bilinear formulation.

Our performance guarantees in this paper apply to signals in the intersection $\Gamma_s \cap \Phi^{-1} \C_{\mu}$.
While the sparsity prior is well accepted as a signal model in applications including blind deconvolution,
the spectral flatness prior is a relatively new concept, introduced only recently \cite{ahmed2014blind}.
Therefore, it is of interest to understand the relation between the two priors, or equivalently,
the constraint sets $\Gamma_s$ and $\Phi^{-1} \C_{\mu}$.
In the remainder of this section,
we discuss how the two constraint sets, respectively corresponding to the sparsity and spectral flatness priors, are related.
In particular, we analyze this relationship in the setup
where $\Phi$ is a random matrix whose entries are i.i.d. with the circular complex normal distribution $CN(0,1/n)$.
(Later, we will derive performance guarantees with optimal scaling on parameters in this setup.)

We consider the following three questions:
\begin{description}
  \item[(i)] When is $\Gamma_s \subset \Phi^{-1} \C_\mu$ -- that is, when does the sparsity condition subsume the spectral flatness condition?
  \item[(ii)] When is $\Phi^{-1} \C_\mu \cap \Gamma_s$ a proper subset of $\Gamma_s$ -- that is, when is the spectral flatness constraint active?
  \item[(iii)] For the choice $\mu = O(\log n)$ (independent of $s$), which is favorable for our performance guarantee, does the intersection  $\Phi^{-1} \C_\mu \cap \Gamma_s$ contain signals of sparsity close to $s$ -- that is, is the combined prior not too restrictive?
\end{description}

The answer to the first question is provided by the following proposition, proved in Appendix~\ref{appendix:prop2.1}.

\begin{prop}
\label{Proposition 2.1}
Let $\Phi$ be a random matrix whose entries are i.i.d. following $CN(0,1/n)$.
Let $\eta \in (0,1)$. There exists an absolute constant $C$ for which the following holds. 
Suppose
\begin{equation}
\label{A1}
s < \eta^2 C n / \log(n/s).
\end{equation}
Then, with high probability, the worst case spectral flatness level for $u \in \Gamma_s$ is upper-bounded as
\begin{equation}
\label{eq:prop2.1}
\sup_{u \in \Gamma_{s}} \textsf{\upshape sf}(\Phi u) \leq \frac{c s \log n}{1-\eta}.
\end{equation}
\end{prop}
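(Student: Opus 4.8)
The plan is to exploit the scale-invariance of $\textsf{sf}$ and to treat the numerator and denominator of
\[
\textsf{sf}(\Phi u) = \frac{n\norm{F\Phi u}_\infty^2}{\norm{\Phi u}_2^2}
\]
separately, using $\norm{F\Phi u}_2 = \norm{\Phi u}_2$ by unitarity of $F$. Since $\textsf{sf}$ is invariant under scaling of $u$, it suffices to take the supremum over $s$-sparse unit vectors. I would then establish, with high probability and uniformly over this set, an upper bound $n\norm{F\Phi u}_\infty^2 \leq cs\log n$ on the numerator and a lower bound $\norm{\Phi u}_2^2 \geq 1-\eta$ on the denominator; dividing gives (\ref{eq:prop2.1}). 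The union over support sets $T\subset[n]$ with $|T|=s$ is what couples the estimates to the hypothesis (\ref{A1}).

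For the denominator, fix a support $T$ and write $\Phi_T$ for the $n\times s$ submatrix of $\Phi$, whose entries remain i.i.d. $CN(0,1/n)$. For unit $u_T$ one has $\norm{\Phi u}_2 = \norm{\Phi_T u_T}_2 \geq \sigma_{\min}(\Phi_T)$, so it suffices to lower-bound the least singular value. Standard Gaussian concentration yields $\sigma_{\min}(\Phi_T) \geq 1-\sqrt{s/n}-t$ with failure probability $e^{-cnt^2}$, and a union bound over the $\binom{n}{s}\leq e^{s\log(en/s)}$ supports requires $nt^2 \gtrsim s\log(n/s)$. Hypothesis (\ref{A1}), namely $s\log(n/s) < \eta^2 C n$, is exactly what lets both $\sqrt{s/n}$ and the slack $t$ be taken of order $\eta$, so that $\sigma_{\min}(\Phi_T)^2 \geq 1-\eta$ simultaneously for all supports with high probability. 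This is where the constant $C$ in (\ref{A1}) and the factor $1/(1-\eta)$ in the conclusion originate.

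For the numerator I would decouple the supremum over $u$ from the maximum over the Fourier index. Writing $a_\ell$ for the $\ell$-th row of $F\Phi$, one checks that $a_\ell$ has i.i.d. $CN(0,1/n)$ entries, since each row of $F$ has unit norm and distinct columns of $\Phi$ are independent. As $(F\Phi u)_\ell = a_\ell^\transpose u$, Cauchy--Schwarz gives $\sup_{\norm{u_T}_2=1}|a_\ell^\transpose u|^2 = \norm{(a_\ell)_T}_2^2$, whence
\[
\sup_{u\in\Gamma_s,\,\norm{u}_2=1} n\norm{F\Phi u}_\infty^2 \leq n\max_{\ell}\max_{|T|=s}\sum_{j\in T}|(a_\ell)_j|^2.
\]
For each fixed $\ell$ the inner maximum is the sum of the $s$ largest among $n$ i.i.d. magnitudes $|(a_\ell)_j|^2$, a scaled order statistic of unit exponentials. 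A Gamma (chi-square) tail bound, combined with a union bound over the $\binom{n}{s}$ supports and the $n$ indices $\ell$, shows this quantity is at most $cs\log n / n$ with high probability; multiplying by $n$ yields the claimed numerator bound. This step needs no constraint beyond $s\leq n$, because the Gamma tail at level $\sim s\log n$ already defeats $\binom{n}{s}$.

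Intersecting the two high-probability events gives (\ref{eq:prop2.1}). I expect the main subtlety to lie in the numerator: one must check that the crude interchange of $\sup_u$ and $\max_\ell$ does not overstate the rate, and indeed it does not, since taking $u_T \propto \overline{(a_\ell)_T}$ for a single index $\ell$ already produces $s$-sparse signals whose spectral flatness is of order $s\log n$. Thus the $s\log n$ scaling is intrinsic to the worst case over the union of subspaces rather than an artifact of the bound. Care is also needed to make the two union bounds compatible with (\ref{A1}); since the numerator estimate is unconditional in $s/n$, the binding constraint comes entirely from the least-singular-value estimate in the denominator.
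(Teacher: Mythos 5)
Your proposal is correct and takes essentially the same route as the paper's proof: reduce by scale-invariance to $s$-sparse unit vectors, lower-bound the denominator uniformly by $1-\eta$ via a Gaussian RIP/least-singular-value concentration argument that is exactly where hypothesis (\ref{A1}) enters, and upper-bound the numerator uniformly by $c\,s\log n$ via Gaussian tail bounds plus a union bound, with no constraint on $s$. The only cosmetic difference is in the numerator step: the paper uses the containment $B_2^n \cap \Gamma_s \subset \sqrt{s}\,B_1^n$ to reduce to the maximum entry of $\sqrt{n}\,F\Phi$ (a union bound over $n^2$ chi-squared variables), whereas you use per-support Cauchy--Schwarz and a Gamma tail bound over $\binom{n}{s}\cdot n$ events; both yield the same $c\,s\log n$ bound.
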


By Proposition~\ref{Proposition 2.1}, for $u \in \Gamma_s$, subject to (\ref{A1}),
we have with high probability that $\textsf{sf}(\Phi u) = O(\norm{u}_0 \log n)$.
It follows that there exists an absolute constant $c$ such that subject to (\ref{A1}),
if $\mu > c s \log n$, then with high probability, $\Gamma_s \subset \Phi^{-1} \C_\mu$,
and the spectral flatness constraint is not active.

The answer to Question (ii) is provided by the following proposition, proved in Appendix~\ref{appendix:prop2.2}.
\begin{prop}
\label{Proposition 2.2}
Let $\Phi$ be a random matrix whose entries are i.i.d. following $CN(0,1/n)$.
Let $\eta \in (0,1)$. There exists an absolute constant $C$ for which the following holds. 
Suppose (\ref{A1}) holds (for given $\eta$ and $C$).
Then, with nonzero probability (approximately 0.5), there exists an $s$-sparse $u$ such that
\begin{equation}
\label{eq:prop2.2}
\textsf{\upshape sf}(\Phi u) \geq \frac{s[1-2/(9s)]^3}{1+\eta}.
\end{equation}
\end{prop}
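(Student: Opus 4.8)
The plan is to prove the lower bound constructively: since the proposition only asserts the \emph{existence} of one $s$-sparse $u$, it suffices to exhibit a single support $S$, a single Fourier frequency $\ell$, and an explicit $u$ supported on $S$ whose image $\Phi u$ is sharply peaked at that frequency. Fix $S=[s]$, let $\Phi_S$ denote the $n\times s$ submatrix of columns of $\Phi$ indexed by $S$ (with $\phi_j$ its $j$th column), and let $f_\ell$ be the unit-norm vector for which the $\ell$th DFT coefficient of any $w\in\cz^n$ equals $\langle f_\ell,w\rangle$ (a row of the unitary $F$, so $\norm{f_\ell}_2=1$). I would use the \emph{matched-filter} choice $u_S=\Phi_S^* f_\ell$ and $u_j=0$ for $j\notin S$. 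This is the natural way to force energy into frequency $\ell$, and it renders the numerator of $\textsf{sf}(\Phi u)$ exactly computable: a direct computation gives $\langle f_\ell,\Phi u\rangle=f_\ell^*\Phi_S\Phi_S^* f_\ell=\norm{\Phi_S^* f_\ell}_2^2$, so the numerator equals $n\,|\langle f_\ell,\Phi u\rangle|^2=n\norm{\Phi_S^* f_\ell}_2^4$. For the denominator, $\norm{\Phi u}_2^2=\norm{\Phi_S\Phi_S^* f_\ell}_2^2\le \lambda_{\max}(\Phi_S^*\Phi_S)\,\norm{\Phi_S^* f_\ell}_2^2$, and hence $\textsf{sf}(\Phi u)\ge n\norm{\Phi_S^* f_\ell}_2^2/\lambda_{\max}(\Phi_S^*\Phi_S)$.

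The next step is to control the denominator through the top eigenvalue $\lambda_{\max}(\Phi_S^*\Phi_S)$. Under hypothesis (\ref{A1}) the Gaussian submatrix $\Phi_S$ is well conditioned: standard operator-norm concentration (the same RIP-type estimate that drives the companion results) gives $\lambda_{\max}(\Phi_S^*\Phi_S)\le 1+\eta$ with high probability, since (\ref{A1}) lies within the sample-complexity regime $s\log(n/s)\lesssim \eta^2 n$ in which a fixed Gaussian $s$-column submatrix has restricted-isometry constant at most $\eta$. On this event $\textsf{sf}(\Phi u)\ge n\norm{\Phi_S^* f_\ell}_2^2/(1+\eta)$, which already exhibits the $(1+\eta)$ denominator appearing in (\ref{eq:prop2.2}). (Choosing the optimal $u$ for this support, $u_S=\Phi_S^{+}f_\ell$, would instead give the exact quantity $n\norm{\Pi f_\ell}_2^2$ with $\Pi$ the projection onto the range of $\Phi_S$; the $(1+\eta)$ form of the statement is the fingerprint of the matched-filter route.)

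It remains to bound $n\norm{\Phi_S^* f_\ell}_2^2$ from below. Writing $\norm{\Phi_S^* f_\ell}_2^2=\sum_{j\in S}|\langle\phi_j,f_\ell\rangle|^2$, each inner product $\langle\phi_j,f_\ell\rangle$ is a complex Gaussian of variance $1/n$ and these are independent across $j$; hence $n\norm{\Phi_S^* f_\ell}_2^2$ is a sum of $s$ i.i.d. unit-mean exponentials, i.e. a $\mathrm{Gamma}(s,1)$ (equivalently $\tfrac12\chi^2_{2s}$) variable with mean $s$. The Wilson--Hilferty approximation for its median yields the conservative bound $\ge s[1-2/(9s)]^3$, so $n\norm{\Phi_S^* f_\ell}_2^2\ge s[1-2/(9s)]^3$ with probability at least $1/2$. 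Intersecting this median event with the high-probability well-conditioning event, and combining them by a union bound, gives $\textsf{sf}(\Phi u)\ge s[1-2/(9s)]^3/(1+\eta)$ with probability approximately $0.5$, as claimed.

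I expect the main obstacle to be the denominator bound rather than the numerator. The matched filter makes the Fourier coefficient at $\ell$ large, but spectral flatness is a ratio, and a poorly conditioned $\Phi_S$ could inflate $\norm{\Phi u}_2$ and wash out the peak; pinning $\lambda_{\max}(\Phi_S^*\Phi_S)\le 1+\eta$ is therefore the crux, and this is precisely what assumption (\ref{A1}) purchases. A secondary, bookkeeping-level subtlety is that the numerator and the conditioning event both depend on the same random $\Phi_S$, so the ``$\approx 0.5$'' probability must be obtained via a union bound over the median event and the (high-probability) well-conditioning event, rather than by treating them as independent.
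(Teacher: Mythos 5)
Your proposal is correct and follows essentially the same route as the paper: the paper's proof also takes the matched filter for a single row of $F\Phi$ restricted to a fixed support (making the peak Fourier coefficient a sum of $s$ i.i.d. squared standard complex Gaussians, lower-bounded by its median $s[1-2/(9s)]^3$ with probability about $0.5$) and bounds the denominator by $1+\eta$ via the Gaussian RIP estimate implied by (\ref{A1}). Your only departures are cosmetic --- you use the operator norm of the single submatrix $\Phi_S$ rather than the uniform sparse-RIP bound, and you state the Gamma$(s,1)$ versus $\chi^2_s$ normalization and the final union bound more carefully than the paper does.
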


It follows that if $\mu$ is sublinear in $s$, then $\Gamma_s$ is not necessarily contained in $\Phi^{-1} \C_{\mu}$, 
and the spectral flatness constraint may be active.

As an aside, Proposition~\ref{Proposition 2.2} demonstrates that
the (probabilistic) upper bound on Proposition~\ref{Proposition 2.1} on the worst case spectral flatness
for $u \in \Gamma_s$ is nearly tight (up to a logarithmic factor).

Our Question (iii) above is motivated by the following considerations.
Our performance guarantees in Section~\ref{sec:main} will hold with sample complexity of $m = O(s \mu \log^\beta n)$.
Near optimal scaling of parameters is achieved when $\mu = O(\log n)$.
In this case, the performance guarantees do not apply to all $s$-sparse signals
but rather to those $s$-sparse signals whose spectral flatness satisfies $\mu = O(\log n)$.

One may wonder whether posing both priors simultaneously might be too restrictive, in particular at spectral flatness level $\mu = O(\log n)$.
More specifically, one may suspect that only very sparse signals are in the intersection $\Gamma_s \cap \Phi^{-1} \C_{\mu}$.
We show that in fact, this is not the case and there still exist many $s$-sparse signals at spectral flatness level $\mu = O(\log n)$.
The following proposition, proved in Appendix~\ref{appendix:prop2.3}, answers Question (iii) in the positive.

\begin{prop}
\label{Proposition 2.3}
Let $\Phi$ be a random matrix whose entries are i.i.d. following $CN(0,1/n)$.
Let $u$ be fixed arbitrarily in $\Gamma_s$.
Then, with high probability,
\[
\textsf{\upshape sf}(\Phi u) \leq c \log n
\]
for an absolute constant $c$.
\end{prop}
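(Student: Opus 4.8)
The plan is to reduce the claim to a statement about the order statistics of $n$ i.i.d.\ exponential random variables. Writing $x = \Phi u$, the target quantity $\textsf{sf}(x) = n\norm{Fx}_\infty^2/\norm{Fx}_2^2$ is scale-invariant, so I may assume $\norm{u}_2 = 1$. Set $w := Fx = F\Phi u$ and $G := F\Phi$. Since $F$ is unitary and each column of $\Phi$ is a circularly-symmetric complex Gaussian vector $CN(0,\tfrac1n I)$, each column $F\phi_j$ is again $CN(0,\tfrac1n I)$ and the columns remain independent; hence $G$ has i.i.d.\ $CN(0,1/n)$ entries. The crucial point, which distinguishes this from Proposition~\ref{Proposition 2.1}, is that here $u$ is \emph{fixed}: no union bound over the (combinatorially many) $s$-sparse vectors is required, so no dependence on $s$ will appear and we land at $O(\log n)$ rather than $O(s\log n)$.

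First I would pin down the law of the coordinates of $w$. Distinct coordinates $w_k = \sum_j G_{kj} u_j$ are built from disjoint rows of $G$, hence independent, and each is $CN(0,\norm{u}_2^2/n) = CN(0,1/n)$. Consequently $E_k := n\,|w_k|^2$ are i.i.d.\ $\mathrm{Exp}(1)$, and
\[
\textsf{sf}(x) = \frac{n \max_k |w_k|^2}{\sum_k |w_k|^2} = \frac{\max_{k} E_k}{\tfrac1n \sum_{k} E_k}.
\]
It then remains to bound the numerator from above and the denominator from below on a common high-probability event.

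For the numerator I would use a plain union bound: $\pee(\max_k E_k > t) \le n\,e^{-t}$, so taking $t = c'\log n$ with $c' > 1$ gives $\max_k E_k \le c'\log n$ with probability at least $1 - n^{1-c'}$. For the denominator, $\sum_k E_k$ is a sum of $n$ i.i.d.\ $\mathrm{Exp}(1)$ variables (a $\mathrm{Gamma}(n,1)$ with mean $n$), and a standard Chernoff/Bernstein bound for sub-exponential sums yields $\pee\bigl(\tfrac1n\sum_k E_k < \tfrac12\bigr) \le e^{-cn}$. Intersecting the two good events and dividing gives $\textsf{sf}(x) \le 2c'\log n$ with high probability, which is the claim with $c = 2c'$.

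The computations are all routine, so there is no single hard step; the only points needing a little care are the distributional reduction (verifying unitary invariance of the proper complex Gaussian, so that $G = F\Phi$ is again i.i.d.\ Gaussian, and that the coordinates of $w$ are independent) and the bookkeeping that the maximum and the sum, although formed from the same variables, need only be controlled simultaneously on one intersected event rather than treated as independent.
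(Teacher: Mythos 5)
Your proof is correct and follows essentially the same route as the paper's: reduce to a fixed unit-norm $u$, observe that $F\Phi u$ has i.i.d.\ Gaussian entries, bound the numerator $n\norm{F\Phi u}_\infty^2$ by $O(\log n)$ via a union bound over the coordinates, and lower-bound the denominator $\norm{F\Phi u}_2^2$ via concentration of the squared norm, intersecting the two events. The only (cosmetic) difference is that you work with $\mathrm{Exp}(1)$ variables, which is in fact the more faithful treatment of the squared moduli of $CN(0,1)$ entries, whereas the paper phrases the same bounds in chi-square language.
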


\section{Blind Deconvolution by Alternating Minimization}
\label{sec:algorithm}

\subsection{Sparse power factorization with projection onto spectrally flat signals}
\label{subsec:altmin}

We propose an alternating-minimization algorithm for subsampled blind deconvolution
that iteratively updates estimates of $u$ and $v$ by alternating between two subproblems,
each defined by fixing an estimate of the other signal.

Recall that the linear operator $\A$ is represented by $m$ measurement functionals defined by the matrices $(M_\ell)_{\ell=1}^m$.
When $v$ is fixed, the action of $\A$ on $u v^\transpose$ reduces to a linear operation on $u$.
The corresponding linear operator has a matrix representation, denoted by $\A_\mathrm{R}(v) \in \cz^{n \times n}$,
which is a linear function of $v$ and expressed as
\begin{equation}
\label{eq:defAR}
\A_\mathrm{R}(v) :=
[\overline{M_1} v, \overline{M_2} v, \ldots, \overline{M_m} v]^\transpose.
\end{equation}
Similarly, when $u$ is fixed, the action of $\A$ on $u v^\transpose$ reduces to a linear operation on $v$.
The corresponding linear operator is represented by a matrix-valued linear function of $u$ given by
\begin{equation}
\label{eq:defAL}
\A_\mathrm{L}(u) :=
[M_1^* u, M_2^* u, \ldots, M_m^* u]^\transpose.
\end{equation}
By the definitions of $\A_\mathrm{R}(v)$ and $\A_\mathrm{L}(u)$, it follows that
\[
\A(u v^\transpose) = [\A_\mathrm{R}(v)] u = [\A_\mathrm{L}(u)] v.
\]
The blind deconvolution algorithm is then described as follows.

Given an estimate $\hat{v}$ of $v$, the estimation of $u$ is cast as a constrained linear inverse problem,
where $\A_\mathrm{R}(\hat{v})$ defines the forward system and
the constraint set is given as the intersection of $\Gamma_{s_1}$ and $\Phi^{-1} \C_{\mu_1}$.
We propose to solve the constrained inverse problem in the following two steps:
first, compute a constrained least squares solution $\tilde{u}$ only with the sparsity constraint;
then, project $\tilde{u}$ to the constraint set $\Gamma_{s_1} \cap \Phi^{-1} \C_{\mu_1}$,
to produce $\hat{u}$. The corresponding projection operator is denoted by $\Phi^{-1} P_{\C_{\mu_1} \cap \Phi \Gamma_{s_1}} (\Phi \tilde{u}_t)$.

The first step, of computing $\tilde{u}$, corresponds to a standard sparse recovery problem,
which can be readily solved by existing algorithms with an RIP-based performance guarantee\footnote{While the sparse recovery algorithms used are standard, their analysis in this application is not. The analysis of sparse recover in Section~\ref{subsec:altminbd_riplike} is fairly elaborate, using RIP-like properties. This is needed, because the error term in the measurement, which is due to the estimation error in the previous step, has  characteristics different to adversary or random noise.}
(e.g., hard-thresholding pursuit (HTP) \cite{Fou2011htp}, compressive sensing matching pursuit (CoSaMP) \cite{needell2009cosamp}, and subspace pursuit \cite{dai2009subspace}).
To present a concrete performance guarantee, we employ for this step the HTP algorithm, summarized in Algorithm~\ref{alg:htp}.\footnote{We stop Algorithm~\ref{alg:htp} if the relative change between consecutive iterates ($\norm{\hat{x}_{t+1} - \hat{x}_t}_2/\norm{\hat{x}_t}_2$) is less than a certain threshold. Similar stopping conditions are used for Algorithms~\ref{alg:altminprojbd} and \ref{alg:approx_proj}.}
Other choices will provide the same performance guarantee (for this step and for the resulting entire blind deconvolution algorithm).
The previous estimate $\hat{v}$ is normalized before the application of the HTP algorithm.
In this way, HTP with the step size set to 1 provides a performance guarantee.

The estimation of $v$ for a given estimate $\hat{u}$ of $u$
is cast as a very similar constrained linear inverse problem.
The forward system is defined by $\A_\mathrm{L}(\hat{u})$ and the constraint set is given as $\Gamma_{s_2} \cap \Phi^{-1} \C_{\mu_2}$.
The same strategy is employed to solve this subproblem.
The entire blind deconvolution algorithm that alternates between the two subproblems is summarized below as Algorithm~\ref{alg:altminprojbd}.

Alternating minimization is a popular heuristic for bilinear inverse problems,
but it may get stuck at local minima. Therefore, it is crucial to find a good initialization.
We propose a simple and guaranteed initialization scheme summarized in Algorithm~\ref{alg:thres_proj}
(which is a modified form of the thresholding initialization of SPF \cite{LeeWB2013spf}).
Algorithm~\ref{alg:thres_proj} aims to find a good estimate $v_0$ of $v$
so that the angle between $v_0$ and $v$ is small and $v_0$ belongs to the constraint set $\Gamma_{s_2} \cap \Psi^{-1} \C_{\mu_2}$.
Algorithm~\ref{alg:thres_proj} first computes estimates $\widehat{J}_1$ and $\widehat{J}_2$ respectively for the supports of $u$ and $v$.
Then, it computes the initialization $v_0$ at sparsity level $s_0$ through a (truncated) singular value decomposition of $\Pi_{\widehat{J}_1} \A^*(b) \Pi_{\widehat{J}_2}$. The size of $\widehat{J}_1$ is set to the known sparsity level $s_1$ of $u$.
The size $s_0$ of $\widehat{J}_2$ is set so that it satisfies the two conditions:
$s_0 \leq s_2$, where $s_2$ is the known sparsity level of $v$, and
\begin{equation}
\label{eq:maxs0}
\frac{\norm{F \Psi_J}_{1 \to \infty}}{\sigma_{\min}(F \Psi_J)} \sqrt{s_0} \leq \sqrt{\frac{\mu_2}{n}}
\end{equation}
for $J = \widehat{J}_2$.

If $\mu_2$ is large enough (an explicit condition is discussed in Section~\ref{subsec:altminbd_riplike}), 
Algorithm~\ref{alg:thres_proj} will terminate to satisfy $s_0 \leq s_2$ and (\ref{eq:maxs0}) for $J = \hat{J}_2$. 
At that point, the resulting $s_0$-sparse $v_0$, which is supported on $\widehat{J}_2$, belongs to $\Gamma_{s_2} \cap \Psi^{-1} \C_{\mu_2}$.
This is readily verified as follows:
Note that $\widehat{J}_2$ denotes the support of $v_0$.
Since
\begin{align*}
\norm{F \Psi v_0}_\infty
\leq \norm{F \Psi_{\widehat{J}_2}}_{1 \to \infty} \norm{v_0}_1
\leq \norm{F \Psi_{\widehat{J}_2}}_{1 \to \infty} \sqrt{s_0} \norm{v_0}_2,
\end{align*}
and
\begin{align*}
\norm{F \Psi v_0}_2
{} & \geq \sigma_{\min}(F \Psi_{\widehat{J}_2}) \norm{v_0}_2,
\end{align*}
it follows that (\ref{eq:maxs0}) for $J = \widehat{J}_2$ implies $v_0 \in \Psi^{-1} \C_{\mu_2}$.
The other containment $v_0 \in \Gamma_{s_2}$ trivially holds since $s_0 \leq s_2$.

Furthermore, $v_0$ is not only feasible but also a ``good'' initialization (close to $v$ in the appropriate sense)
because the rank-1 approximation of $\Pi_{\widehat{J}_1} M \Pi_{\widehat{J}_2}$ is close to $u v^\transpose$
when the RIP-like properties in Section~\ref{sec:main} hold.

\begin{algorithm}
\LinesNumbered
\SetAlgoNoLine
\DontPrintSemicolon
\caption{$v_0 = \texttt{thres\_init}(M,\Psi,s_1,s_2,\mu_2)$}
\label{alg:thres_proj}
$s_0 \leftarrow 1$\;
$\widehat{J}_1 \leftarrow \text{indices of the $s_1$ rows of $M$ with the largest $\ell_\infty$ norm}$\;
$\widehat{J}_2 \leftarrow \text{index of the column of $\Pi_{\widehat{J}_1} M$ with the largest $\ell_2$ norm}$\;
\While{$s_0 \leq s_2$ \& $\norm{F \Psi_{\widehat{J}_2}}_{1 \to \infty} \sqrt{s_0} \leq \sqrt{\mu_2/n} \sigma_{\min}(F \Psi_{\widehat{J}_2})$}{
$s_0 \leftarrow s_0 + 1$\;
\For{k=1,\ldots,n}{
$\zeta_k \leftarrow \text{$\ell_2$ norm of the $s_0$-sparse approx. of the $k$th row of $M$}$\;
}
$\widehat{J}_1 \leftarrow \text{indices of the $s_1$ entries of $\zeta$ with the largest magnitude}$\;
$\widehat{J}_2 \leftarrow \text{indices of the $s_0$ columns of $\Pi_{\widehat{J}_1} M$ with the largest $\ell_2$ norm}$\;
}
$s_0 \leftarrow \max(1,s_0-1)$\;
\For{k=1,\ldots,n}{
$\zeta_k \leftarrow \text{$\ell_2$ norm of the $s_0$-sparse approx. of the $k$th row of $M$}$\;
}
$\widehat{J}_1 \leftarrow \text{indices of the $s_1$ entries of $\zeta$ with the largest magnitude}$\;
$\widehat{J}_2 \leftarrow \text{indices of the $s_0$ columns of $\Pi_{\widehat{J}_1} M$ with the largest $\ell_2$ norm}$\;
$v_0 \leftarrow \text{the first right singular vector of $\Pi_{\widehat{J}_1} M \Pi_{\widehat{J}_2}$}$\;
\Return $v_0$\;
\end{algorithm}

\begin{algorithm}
\LinesNumbered
\SetAlgoNoLine
\DontPrintSemicolon
\caption{$\hat{X} = \texttt{SPF\_BD}(\A,b,s_1,s_2,\mu_1,\mu_2)$}
\label{alg:altminprojbd}
$v_0 \leftarrow \texttt{thres\_init}(\A^* b,\Psi,s_1,s_2,\mu_2)$\;
$t \leftarrow 0$\;
\While{stop condition not satisfied}{
    $t \leftarrow t+1$\;
    $v_{t-1} \leftarrow v_{t-1}/\norm{v_{t-1}}_2$\;
    $\tilde{u}_t \leftarrow \texttt{HTP}(\A_\mathrm{R}(v_{t-1}),b,s_1)$\tcp*{$\A_\mathrm{R}$ defined in (\ref{eq:defAR})}
    $u_t \leftarrow \Phi^{-1} P_{\C_{\mu_1} \cap \Phi \Gamma_{s_1}} (\Phi \tilde{u}_t)$\;
    $u_t \leftarrow u_t/\norm{u_t}_2$\;
    $\tilde{v}_t \leftarrow \texttt{HTP}(\A_\mathrm{L}(u_t),b,s_2)$\tcp*{$\A_\mathrm{L}$ defined in (\ref{eq:defAL})}
    $v_t \leftarrow \Psi^{-1} P_{\C_{\mu_2} \cap \Psi \Gamma_{s_2}} (\Psi \tilde{v}_t)$\;
}
\Return $\hat{X} = u_t v_t^\transpose$\;
\end{algorithm}

\begin{algorithm}
\LinesNumbered
\SetAlgoNoLine
\DontPrintSemicolon
\caption{$\hat{x} = \texttt{HTP}(A,b,s)$}
\label{alg:htp}
$\hat{x}_0 \leftarrow 0$\;
\While{stop condition not satisfied}{
    $t \leftarrow t+1$\;
    $\widehat{J} \leftarrow \mathrm{supp}\left(P_{\Gamma_s}\left[\hat{x}_{t-1} + \alpha A^*(b - A \hat{x}_{t-1})\right]\right)$\;
    $\hat{x}_t \leftarrow \displaystyle \argmin_x \{ \norm{b - A x}_2^2 :\pl \mathrm{supp}(x) \subset \widehat{J} \}$\;
}
\Return $\hat{x} = x_t$\;
\end{algorithm}

\subsection{Practical implementation with an approximate projection}
\label{subsec:approx_proj}

Computing the exact projection onto the constraint set $\Gamma_{s_1} \cap \Phi^{-1} \C_{\mu_1}$ (resp. $\Gamma_{s_2} \cap \Psi^{-1} \C_{\mu_2}$)
is not a trivial task because of the intersection structure.
To provide a practical algorithm, we propose to replace the exact projection steps of Algorithm~\ref{alg:altminprojbd} (line 7 and 10) by
\[
u_t \leftarrow \texttt{approx\_proj}(\tilde{u}_t,\Phi,s_1,\mu_1)
\]
and
\[
v_t \leftarrow \texttt{approx\_proj}(\tilde{v}_t,\Psi,s_2,\mu_2),
\]
where \texttt{approx\_proj} is described in Algorithm~\ref{alg:approx_proj}.

\begin{algorithm}
\LinesNumbered
\SetAlgoNoLine
\DontPrintSemicolon
\caption{$\hat{u} = \texttt{approx\_proj}(\tilde{u},\Phi,s,\mu)$}
\label{alg:approx_proj}
\eIf{$\tilde{u} \in \Gamma_s$ and $\Phi \tilde{u} \in \C_{\mu}$}{
    \Return $\hat{u} = \tilde{u}$\;
}
{
    $x \leftarrow \Phi \tilde{u}$\;
    \While{stop condition not satisfied}{
        $x \leftarrow P_{\C_\mu} x$\;
        $x \leftarrow P_{\Phi \Gamma_s} x$\;
    }
    \Return $\hat{u} = \Phi^{-1} x$\;
}
\end{algorithm}

Algorithm~\ref{alg:approx_proj} first checks whether the input $\tilde{u}$
belongs to the intersection $\Gamma_s \cap \Phi^{-1} \C_{\mu}$.
When incorporated into Algorithm~\ref{alg:altminprojbd},
the input $\tilde{u}_t$ (resp. $\tilde{v}_t$) already belongs to $\Gamma_{s_1}$ (resp. $\Gamma_{s_2}$).
In particular, as shown in Section~\ref{subsec:priors}, when $\Phi$ is i.i.d. Gaussian,
the spectral flatness level of $\Phi \tilde{u}$ for an arbitrary fixed $\tilde{u} \in \Gamma_{s_1}$ is $O(\log n)$ with high probability.
Therefore, in most iterations of Algorithm~\ref{alg:altminprojbd},
the projection step in Algorithm~\ref{alg:approx_proj} finishes trivially after its first iteration.
We verified this empirically through the simulations in Section~\ref{sec:numres}.

If the initialization of Algorithm~\ref{alg:approx_proj} is outside the intersection,
we apply alternating projections between the two sets $\C_{\mu}$ and $\Phi \Gamma_s$.
First, the projection onto $\Phi \Gamma_s$ is a standard sparse recovery problem and one can use one of the known algorithms.
We use the HTP algorithm (Algorithm~\ref{alg:htp}) for this step in the simulation of Section~\ref{sec:numres}.
Second, for the projection on $\C_\mu$, we derive a computationally efficient algorithm, summarized in Algorithm~\ref{alg:proj_cone}.
As stated in the following theorem, proved in Appendix~\ref{appendix:thm:proj2cone},
Algorithm~\ref{alg:proj_cone} is guaranteed to compute the exact projection onto $\C_{\mu}$.

\begin{theorem}
\label{thm:proj2cone}
The output of Algorithm~\ref{alg:proj_cone} for an input $x \in \cz^n$ is the exact projection of $x$ onto the set $\C_{\mu}$.
\end{theorem}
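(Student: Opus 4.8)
The plan is to reduce the projection onto the non-convex cone $\C_{\mu}$ to a \emph{convex} optimization problem that admits a closed-form solution, and then to verify that Algorithm~\ref{alg:proj_cone} computes exactly this solution. The first step is to pass to the Fourier domain. Since $F$ is unitary, $\norm{x - x'}_2 = \norm{Fx - Fx'}_2$, and membership in $\C_{\mu}$ depends on $x$ only through $Fx$; hence the projection factors as $P_{\C_\mu}(x) = F^* P_K(Fx)$, where $K := \{w \in \cz^n : \norm{w}_\infty^2 \leq (\mu/n)\norm{w}_2^2\}$. Writing $\tau := \mu/n$, the set $K$ is a closed cone, characterized by $\norm{w}_\infty \leq \sqrt{\tau}\,\norm{w}_2$, so it suffices to project $w = Fx$ onto $K$.

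The key step, which tames the non-convexity, is to exploit the conic structure of $K$. For a closed cone and $w \neq 0$, writing a candidate as $tu$ with $t \geq 0$ and $\norm{u}_2 = 1$ and minimizing $\norm{w - tu}_2^2$ over $t$ reduces the projection to maximizing $\mathrm{Re}\langle w, u\rangle$ over the unit-norm elements of $K$, after which $P_K(w) = \mathrm{Re}\langle w, u^\star\rangle\, u^\star$. The crucial observation is that on the unit sphere the conic constraint $\norm{u}_\infty \leq \sqrt{\tau}\norm{u}_2$ collapses to the box constraint $\norm{u}_\infty \leq \sqrt{\tau}$; relaxing the sphere to the $\ell_2$ ball is lossless because the objective is linear, leaving the \emph{convex} feasible set $\{u : \norm{u}_2 \leq 1,\ \norm{u}_\infty \leq \sqrt{\tau}\}$. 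Thus the projection onto the non-convex cone is recovered from a single convex program over the intersection of an $\ell_2$ ball and an $\ell_\infty$ box.

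I would then solve this convex program in closed form. Aligning the phase of each $u_k$ with that of $w_k$ preserves feasibility and only increases the objective, so the problem reduces to the real program $\max \sum_k |w_k| \rho_k$ subject to $\norm{\rho}_2 \leq 1$ and $0 \leq \rho_k \leq \sqrt{\tau}$. Its KKT conditions yield the clipped least-squares form $\rho_k = \min(|w_k|/\lambda,\, \sqrt{\tau})$, where $\lambda > 0$ is the unique multiplier enforcing $\norm{\rho}_2 = 1$; existence and uniqueness of $\lambda$ follow because $\lambda \mapsto \sum_k \min(|w_k|/\lambda, \sqrt{\tau})^2$ is continuous and strictly decreasing on the relevant range. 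Sorting the magnitudes $|w_k|$ turns the search for $\lambda$ into a finite search over the number $p$ of clipped coordinates, for which $\lambda$ is given explicitly. Finally I would check that Algorithm~\ref{alg:proj_cone} performs exactly this sort-and-search, forms $\rho$, rescales by $\sum_k |w_k|\rho_k$, restores the phases, and applies $F^*$, thereby returning $P_{\C_\mu}(x)$.

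I expect the main obstacle to be the second step: recognizing and rigorously justifying the reduction of the non-convex conic projection to the convex inner-product maximization, together with a careful treatment of the degenerate cases (the all-zero input, the sign of $\mathrm{Re}\langle w, u^\star\rangle$, and ties in the sorted magnitudes, which can render the projection non-unique). Once this reduction is in place, the closed-form clipped-least-squares solution and the verification that the algorithm implements it are essentially routine, resting only on the monotonicity argument for $\lambda$ and on the phase-alignment reduction.
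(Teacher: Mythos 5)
Your proposal is correct and follows essentially the same route as the paper's proof: pass to the Fourier domain using unitarity of $F$, reduce the non-convex conic projection to maximizing the phase-aligned correlation over unit-norm elements of the cone, convexify the resulting magnitude problem, and verify via KKT conditions that the clipped (water-filling) solution is exactly what Algorithm~\ref{alg:proj_cone} constructs. The only cosmetic difference is the convexification device—the paper substitutes $a_i = |\xi_i|^2$ to obtain a concave objective over the polytope $\{\sum_i a_i = n,\ 0 \leq a_i \leq \mu\}$, whereas you keep the magnitudes linear and relax the sphere to the ball—but both lead to the same KKT system and the same clipped solution (and both treatments gloss over the same degenerate case in which $\zeta$ is supported on fewer than $n/\mu$ coordinates, where the projection is non-unique).
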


\begin{algorithm}
\LinesNumbered
\SetAlgoNoLine
\DontPrintSemicolon
\caption{$P_{\C_{\mu}}(x) = \texttt{proj\_c\_mu}(x,\mu)$}
\label{alg:proj_cone}
$\zeta = [\zeta_1,\ldots,\zeta_n]^\transpose \leftarrow F x$\;
$[|\zeta_{(1)}|,\ldots,|\zeta_{(n)}|] = \texttt{sort}([|\zeta_1|,\ldots,|\zeta_n|],\texttt{'descend'})$\;
Find the smallest $k$ such that $(k-1)\mu+\sum_{i=k}^{n}{\frac{|\zeta_{(i)}|^2}{|\zeta_{(k)}|^2}\mu}\geq n$.\;
Construct $\xi \in \cz^n$ such that $\angle \xi_i = \angle \zeta_i$ for $i=1,\ldots,n$ and the $i$th largest magnitude satisfies
\begin{align*}
|\xi_{(i)}| = \begin{cases}
\sqrt{\mu} &\quad\text{if $i\leq k-1$},\\
\sqrt{\frac{n-(k-1)\mu}{\sum_{j=k}^{n} |\zeta_{(j)}|^2}} \pl |\zeta_{(i)}| &\quad\text{if $i \geq k$}.
\end{cases}
\end{align*}
\Return $F^* \norm{\xi}_2^{-2} \xi \xi^* \zeta$\;
\end{algorithm}

\section{Main Results}
\label{sec:main}

We provide performance guarantees for subsampled blind deconvolution at near optimal scaling of the sample complexity
under the scenario that dictionaries $\Phi$ and $\Psi$ are i.i.d. Gaussian.
More precisely, we assume that $\Phi, \Psi \in \cz^{n \times n}$ are mutually independent random matrices,
whose entries are also independent and identically distributed following a zero-mean and complex normal distribution $CN(0,1/n)$.

Our first main result, stated in the next theorem, demonstrates that stable reconstruction is possible at near optimal sample complexity.

To simplify notation, define a set $\calM_{s_1,s_2;\mu_1,\mu_2}$ parameterized by $s_1$, $s_2$, $\mu_1$, and $\mu_2$ as follows:
\begin{equation}
\label{eq:defcalM}
\calM_{s_1,s_2;\mu_1,\mu_2} :=
\{ u v^\transpose \in \cz^{n \times n} :\pl u \in \Gamma_{s_1} \cap \Phi^{-1} \C_{\mu_1}, \pl v \in \Gamma_{s_2} \cap \Psi^{-1} \C_{\mu_2} \}.
\end{equation}
Then, for an element $u v^\transpose$ of $\calM_{s_1,s_2;\mu_1,\mu_2}$,
the sparsity level of $u$ (resp. $v$) is upper-bounded by $s_1$ (resp. $s_2$).
On the other hand, the spectral flatness level of $\Phi u$ (resp. $\Psi v$) is upper-bounded by $\mu_1$ (resp. $\mu_2$).
When $\mu_1 = \infty$, the constraint on the spectral flatness of $\Phi u$ is not active; hence,
\[
\calM_{s_1,s_2;\infty,\mu_2} =
\{ u v^\transpose \in \cz^{n \times n} :\pl u \in \Gamma_{s_1}, \pl v \in \Gamma_{s_2} \cap \Psi^{-1} \C_{\mu_2} \}.
\]
Similarly, we have
\[
\calM_{s_1,s_2;\mu_1,\infty} =
\{ u v^\transpose \in \cz^{n \times n} :\pl u \in \Gamma_{s_1} \cap \Phi^{-1} \C_{\mu_1}, \pl v \in \Gamma_{s_2} \}.
\]

\begin{theorem}
\label{thm:stability}
There exist absolute numerical constants $C > 0$ and $\beta \in \mathbb{N}$ such that the following holds. 
Let $\A: \cz^{n \times n} \to \cz^m$ be defined by (\ref{eq:defcalAbyMell1}) and (\ref{eq:defcalAbyMell2}),
where $\Phi, \Psi \in \cz^{n \times n}$ are independent random matrices whose entries are i.i.d. following $CN(0,1/n)$.
If $m \geq C (\mu_2 s_1 + \mu_1 s_2) \log^5 n$, then with probability $1 - n^{-\beta}$,
all $X \in \calM_{s_1,s_2;\mu_1,\mu_2}$ can be stably reconstructed from $b = \A(X) + z$ in the sense of (\ref{eq:success}).
\end{theorem}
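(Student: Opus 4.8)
The plan is to follow the sparse power factorization template of \cite{LeeWB2013spf}, but with every norm-preservation estimate supplied by the restricted RIP-like property of the companion paper \cite{LeeJunge2015} in place of the classical Gaussian RIP, which $\A$ fails to satisfy. The argument has three ingredients: (i) a deterministic convergence analysis of Algorithm~\ref{alg:altminprojbd} valid on the event that $\A$ behaves like a near-isometry on the relevant model set; (ii) a proof that the thresholding scheme of Algorithm~\ref{alg:thres_proj} lands inside the basin of attraction of that analysis; and (iii) verification that the stated $m \geq C(\mu_2 s_1 + \mu_1 s_2)\log^5 n$ is exactly what makes the RIP-like event hold with probability $1 - n^{-\beta}$. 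For (iii) I would invoke \cite{LeeJunge2015} directly: the critical fact is that $\A$ preserves the Frobenius norm of every $X = uv^\transpose$ with $u \in \Gamma_{s_1} \cap \Phi^{-1}\C_{\mu_1}$ and $v \in \Gamma_{s_2} \cap \Psi^{-1}\C_{\mu_2}$, together with the one-sided bilinear versions needed below, and this holds once $m$ reaches the claimed order.

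For the initialization I would use the near-isometry to write $\A^*(b) = \A^*\A(uv^\transpose) + \A^*(z)$ and argue that $\A^*\A$ acts approximately as the identity on the model set, so that $\A^*(b) \approx uv^\transpose$ up to a perturbation whose operator norm is controlled by $\norm{z}_2/\norm{\A(X)}_2$. The row- and column-selection steps of Algorithm~\ref{alg:thres_proj} then recover supports $\widehat J_1,\widehat J_2$ capturing the energy-dominant coordinates of $u$ and $v$; here the hypothesis that the signals are heavily peaked (a fixed fraction of the energy in a few components) is what guarantees the correct indices survive thresholding. A Wedin/Davis--Kahan perturbation bound applied to the leading singular vector of $\Pi_{\widehat J_1}\A^*(b)\Pi_{\widehat J_2}$ then places the angle between $v_0$ and $v$ below the threshold required by the contraction analysis, while feasibility of $v_0$ has already been established in the discussion preceding Algorithm~\ref{alg:thres_proj} through the stopping rule (\ref{eq:maxs0}).

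The heart of the proof is the per-iteration contraction. Fixing a normalized estimate $\hat v$, the $u$-update solves a sparse linear inverse problem with forward map $\A_\mathrm{R}(\hat v)$, and I would split the measurement as $b = \A_\mathrm{R}(v)u + z = \A_\mathrm{R}(\hat v)u + (\A_\mathrm{R}(v) - \A_\mathrm{R}(\hat v))u + z$. The RIP-like property gives RIP of $\A_\mathrm{R}(\hat v)$ on $\Gamma_{s_1}$, so the HTP guarantee bounds $\norm{\tilde u_t - u}_2$ by a constant multiple of the effective noise; the cross term $(\A_\mathrm{R}(v) - \A_\mathrm{R}(\hat v))u = \A(u(v-\hat v)^\transpose)$ is the rank-one matrix through which the previous error enters, and it is precisely the restricted near-isometry (unavailable for the unconstrained operator) that lets me bound its $\ell_2$ norm by $\delta\,\sin\angle(\hat v, v)\,\norm{X}_{\mathrm{F}}$ with $\delta<1$. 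Since the true $u$ is feasible, the projection $u_t \leftarrow \Phi^{-1}P_{\C_{\mu_1}\cap\Phi\Gamma_{s_1}}(\Phi\tilde u_t)$ satisfies $\norm{u_t - u}_2 \leq 2\norm{\tilde u_t - u}_2$, which after normalization yields $\sin\angle(u_t, u) \leq \delta\,\sin\angle(\hat v, v) + c\,\norm{z}_2/\norm{\A(X)}_2$; the $v$-update gives the symmetric inequality. Composing the two half-steps produces a contraction of factor $\delta^2 < 1$, so the angle errors decay geometrically to a floor proportional to $\norm{z}_2/\norm{\A(X)}_2$, and translating the angle errors of the rank-one iterates into the Frobenius error of $\hat X = u_t v_t^\transpose$ gives exactly the stability criterion (\ref{eq:success}).

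The main obstacle is making the cross-term and projection arguments rigorous under the \emph{one-sided} nature of the RIP-like property. Because $\C_\mu + \C_\mu \not\subset \C_{2\mu}$, differences of feasible matrices need not be feasible, so I cannot apply an isometry bound to $\hat X - X$ directly in the manner of the Gaussian-RIP analysis of \cite{LeeWB2013spf}; instead every error term must be re-expressed as a rank-one matrix whose two factors individually satisfy a sparsity-plus-flatness constraint, and the companion-paper estimates invoked in exactly the bilinear form in which they are proved. A secondary difficulty is that the projection onto the intersection $\Gamma_{s_1}\cap\Phi^{-1}\C_{\mu_1}$ is only approximate in practice (Algorithm~\ref{alg:approx_proj}); I would carry out the analysis with the exact projection, relying on the near-expansiveness estimate above, and then argue that the alternating-projection surrogate inherits the same bound up to a residual that can be absorbed into the noise floor.
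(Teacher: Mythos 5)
There is a genuine gap: you are proving the wrong theorem. Theorem~\ref{thm:stability} is a pure identifiability/stability statement --- it asserts that \emph{every} $X \in \calM_{s_1,s_2;\mu_1,\mu_2}$ can be stably reconstructed, with no peakedness assumption on $(u,v)$ and no restriction on the noise level; the ``reconstruction'' need not be computationally efficient. Your proposal routes everything through Algorithm~\ref{alg:altminprojbd}: thresholding initialization, basin of attraction, per-iteration contraction to a noise floor. That route intrinsically requires the two extra hypotheses your own sketch invokes --- heavy peakedness (``the correct indices survive thresholding'') and a high-SNR condition (so the contraction has a nonempty basin and the noise floor is meaningful). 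Those hypotheses appear in the paper's \emph{second} main result, Theorem~\ref{thm:altminbd} (where $X \in \calM_{s_1,s_2;\mu_1,\mu_2} \cap \calT_c$, $\mu_2 \geq C\log n$, and $\norm{z}_2 \leq c'\norm{\A(X)}_2$ are assumed), but they are absent from Theorem~\ref{thm:stability}. So even if every step of your argument were made rigorous, you would have established recovery only for peaked signals at high SNR --- a strictly weaker claim than the statement at hand. Without peakedness the initialization can fail outright, and alternating minimization carries no global guarantee, so this gap cannot be patched within your framework.

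The fix is much shorter than what you propose, and you already hold the key idea. Your final paragraph correctly observes that, because $\C_\mu + \C_\mu \not\subset \C_{2\mu}$, one must re-express differences of model matrices as sums of rank-one matrices whose factors are \emph{individually} feasible. Apply that observation directly to $\widehat{X} - X$ for two arbitrary elements $X = uv^\transpose$, $\widehat{X} = \hat{u}\hat{v}^\transpose$ of $\calM_{s_1,s_2;\mu_1,\mu_2}$ (normalizing $\hat u, v$): with $\rho = \hat{u}^* u$ one has the orthogonal decomposition $\widehat{X} - X = \hat{u}(\hat{v} - \rho v)^\transpose + (\rho\hat{u} - u)v^\transpose$, whose first term lies in $\calM_{s_1,2s_2;\mu_1,\infty}$ and second in $\calM_{2s_1,s_2;\infty,\mu_2}$. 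The two RAP/RIP bounds plus the ROP bound on the cross term (all supplied by \cite{LeeJunge2015} at the stated sample complexity, with probability $1-n^{-\beta}$) then give $(1-\delta)\fnorm{\widehat{X}-X}^2 \leq \norm{\A(\widehat{X}-X)}_2^2 \leq (1+\delta)\fnorm{\widehat{X}-X}^2$, i.e.\ $\A$ satisfies $(\calM_{s_1,s_2;\mu_1,\mu_2}-\calM_{s_1,s_2;\mu_1,\mu_2},\delta)$-RIP; this is the paper's Proposition~\ref{prop:stability}. Feeding the constrained least-squares minimizer $\widehat{X} = \argmin_{\widetilde{X} \in \calM_{s_1,s_2;\mu_1,\mu_2}} \norm{b - \A(\widetilde{X})}_2^2$ into this two-sided bound yields (\ref{eq:success}) with constant $2\sqrt{(1+\delta)/(1-\delta)}$ for arbitrary $z$, which is the whole proof. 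Reserve your algorithmic machinery for Theorem~\ref{thm:altminbd}, where it belongs.
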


The performance guarantee in Theorem~\ref{thm:stability} applies uniformly to all signals following the underlying model.
In particular, when the parameters $\mu_1$ and $\mu_2$ satisfy $\mu_1 = O(\log n)$ and $\mu_2 = O(\log n)$,
signals $x$ and $y$ of sparsity level $s_1$ and $s_2$ (over corresponding dictionaries) and of spectral flatness level $\mu_1$ and $\mu_2$
are identifiable (up to a scale factor) from $m$ samples of their convolution, where $m$ is proportional (up to a logarithm factor) to $s_1 + s_2$.
This sample complexity has near optimal scaling compared to the number of degrees of freedom in the underlying signal model.

Our second main result asserts that, under an additional condition (peakedness of $u$ and $v$),
the same near optimal performance guarantee is achieved by Algorithm~\ref{alg:altminprojbd}.

Let $\calT_c$ be defined by
\begin{equation}
\label{eq:defcalTc}
\calT_c := \{ u v^\transpose \in \cz^{n \times n} :\pl \norm{u}_\infty \geq c \norm{u}_2, \pl \norm{v}_\infty \geq c \norm{v}_2 \}.
\end{equation}
Then, $\calT_c$ consists of outer products of heavily peaked vectors $u$ and $v$.\footnote{If $\norm{u}_\infty \geq c \norm{u}_2$ for an absolute constant $c \in (0,1)$, there is at least one large element in $u$ which dominates most nonzero elements in magnitude. This motivates the term ``heavy'' peakedness. However, estimating only these dominant components does not provide an acceptable recovery of $u$.}
The heavy peakedness of $u$ and $v$ enables the indices of $u$ and $v$ corresponding to the largest magnitudes to be easily captured by simple thresholding procedures. Under certain conditions, which will be specified later, this will further imply that the estimation error in the initialization by Algorithm~\ref{alg:thres_proj} is bounded below a certain threshold.

\begin{theorem}
\label{thm:altminbd}
There exist absolute numerical constants $c$, $c'$, $C$, $C'$, and $\beta$ such that the following holds. 
Let $\A: \cz^{n \times n} \to \cz^m$ be defined by (\ref{eq:defcalAbyMell1}) and (\ref{eq:defcalAbyMell2}),
where $\Phi, \Psi \in \cz^{n \times n}$ are independent random matrices whose entries are i.i.d. following $CN(0,1/n)$.
Suppose that $\mu_2 \geq C \log n$ and $m \geq C (\mu_1 s_2 + \mu_2 s_1) \log^5 n$.
Then, with probability $1 - n^{-\beta}$, for all $X \in \calM_{s_1,s_2;\mu_1,\mu_2} \cap \calT_c$ and $z \in \cz^m$ satisfying
\begin{equation}
\label{eq:highSNR}
\norm{z}_2 \leq c' \norm{\A(X)}_2,
\end{equation}
Algorithm~\ref{alg:altminprojbd} produces a stable reconstruction of $X$ satisfying (\ref{eq:success}).
\end{theorem}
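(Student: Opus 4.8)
The plan is to follow the standard two-phase analysis for alternating-minimization schemes of the sparse power factorization type \cite{LeeWB2013spf}: first show that the thresholding initialization (Algorithm~\ref{alg:thres_proj}) lands inside a basin of attraction, then show that each alternating update contracts the error geometrically down to a floor set by the noise level. The key enabling ingredient, replacing the ordinary restricted isometry property available for i.i.d.\ Gaussian operators in \cite{LeeWB2013spf}, is the RIP-like property for $\A$ on rank-one matrices with sparse and spectrally flat factors; this property holds with probability $1 - n^{-\beta}$ precisely under the sample complexity $m \geq C(\mu_1 s_2 + \mu_2 s_1)\log^5 n$, is the same property underlying Theorem~\ref{thm:stability}, and is derived in the companion paper \cite{LeeJunge2015}. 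I would condition on this high-probability event throughout and take $c'$ small enough to enforce (\ref{eq:highSNR}).

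First I would analyze the initialization. Feasibility of $v_0$, namely $v_0 \in \Gamma_{s_2} \cap \Psi^{-1}\C_{\mu_2}$, follows from the termination conditions of Algorithm~\ref{alg:thres_proj} together with the hypothesis $\mu_2 \geq C\log n$, which guarantees that the while-loop exits with some $s_0 \geq 1$ satisfying (\ref{eq:maxs0}), exactly as verified in Section~\ref{subsec:altmin}. The substantive point is that $v_0$ is close to $v$ in angle. Here the peakedness hypothesis $X \in \calT_c$ is essential: because $u$ and $v$ each concentrate a constant fraction of their energy on their largest entries, the row- and column-selection steps correctly capture index sets $\widehat{J}_1, \widehat{J}_2$ on which $\Pi_{\widehat{J}_1}\A^*(b)\Pi_{\widehat{J}_2}$ is a good rank-one approximation of the corresponding restriction of $uv^\transpose$. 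Invoking the RIP-like property to control $\A^*\A - \id$ on the relevant set, together with (\ref{eq:highSNR}), a Wedin/Davis--Kahan type perturbation bound for the leading right singular vector then yields $\sin\angle(v_0,v) \leq \delta_0$ for an absolute constant $\delta_0$ small enough to enter the basin of attraction.

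Next I would establish per-iteration contraction. Assume inductively $\sin\angle(v_{t-1},v) \leq \delta$, where $v_{t-1}$ is normalized as in Algorithm~\ref{alg:altminprojbd}. Writing $b = [\A_\mathrm{R}(v)]u + z$ and running the HTP step (Algorithm~\ref{alg:htp}) with the forward system $[\A_\mathrm{R}(v_{t-1})]$, the discrepancy $[\A_\mathrm{R}(v_{t-1})-\A_\mathrm{R}(v)]u$ acts as a structured perturbation whose norm is controlled by $\sin\angle(v_{t-1},v)$ through the RIP-like property. The RIP-based convergence guarantee for HTP then gives a $\tilde{u}_t$ with $\sin\angle(\tilde{u}_t,u) \leq \rho\,\sin\angle(v_{t-1},v) + C''\,\norm{z}_2/\norm{\A(X)}_2$ and contraction factor $\rho < 1$, provided the restricted isometry constant is small enough; this is where the $\log^5 n$ sample complexity is spent. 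Since the true $u$ lies in $\Gamma_{s_1}\cap\Phi^{-1}\C_{\mu_1}$, the projection step does not increase the angular error by more than a constant factor, so the bound survives passage to $u_t$. The $v$-update is symmetric (using $\A_\mathrm{L}$ and the constraint set for $v$), yielding after a full round $\sin\angle(v_t,v) \leq \rho^2\,\sin\angle(v_{t-1},v) + (1+\rho)\,C''\,\norm{z}_2/\norm{\A(X)}_2$. Iterating this geometric recursion from the initialization drives the angular error to $O(\norm{z}_2/\norm{\A(X)}_2)$, and converting the final bounds on $\sin\angle(u_t,u)$ and $\sin\angle(v_t,v)$ into a Frobenius-norm bound on the rank-one product $u_t v_t^\transpose$ gives (\ref{eq:success}).

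The hardest step will be the per-iteration contraction, and specifically the fact, flagged in the footnote of Section~\ref{subsec:altmin}, that the effective measurement error $[\A_\mathrm{R}(v_{t-1})-\A_\mathrm{R}(v)]u$ is neither adversarial nor independent random noise but is correlated with the measurement operator itself. Consequently the standard RIP-based HTP analysis cannot be invoked as a black box; instead I would need a refined RIP-like bound controlling this operator applied to the specific difference $v_{t-1}-v$, uniformly over the basin of attraction, while simultaneously respecting the sparse-plus-spectrally-flat structure of both factors. Establishing that this structured-error contraction holds with $\rho < 1$ at the stated sample complexity, and ensuring the projection onto the non-convex intersection $\Gamma_s \cap \Phi^{-1}\C_\mu$ does not destroy it, is the crux of the argument.
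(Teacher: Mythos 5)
Your outline reproduces the paper's own strategy: the paper factors the theorem into Proposition~\ref{prop:good_init} (the thresholded initialization is feasible and lands in the basin of attraction, via peakedness, the RIP-like properties, and Wedin's $\sin\theta$ theorem) and Proposition~\ref{prop:conv_w_good_init} (geometric contraction of the angles down to a noise floor), combined with Theorems~\ref{thm:rap} and \ref{thm:rop} from the companion paper. However, the one step you explicitly leave open --- the contraction against the structured error, which you call the crux and for which you say you ``would need a refined RIP-like bound'' --- is exactly where the paper's proof has its content, and the point is that \emph{no} new bound beyond the companion paper's three properties is required. The resolution (Lemma~\ref{lemma:update_uv}) uses an orthogonal decomposition rather than the raw difference $v_{t-1}-v$ you propose: write $b = [\A_\mathrm{R}(v_{t-1})]\{(v_{t-1}^* v)u\} + [\A_\mathrm{R}(P_{R(v_{t-1})^\perp} v)]u + z$, so HTP is run with target $(v_{t-1}^*v)u$ and perturbation $[\A_\mathrm{R}(P_{R(v_{t-1})^\perp} v)]u + z$. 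For any $s_1$-sparse unit vector $\breve{u}$, the rank-one matrices $\breve{u} v_{t-1}^*$ and $u v^* P_{R(v_{t-1})^\perp}$ are mutually orthogonal (since $P_{R(v_{t-1})^\perp}v_{t-1}=0$) and lie in the structured sets $\calM_{s_1,s_2;\infty,\mu_2}$ and $\calM_{s_1,2s_2;\mu_1,\infty}$ respectively --- here the feasibility $v_{t-1} \in \Gamma_{s_2}\cap\Psi^{-1}\C_{\mu_2}$, maintained by the projection step, is what certifies membership. The ROP of assumption (A5) then gives $\norm{\Pi_{J}[\A_\mathrm{R}(v_{t-1})]^*[\A_\mathrm{R}(P_{R(v_{t-1})^\perp} v)]u}_2 \leq \delta \norm{u}_2 \norm{P_{R(v_{t-1})^\perp}v}_2$ for every $|J|\le s_1$, which plugs directly into Foucart's RIP-based HTP bound; the RIP of $\A_\mathrm{R}(v_{t-1})$ itself comes from the RAP (A4) because $v_{t-1}$ is feasible and normalized. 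No uniformity over the basin is needed in this bound; the basin enters only through the $\tan$/$\sec$ recursion handled by Lemma~\ref{lemma:convtheta}.

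Two further points your sketch glosses over. First, your claim that the projection onto $\Gamma_{s_1}\cap\Phi^{-1}\C_{\mu_1}$ ``does not increase the angular error by more than a constant factor'' is not automatic, because the projection is computed in the $\Phi$-domain: the paper uses projection optimality, $\norm{\Phi u_t - \Phi\tilde{u}_t}_2 \le \norm{\Phi(v_{t-1}^*v)u - \Phi\tilde{u}_t}_2$, and then needs $\Phi$ itself to satisfy $(\Gamma_{2s_1},\delta)$-RIP to transfer the bound back to $\ell_2$; this is justified by viewing $\Phi = \A_\mathrm{R}(e_1)$ with $e_1 \in \C_{\mu_2}$ and invoking the companion paper once more. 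Second, Proposition~\ref{prop:good_init} requires (\ref{eq:maxs0}) to hold for \emph{all} $J$ of size $s_0$, not merely for the set the algorithm happens to select; this uniformity is what makes every vector supported on $\widehat{J}_2$ spectrally flat, which Lemmas~\ref{lemma:riplike_cross} and \ref{lemma:riplike_noise} need in order to apply the RAP/ROP. Establishing that such an $s_0 \ge 1$ exists is precisely where $\mu_2 \ge C\log n$ and the Gaussianity of $\Psi$ enter: the paper proves $\sigma_{\min}(F\Psi_J) > c_2$ uniformly over small $J$ by a Gaussian RIP argument and $\sqrt{n}\norm{F\Psi}_{1\to\infty} \le c_3 \log n$ by a union bound. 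With the orthogonal-decomposition/ROP mechanism and these two repairs supplied, your outline coincides with the paper's proof.
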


The conditions $\mu_2 \geq C \log n$ and $\Psi$ is an i.i.d. Gaussian matrix in Theorem~\ref{thm:altminbd}
guarantee that Algorithm~\ref{alg:thres_proj} produces $v_0$ that belongs to $\Psi^{-1} \C_{\mu_2}$.
(The detailed discussion is deferred to Section~\ref{subsec:altminbd_riplike}.)
The initialization in Algorithm~\ref{alg:altminprojbd} is asymmetric in the sense that it involves an estimate of only the right factor $v$.
Therefore, this condition is assumed only for $\mu_2$.

Similarly to Theorem~\ref{thm:stability}, when $\mu_1 = O(\log n)$ and $\mu_2 = O(\log n)$,
the sample complexity in Theorem~\ref{thm:altminbd} reduces to $m = O((s_1+s_2) \log^6 n)$, which is near optimal.

The rest of this section is devoted to proving Theorems~\ref{thm:stability} and \ref{thm:altminbd}.
We first introduce RIP-like properties in the next section.
Then, deterministic performance guarantees given by these RIP-like properties
are presented in Sections~\ref{subsec:stability_riplike} and \ref{subsec:altminbd_riplike}.

\subsection{RIP-like properties}
\label{subsec:riplike}

The previous work \cite{LeeWB2013spf} derived performance guarantees
using the RIP of $\A$ restricted to the set of rank-2 and $(s_1,s_2)$-sparse matrices.
This property holds with high probability at near optimal sample complexity
when each measurement is obtained as an inner product with an i.i.d. subgaussian matrix.
However, the same RIP result is not available for the linear operator $\A$ in blind deconvolution
with random dictionaries (even when the full samples of the convolution are available).
Instead, various RIP-like properties hold with high probability at near optimal sample complexity,
which will lead to performance guarantees for (subsampled) blind deconvolution.
We introduce these RIP-like properties in this section.

The \textit{restricted isometry property} (RIP) was originally proposed
to show the performance guarantee of $\ell_1$-norm minimization in compressed sensing \cite{candes2005decoding}.
We will state our main results in terms of various RIP-like properties,
which are instances of a generalized version of the RIP stated below.

\begin{defi}
Let $(\calH,\hsnorm{\cdot})$ be a Hilbert space where $\hsnorm{\cdot}$ denotes the Hilbert-Schmidt norm.
Let $\calM \subset \calH$ be a centered and symmetric set, i.e.,
$0 \in \calM$ and $\alpha \calM = \calM$ for all $\alpha \in \cz$ of unit modulus.
A linear operator $\A: \calH \to \ell_2^m$ satisfies the $(\calM,\delta)$-RIP if
\[
(1-\delta) \hsnorm{w}^2 \leq \norm{\A(w)}_2^2 \leq (1+\delta) \hsnorm{w}^2, \quad \forall w \in \calM,
\]
or equivalently,
\[
\left| \norm{\A(w)}_2^2 - \hsnorm{w}^2 \right| \leq \delta \hsnorm{w}^2,
\quad \forall w \in \calM.
\]
\end{defi}

Note that $\hsnorm{w}^2 = \langle w, w \rangle$ and $\norm{\A(w)}_2^2 = \langle \A(w), \A(w) \rangle$.
This observation extends RIP to another property called \textit{restricted angle-preserving property} (RAP) defined as follows:

\begin{defi}
Let $\calM,\calM' \subset \calH$ be centered and symmetric sets.
A linear operator $\A: \calH \to \ell_2^m$ satisfies the $(\calM,\calM',\delta)$-RAP if
\[
\left| \langle \A(w'), \A(w)\rangle - \langle w', w\rangle \right| \leq \delta \hsnorm{w} \hsnorm{w'},
\quad \forall w \in \calM, \pl \forall w' \in \calM'.
\]
\end{defi}

In a more restrictive case with orthogonality between $w$ and $w'$ ($\langle w', w \rangle = 0$),
RAP reduces to the \textit{restricted orthogonality property} (ROP) \cite{candes2008restricted}.
\begin{defi}
Let $\calM,\calM' \subset \calH$ be centered and symmetric sets.
A linear operator $\A: \calH \to \ell_2^m$ satisfies the $(\calM,\calM',\delta)$-ROP if
\[
\left| \langle \A(w'), \A(w)\rangle \right| \leq \delta \hsnorm{w} \hsnorm{w'},
\quad \forall w \in \calM, \pl \forall w' \in \calM' ~ \text{s.t.} ~ \langle w', w \rangle = 0.
\]
\end{defi}

\begin{rem}
By definition, $(\calM,\calM,\delta)$-RAP implies $(\calM,\delta)$-RIP. But the converse is not true in general.
\end{rem}

We consider the case where the Hilbert space $\calH$ consists of $n \times n$ matrices.
The main results in the next sections are stated in terms of the following three RIP-like properties
of the linear operator $\A$ defined by (\ref{eq:defcalAbyMell1}) and (\ref{eq:defcalAbyMell2}):
i) $(\calM_{s_1,s_2;\mu_1,\infty},\calM_{s_1,s_2;\mu_1,\infty},\delta)$-RAP; ii) $(\calM_{s_1,s_2;\infty,\mu_2},\calM_{s_1,s_2;\infty,\mu_2},\delta)$-RAP; iii) $(\calM_{s_1,s_2;\mu_1,\infty},\calM_{s_1,s_2;\infty,\mu_2},\delta)$-ROP.
We have proved that these RIP-like properties hold at near optimal sample complexity in a companion paper \cite{LeeJunge2015}, as summarized below.

\begin{theorem}[{\cite[Theorem~1.4]{LeeJunge2015}}]
\label{thm:rap}
There exist absolute numerical constants $C > 0$ and $\beta \in \mathbb{N}$ such that the following holds. 
Let $\A: \cz^{n \times n} \to \cz^m$ be defined by (\ref{eq:defcalAbyMell1}) and (\ref{eq:defcalAbyMell2}),
where $\Phi, \Psi \in \cz^{n \times n}$ are independent random matrices whose entries are i.i.d. following $CN(0,1/n)$.
Suppose $m \geq C \delta^{-2} (\mu_2 s_1 + \mu_1 s_2) \log^5 n$.
Then, with probability at least $1 - n^{-\beta}$,
$\A$ satisfies $(\calM_{s_1,s_2;\mu_1,\infty},\calM_{s_1,s_2;\mu_1,\infty},\delta)$-RAP and $(\calM_{s_1,s_2;\infty,\mu_2},\calM_{s_1,s_2;\infty,\mu_2},\delta)$-RAP.
\end{theorem}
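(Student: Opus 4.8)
The plan is to prove the two RAP statements by a single symmetric argument, since they differ only in which factor carries the flatness constraint and in the roles of $\Phi$ and $\Psi$. First I would polarize: $(\calM,\calM,\delta)$-RAP is the bilinear companion of the diagonal deviation $\sup_{w}\bigl|\,\norm{\A(w)}_2^2-\hsnorm{w}^2\,\bigr|$, and since $\langle w',w\rangle$ and $\langle\A(w'),\A(w)\rangle$ are Hermitian forms, it suffices (up to a constant factor in $\delta$) to control this diagonal deviation uniformly over the normalized set $\calM$. Writing $\norm{\A(w)}_2^2=\sum_{\ell=1}^m|\langle M_\ell,w\rangle|^2$ and substituting $w=uv^\transpose$, the reparametrization $x=\Phi u$, $y=\Psi v$ turns each measurement into a sampled convolution, $\langle M_\ell,uv^\transpose\rangle=\sqrt n\,(Fx)^\transpose\diag(\overline{f_\ell})(Fy)$, a deterministic bilinear form in the Fourier spectra $Fx$ and $Fy$. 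The deviation $\norm{\A(w)}_2^2-\hsnorm{w}^2$ is then a second-order chaos in the independent Gaussian entries of $\Phi$ and $\Psi$.

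Next I would partition the supremum over $\calM_{s_1,s_2;\mu_1,\infty}$ according to the support pair $(S_1,S_2)$ with $|S_1|\le s_1$, $|S_2|\le s_2$, of which there are at most $\binom{n}{s_1}\binom{n}{s_2}$. On a fixed support pair, $\Phi_{S_1}$ and $\Psi_{S_2}$ are independent $n\times s_j$ Gaussian matrices, well conditioned with high probability, so the coefficient-domain norm $\hsnorm{w}=\norm{u}_2\norm{v}_2$ is comparable to the signal-domain energy $\norm{x}_2\norm{y}_2$, with the flatness of one spectral factor ensuring that $\ez\,\norm{\A(w)}_2^2$ is correctly proportional to $\norm{x}_2^2\norm{y}_2^2$ and hence to $\hsnorm{w}^2$. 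For a fixed $w$, a Hanson--Wright-type tail bound (after decoupling the two independent Gaussian matrices) then gives subexponential concentration; the crucial point is that the spectral-flatness bound $\norm{Fx}_\infty\le\sqrt{\mu_1/n}\,\norm{Fx}_2$ controls the $\ell_\infty$ spread of the spectra entering the variance and the operator-norm factor of the chaos, so the tail scales like the target $\sqrt{(\mu_2 s_1+\mu_1 s_2)/m}$ rather than a pessimistic dimension-dependent quantity.

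To upgrade pointwise concentration to a uniform bound I would run a chaining argument over the flat, rank-one set on each support pair, organized by Talagrand's $\gamma_2$ (and, for the chaos part, $\gamma_1$) functionals, in the spirit of the suprema-of-chaos-processes machinery of Krahmer--Mendelson--Rauhut. The metric-entropy input has three pieces that combine in the right way: the rank-one structure lets the factors $u$ and $v$ be netted separately; the sets $\Gamma_{s_1},\Gamma_{s_2}$ contribute the usual $s\log(n/s)$ entropy from the $\binom{n}{s}$ support choices together with a local $(3/\epsilon)^s$ net on each subspace; and the random dictionaries are absorbed through their restricted isometry on sparse supports. Assembling these with the pointwise tail and taking a union bound over all support pairs yields the claimed sample complexity $m\gtrsim\delta^{-2}(\mu_2 s_1+\mu_1 s_2)\log^5 n$ together with the failure probability $n^{-\beta}$.

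The hard part is the interaction of the non-convex flatness cone with both the chaining and the bilinear structure. Because $\C_\mu+\C_\mu\not\subset\C_{2\mu}$, the increments $w-w'$ arising along a chain need not lie in a dilated copy of $\calM$, so the standard device of estimating increments on a larger instance of the same set is unavailable. I would get around this by carrying the flatness constraint as an $\ell_\infty$-spectral side bound that is preserved under the metric used for chaining, so that increments are controlled in a norm the cone does respect, and by decoupling the two independent Gaussian matrices---conditioning on $\Psi$ to run the $\Phi$-chaining, then swapping---so that the order-two chaos splits into two one-matrix sub-processes. Reconciling this decoupling with a net whose points remain spectrally flat, while keeping the entropy integrals convergent and the constants absolute, is where the real work lies; this is precisely the analysis carried out in the companion paper \cite{LeeJunge2015}.
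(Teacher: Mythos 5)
This statement is not proved in the paper at all: Theorem~\ref{thm:rap} is imported verbatim from the companion paper \cite{LeeJunge2015} (Theorem~1.4 there), and the surrounding text says explicitly that these RIP-like properties ``have been proved \ldots in a companion paper.'' So there is no in-paper argument to compare yours against, and your own closing sentence concedes the point: the hard part of your sketch is deferred to ``precisely the analysis carried out in the companion paper,'' which makes the proposal a plan for a proof rather than a proof.

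Beyond that, your opening reduction contains a genuine gap. You claim that, by polarization, it suffices to control the diagonal deviation $\sup_{w\in\calM}\bigl|\norm{\A(w)}_2^2-\hsnorm{w}^2\bigr|$ over the normalized set $\calM$, i.e., that RIP on $\calM$ yields $(\calM,\calM,\delta)$-RAP up to constants. Polarization expresses $\langle\A(w'),\A(w)\rangle$ through quantities of the form $\norm{\A(w+i^k w')}_2^2$, so it requires diagonal control on the sumsets $\calM+i^k\calM$, not on $\calM$ itself. For the sets at hand this is exactly what is unavailable: sums of two elements of $\calM_{s_1,s_2;\mu_1,\infty}$ are rank-two, and the flatness cone fails the dilation property $\C_{\mu}+\C_{\mu}\subset\C_{2\mu}$, a failure the paper emphasizes in Section~\ref{subsec:priors} and again in the remark following the ROP definition: $(\calM,\calM,\delta)$-RAP implies $(\calM,\delta)$-RIP, ``but the converse is not true in general.'' You recognize this obstruction later when discussing chaining increments, but it already invalidates your first step; the RAP has to be established directly as a bilinear deviation bound (e.g., by decoupling the two Gaussian dictionaries and chaining each factor separately), which is the route your sketch gestures at but does not carry out.
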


\begin{theorem}[{\cite[Corollary~1.7]{LeeJunge2015}}]
\label{thm:rop}
There exist absolute numerical constants $C > 0$ and $\beta \in \mathbb{N}$ such that the following holds. 
Let $\A: \cz^{n \times n} \to \cz^m$ be defined by (\ref{eq:defcalAbyMell1}) and (\ref{eq:defcalAbyMell2}),
where $\Phi, \Psi \in \cz^{n \times n}$ are independent random matrices whose entries are i.i.d. following $CN(0,1/n)$.
Suppose $m \geq C \delta^{-2} (\mu_2 s_1 + \mu_1 s_2) \log^5 n$.
Then, with probability $1 - n^{-\beta}$, $\A$ satisfies $(\calM_{s_1,s_2;\mu_1,\infty},\calM_{s_1,s_2;\infty,\mu_2},\delta)$-ROP.
\end{theorem}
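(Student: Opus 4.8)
The plan is to prove the restricted orthogonality property directly as an off-diagonal concentration estimate, running the same empirical-process argument that underlies the restricted angle-preserving property in Theorem~\ref{thm:rap} but now indexed by the \emph{product} of the two constraint sets. Write $w = uv^\transpose \in \calM_{s_1,s_2;\mu_1,\infty}$ and $w' = u'v'^\transpose \in \calM_{s_1,s_2;\infty,\mu_2}$, and normalize $\hsnorm{w} = \hsnorm{w'} = 1$. Since the ensemble $(M_\ell)$ is normalized so that $\A$ is an isometry in expectation, $\mathbb{E}\langle \A(w'), \A(w)\rangle = \langle w', w\rangle$; under the orthogonality hypothesis $\langle w', w\rangle = 0$ this expectation vanishes, and the task reduces to bounding the fluctuation $|\langle \A(w'),\A(w)\rangle|$ uniformly over all orthogonal pairs $(w,w')$ in the two sets.

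First I would move everything to the Fourier side. Since $\langle M_\ell, uv^\transpose\rangle$ is, up to normalization, the $\omega_\ell$-th sample of the inverse transform of $(F\Phi u)\odot(F\overline{\Psi}\,\overline{v})$, the cross term $\langle \A(w'),\A(w)\rangle$ is a bilinear chaos in the Gaussian matrices $\Phi$ and $\Psi$. The decisive feature is that the spectral-flatness constraints sit on the matched factors: $\Phi u \in \C_{\mu_1}$ controls $\norm{F\Phi u}_\infty$ in $w$, while $\Psi v' \in \C_{\mu_2}$ controls $\norm{F\Psi v'}_\infty$ in $w'$. These two $\ell_\infty$ bounds supply the incoherence needed to tame the otherwise heavy-tailed products, even though the complementary factors $F\overline{\Psi}\,\overline{v}$ in $w$ and $F\Phi u'$ in $w'$ carry no flatness constraint.

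Next I would decouple the dependence on $\Phi$ and $\Psi$ and apply a Hanson--Wright/Gaussian-chaos deviation inequality conditionally, then discretize each of $\calM_{s_1,s_2;\mu_1,\infty}$ and $\calM_{s_1,s_2;\infty,\mu_2}$ by $\varepsilon$-nets and take a union bound, exactly as in the proof of Theorem~\ref{thm:rap}. The metric-entropy estimates for the two factors are the same ones used there: each set is a union of $\binom{n}{s}$ low-dimensional pieces intersected with a flatness cone, so its covering numbers contribute the $s_1\log n$ and $s_2\log n$ factors, and the flatness parameters enter the variance proxy as $\mu_1$ and $\mu_2$. This is what produces the claimed sample complexity $m \gtrsim \delta^{-2}(\mu_2 s_1 + \mu_1 s_2)\log^5 n$ and the failure probability $n^{-\beta}$, identical to Theorem~\ref{thm:rap}; this matching of complexities is precisely why the statement is a corollary of the angle-preserving theorem rather than a separate result.

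The main obstacle is the asymmetry of the flatness constraints across the product set. In the within-set estimates of Theorem~\ref{thm:rap} both flatness bounds act on the same type of factor, giving a clean product structure in the chaos. Here the flatness on the left factor of $w$ and on the right factor of $w'$ must \emph{simultaneously} control a bilinear form whose two remaining factors are arbitrary sparse (possibly spectrally peaked) vectors; one cannot simply invoke either within-set RAP, nor can polarization help, since expanding $\norm{\A(w\pm w')}_2^2$ reintroduces the very cross term $\langle \A(w'),\A(w)\rangle$ one is trying to bound. Thus the crux is to verify that the asymmetric pairing of the two $\ell_\infty$ constraints still yields the same subexponential increment bound along the chaining, so that the supremum over orthogonal pairs concentrates at scale $\delta$ under the stated number of measurements.
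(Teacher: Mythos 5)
First, note what the paper actually does with this statement: it does not prove Theorem~\ref{thm:rop} at all. The theorem is imported verbatim from the companion paper \cite{LeeJunge2015}, where it appears as Corollary~1.7, and the label ``Corollary'' indicates how it is obtained there: one establishes an angle-preserving (RAP-type) estimate valid for the \emph{mixed} pair of sets $(\calM_{s_1,s_2;\mu_1,\infty},\calM_{s_1,s_2;\infty,\mu_2})$, and the ROP statement is then the trivial specialization to orthogonal pairs, since $|\langle \A(w'),\A(w)\rangle - \langle w',w\rangle| \leq \delta \fnorm{w}\fnorm{w'}$ together with $\langle w',w\rangle = 0$ gives the ROP bound immediately. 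The heavy analysis thus lives in the companion paper's chaos-process machinery; nothing in the present paper reproduces it, nor needs to.

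Measured against that, your proposal has a genuine gap rather than being a complete alternative proof. You set up the correct object --- a bilinear Gaussian chaos in $\Phi$ and $\Psi$ indexed by the product of the two constraint sets, centered because $\ez \langle \A(w'),\A(w)\rangle = \langle w',w\rangle = 0$ --- and you correctly locate the difficulty in the asymmetric placement of the flatness constraints ($\Phi u$ in $w$, $\Psi v'$ in $w'$). But your final paragraph then concedes exactly the step a proof must supply: ``the crux is to verify that the asymmetric pairing of the two $\ell_\infty$ constraints still yields the same subexponential increment bound along the chaining.'' Declaring that the entropy and variance bookkeeping ``is the same as in Theorem~\ref{thm:rap}'' is an assertion, not an argument; in the within-set RAP both flatness bounds sit on matched factors, and the failure of this matching for the cross term is precisely why a separate cross-set theorem is needed at all. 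Two further inaccuracies are worth flagging. (i) Polarization fails, but not because expanding $\norm{\A(w \pm w')}_2^2$ ``reintroduces the cross term'' --- polarization recovers the cross term \emph{from} such squared norms; the true obstruction is that $w \pm w'$ is a rank-two matrix whose factors mix the two cones, and no RIP on that sum set is available, as the paper emphasizes via $\C_{\mu} + \C_{\mu} \not\subset \C_{2\mu}$. (ii) Your net-and-union-bound step treats $\calM_{s_1,s_2;\mu_1,\infty}$ and $\calM_{s_1,s_2;\infty,\mu_2}$ as deterministic index sets, yet both are random --- their definitions involve $\Phi^{-1}\C_{\mu_1}$ and $\Psi^{-1}\C_{\mu_2}$ --- so the chaining argument must first disentangle the constraint from the randomness (for instance by reparametrizing in terms of $x = \Phi u \in \C_{\mu_1} \cap \Phi\Gamma_{s_1}$ and $y = \Psi v$), a point your sketch passes over silently.
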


\subsection{Identifiability of blind deconvolution of sparse signals}
\label{subsec:stability_riplike}

Our first result in the following proposition asserts that
under the aforementioned RIP-like properties, the linear operator $\A$ preserves the distance between two matrices in $\calM_{s_1,s_2;\mu_1,\mu_2}$.

\begin{prop}
\label{prop:stability}
Let $\calM_{s_1,s_2;\mu_1,\mu_2}$ be defined in (\ref{eq:defcalM}).
Suppose that $\A: \cz^{n \times n} \to \cz^m$ satisfies
$(\calM_{s_1,2s_2;\mu_1,\infty},\delta/2)$-RIP, $(\calM_{2s_1,s_2;\infty,\mu_2},\delta/2)$-RIP, and $(\calM_{s_1,2s_2;\mu_1,\infty},\calM_{2s_1,s_2;\infty,\mu_2},\delta/2)$-ROP.
Then,
\begin{align*}
\left(1 - \delta\right) \norm{\widehat{X} - X}_{\mathrm{F}}^2
\leq \norm{\A(\widehat{X} - X)}_2^2
\leq \left(1 + \delta\right) \norm{\widehat{X} - X}_{\mathrm{F}}^2,
\quad \forall X, \widehat{X} \in \calM_{s_1,s_2;\mu_1,\mu_2}.
\end{align*}
In other words, $\A$ satisfies $(\calM_{s_1,s_2;\mu_1,\mu_2}-\calM_{s_1,s_2;\mu_1,\mu_2},\delta)$-RIP.
\end{prop}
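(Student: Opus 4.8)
The plan is to reduce the two-sided bound to the three hypothesized properties by writing the difference of the two rank-one matrices as a sum of two \emph{Frobenius-orthogonal} rank-one pieces, each retaining the spectral-flatness constraint on exactly one of its factors. Write $X=uv^\transpose$ and $\widehat X=\hat u\hat v^\transpose$ with $u,\hat u\in\Gamma_{s_1}\cap\Phi^{-1}\C_{\mu_1}$ and $v,\hat v\in\Gamma_{s_2}\cap\Psi^{-1}\C_{\mu_2}$, and set $D:=\widehat X-X$. The naive splitting $D=(\hat u-u)\hat v^\transpose+u(\hat v-v)^\transpose$ already puts each summand in a set of the required type, but its two pieces are not orthogonal, so controlling the cross term would need a restricted \emph{angle}-preserving bound, whereas the hypotheses supply only ROP, which requires orthogonality. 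I would therefore orthogonalize along $u$: set $\hat u=\alpha u+u_\perp$ with $\alpha=\langle u,\hat u\rangle/\norm{u}_2^2$ and $\langle u,u_\perp\rangle=0$ (and $u_\perp=\hat u$, $w_1=0$ in the degenerate case $u=0$). Substituting gives the exact identity
\[
D \;=\; u(\alpha\hat v-v)^\transpose + u_\perp\hat v^\transpose \;=:\; w_1+w_2 .
\]

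Next I would verify that $w_1$ and $w_2$ land in the sets for which RIP is assumed and that they are orthogonal. In $w_1$ the left factor $u$ lies in $\Gamma_{s_1}\cap\Phi^{-1}\C_{\mu_1}$ while the right factor $\alpha\hat v-v$, a combination of two $s_2$-sparse vectors, lies in $\Gamma_{2s_2}$ with no flatness constraint, so $w_1\in\calM_{s_1,2s_2;\mu_1,\infty}$; in $w_2$ the left factor $u_\perp=\hat u-\alpha u\in\Gamma_{2s_1}$ carries no flatness while the right factor $\hat v\in\Gamma_{s_2}\cap\Psi^{-1}\C_{\mu_2}$ does, so $w_2\in\calM_{2s_1,s_2;\infty,\mu_2}$. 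Using $\langle ab^\transpose,cd^\transpose\rangle=\langle a,c\rangle\langle b,d\rangle$, the inner product $\langle w_1,w_2\rangle$ contains the factor $\langle u,u_\perp\rangle=0$ and therefore vanishes, so $\norm{D}_{\mathrm{F}}^2=\norm{w_1}_{\mathrm{F}}^2+\norm{w_2}_{\mathrm{F}}^2$ by the Pythagorean identity.

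Finally I would expand $\norm{\A(D)}_2^2=\norm{\A(w_1)}_2^2+2\,\mathrm{Re}\,\langle\A(w_1),\A(w_2)\rangle+\norm{\A(w_2)}_2^2$, bounding the two diagonal terms by the assumed $(\calM_{s_1,2s_2;\mu_1,\infty},\delta/2)$-RIP and $(\calM_{2s_1,s_2;\infty,\mu_2},\delta/2)$-RIP and the cross term by the $(\calM_{s_1,2s_2;\mu_1,\infty},\calM_{2s_1,s_2;\infty,\mu_2},\delta/2)$-ROP, which applies exactly because $w_1\perp w_2$. This yields
\[
\bigl|\norm{\A(D)}_2^2-(\norm{w_1}_{\mathrm{F}}^2+\norm{w_2}_{\mathrm{F}}^2)\bigr|\leq\tfrac{\delta}{2}\bigl(\norm{w_1}_{\mathrm{F}}^2+\norm{w_2}_{\mathrm{F}}^2\bigr)+\delta\,\norm{w_1}_{\mathrm{F}}\norm{w_2}_{\mathrm{F}},
\]
and $\norm{w_1}_{\mathrm{F}}\norm{w_2}_{\mathrm{F}}\leq\tfrac12(\norm{w_1}_{\mathrm{F}}^2+\norm{w_2}_{\mathrm{F}}^2)$ collapses the right-hand side to $\delta\norm{D}_{\mathrm{F}}^2$, giving the claimed two-sided inequality.

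The main obstacle is the decomposition in the first two paragraphs: one must avoid the symmetric SVD splitting, which would be orthogonal but, because $\C_\mu+\C_\mu\not\subset\C_{2\mu}$, would strip the flatness constraint from \emph{both} factors and leave each piece outside every set for which a property is assumed. Orthogonalizing only on one side is what lets exactly one flat factor survive in each piece while still forcing the cross term to vanish, so that the available RIP and ROP (rather than the unavailable RAP) suffice.
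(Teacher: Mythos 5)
Your proof is correct and is essentially the paper's own argument: the paper likewise splits $\widehat X - X$ into the two orthogonal rank-one pieces $\hat u(\hat v-\rho v)^\transpose \in \calM_{s_1,2s_2;\mu_1,\infty}$ and $(\rho\hat u-u)v^\transpose \in \calM_{2s_1,s_2;\infty,\mu_2}$ with $\rho=\hat u^* u$ (orthogonalizing along $\hat u$ rather than along $u$, an immaterial relabeling since the claim is symmetric in $X$ and $\widehat X$), then applies the two RIPs to the diagonal terms, the ROP to the cross term, AM--GM, and the Pythagorean identity exactly as you do.
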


Let
\[
\widehat{X} = \argmin_{\widetilde{X} \in \calM_{s_1,s_2;\mu_1,\mu_2}} \norm{b - \A(\widetilde{X})}_2^2.
\]
Then, the $(\calM_{s_1,s_2;\mu_1,\mu_2}-\calM_{s_1,s_2;\mu_1,\mu_2},\delta)$-RIP of $\A$ implies that $\widehat{X}$ satisfies
\begin{equation}
\label{eq:stable}
\frac{\norm{\widehat{X}-X}_{\mathrm{F}}}{\norm{X}_{\mathrm{F}}}
\leq 2\sqrt{\frac{1+\delta}{1-\delta}} \frac{\norm{z}_2}{\norm{\A(X)}_2}.
\end{equation}
Therefore, $\widehat{X}$ is a stable reconstruction of $X$ from the noisy measurements $b = \A(X) + z$.
In the noiseless case ($z = 0$), (\ref{eq:stable}) implies that $X$ is uniquely identified as $\widehat{X}$.

Combining the above result with Theorems~\ref{thm:rap} and \ref{thm:rop} proves Theorem~\ref{thm:stability}.

\subsection{Guaranteed blind deconvolution by alternating minimization}
\label{subsec:altminbd_riplike}

Performance guarantees for blind deconvolution by Algorithm~\ref{alg:altminprojbd}
are given in this section in terms of the RIP-like properties of Section~\ref{subsec:riplike}.
The following conditions are assumed for the analysis in this section:
\begin{description}
  \item[(A1)] Let $b = \A(u v^\transpose) + z$, where $\norm{u}_0 \leq s_1$ and $\norm{v}_0 \leq s_2$.
  \item[(A2)] $\norm{z}_2 \leq \nu \norm{\A(u v^\transpose)}_2$.
  \item[(A3)] $\A$ satisfies $(\calM_{s_1,3s_2;\mu_1,\infty},\calM_{s_1,3s_2;\mu_1,\infty},\delta/2)$-RAP.
  \item[(A4)] $\A$ satisfies $(\calM_{3s_1,s_2;\infty,\mu_2},\calM_{3s_1,s_2;\infty,\mu_2},\delta/2)$-RAP.
  \item[(A5)] $\A$ satisfies $(\calM_{s_1,2s_2;\mu_1,\infty},\calM_{2s_1,s_2;\infty,\mu_2},\delta/2)$-ROP.
\end{description}
The assumption in (A2) implies that the signal-to-noise (SNR) ratio in the measurement vector is higher than a certain threshold.
Although the performance guarantees in Propositions~\ref{prop:conv_w_good_init} and \ref{prop:good_init} require this high SNR condition,
it is empirically observed that the algorithm operates successfully at a moderate SNR (cf. Section~\ref{sec:numres}).

\begin{prop}
\label{prop:conv_w_good_init}
Suppose (A1)-(A5) hold for $\delta = 0.02$ and $\nu = 0.02$.
For an arbitrary $\epsilon > 0$,
the iterates $((u_t,v_t))_{t \in \mathbb{Z}^+}$ by Algorithm~\ref{alg:altminprojbd} satisfy
\begin{equation}
\frac{\fnorm{u_t v_t^\transpose - u v^\transpose}}{\fnorm{u v^\transpose}} \leq 6.28 \frac{\norm{z}_2}{\norm{\A(u v^\transpose)}_2} + \epsilon
\label{eq:conv2set}
\end{equation}
for all $t \geq 2.27 \log (1/\epsilon)$,
provided that the initialization $v_0$ satisfies
\begin{equation}
v_0 \in \Gamma_{s_2} \cap \Psi^{-1} \C_{\mu_2}
\label{eq:feas_initcond}
\end{equation}
and
\begin{equation}
\norm{P_{R(v_0)^\perp} P_{R(v)}} < 0.98.
\label{eq:conv_initcond}
\end{equation}
\end{prop}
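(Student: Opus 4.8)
The plan is to track the principal angles $\theta^{(u)}_t := \angle(u_t,u)$ and $\theta^{(v)}_t := \angle(v_t,v)$ between the iterates and the true factors, and to show that one full sweep of Algorithm~\ref{alg:altminprojbd} contracts these angles geometrically down to a noise-dependent floor. Each half-step is, by construction, a sparse recovery problem, and the first observation is that the half-step forward operators inherit a genuine RIP from the RAP hypotheses. Concretely, for a normalized feasible $\hat v \in \Gamma_{s_2}\cap\Psi^{-1}\C_{\mu_2}$ and any $w,w'\in\Gamma_{3s_1}$ one has $\A_\mathrm{R}(\hat v)w = \A(w\hat v^\transpose)$ with $w\hat v^\transpose\in\calM_{3s_1,s_2;\infty,\mu_2}$, $\fnorm{w\hat v^\transpose}=\norm{w}_2$, and $\langle w\hat v^\transpose, w'\hat v^\transpose\rangle=\langle w,w'\rangle$; hence (A4) gives $|\langle\A_\mathrm{R}(\hat v)w,\A_\mathrm{R}(\hat v)w'\rangle-\langle w,w'\rangle|\le\frac{\delta}{2}\norm{w}_2\norm{w'}_2$, i.e. $\A_\mathrm{R}(\hat v)$ satisfies the $3s_1$-RIP with constant $\delta/2$. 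Symmetrically, (A3) makes $\A_\mathrm{L}(\hat u)$ satisfy the $3s_2$-RIP whenever $\hat u\in\Gamma_{s_1}\cap\Phi^{-1}\C_{\mu_1}$ is normalized.

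Next I would set up the residual for the $u$-update. Writing $v=\alpha\hat v+v^\perp$ with $\hat v=v_{t-1}/\norm{v_{t-1}}_2$, $\alpha=\langle v,\hat v\rangle$ and $v^\perp\perp\hat v$, linearity of $\A$ gives $b=\A_\mathrm{R}(\hat v)(\alpha u)+r$ with $r=\A(u(v^\perp)^\transpose)+z$. Thus HTP is fed a forward operator with good RIP and the exactly $s_1$-sparse target $\alpha u$, so its convergence guarantee yields $\norm{\tilde u_t-\alpha u}_2\le C\norm{\Pi_S\A_\mathrm{R}(\hat v)^* r}_2$ for supports $S$ of size $O(s_1)$. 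The crux is bounding the signal part of this residual. For any $s_1$-sparse unit $w$, $\langle\A_\mathrm{R}(\hat v)w,\A(u(v^\perp)^\transpose)\rangle=\langle\A(w\hat v^\transpose),\A(u(v^\perp)^\transpose)\rangle$, where $w\hat v^\transpose\in\calM_{2s_1,s_2;\infty,\mu_2}$ and $u(v^\perp)^\transpose\in\calM_{s_1,2s_2;\mu_1,\infty}$, and crucially $\langle w\hat v^\transpose,u(v^\perp)^\transpose\rangle=\langle w,u\rangle\langle\hat v,v^\perp\rangle=0$ because $v^\perp\perp\hat v$. Hence the ROP hypothesis (A5) applies and bounds this cross term by $\frac{\delta}{2}\norm{u}_2\norm{v^\perp}_2$; the noise part is controlled by the RIP as $\le C\norm{z}_2$. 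Since $\norm{v^\perp}_2=\norm{v}_2\sin\theta^{(v)}_{t-1}$, this is exactly the mechanism by which the perpendicular error of $v_{t-1}$ is \emph{attenuated by a factor $\delta$} when it feeds back into the $u$-estimate; without ROP the residual $\A(u(v^\perp)^\transpose)$ would be of full size $\norm{u}_2\norm{v^\perp}_2$ and no contraction would follow.

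Then I would convert the coefficient error into an angle bound. Because $u$ lies in the constraint set onto which $\tilde u_t$ is projected, and because $|\alpha|=\norm{v}_2\cos\theta^{(v)}_{t-1}$ is bounded away from zero (guaranteed by the initialization condition (\ref{eq:conv_initcond}), $\sin\theta^{(v)}_0<0.98$, maintained inductively), the projection-and-normalization step yields $\sin\theta^{(u)}_t\le C\delta\sin\theta^{(v)}_{t-1}+C\,\norm{z}_2/\norm{\A(uv^\transpose)}_2$. The symmetric analysis of the $v$-update (using (A3) and (A5) with the roles of the factors exchanged, and the dual orthogonality $\langle\hat u,u^\perp\rangle=0$) gives $\sin\theta^{(v)}_t\le C\delta\sin\theta^{(u)}_t+C\,\norm{z}_2/\norm{\A(uv^\transpose)}_2$. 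Composing the two half-steps produces a single-sweep recursion $\sin\theta^{(v)}_t\le\rho\,\sin\theta^{(v)}_{t-1}+(\text{noise floor})$ with a contraction factor $\rho<1$ for the stated $\delta=\nu=0.02$; the smallness of $\delta$ and $\nu$ is precisely what keeps $\rho<1$ and keeps every iterate feasible with angle below the $0.98$ threshold, closing the induction. Unrolling gives linear convergence, $\sin\theta^{(v)}_t\le\rho^t\sin\theta^{(v)}_0+O(\norm{z}_2/\norm{\A(uv^\transpose)}_2)$, so that $\rho^t\sin\theta^{(v)}_0\le\epsilon$ once $t\ge 2.27\log(1/\epsilon)$, with $2.27\approx 1/\log(1/\rho)$.

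Finally I would pass from angles back to the lifted error: a standard estimate for rank-one matrices bounds $\fnorm{u_tv_t^\transpose-uv^\transpose}/\fnorm{uv^\transpose}$ by a constant multiple of $\sin\theta^{(u)}_t+\sin\theta^{(v)}_t$, which together with the previous display yields (\ref{eq:conv2set}) with the explicit constant $6.28$. The main obstacle I anticipate is the combination of Step two and the angle-conversion in Step three: one must (i) verify that the HTP fixed point is governed by $\norm{\Pi_S\A_\mathrm{R}(\hat v)^* r}_2$ rather than the larger $\norm{r}_2$, so that the ROP-based $O(\delta)$ cross-term bound, not the $O(1)$ RIP size of $\A(u(v^\perp)^\transpose)$, is what drives the contraction; and (ii) carry the feasibility-and-threshold invariant through the projection step so that the RIP/RAP/ROP hypotheses (A3)--(A5) remain applicable at every iteration. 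The accounting of absolute constants needed to certify $\rho<1$ at $\delta=\nu=0.02$ is delicate but routine once these two points are in place.
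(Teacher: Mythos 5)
Your route is the paper's route: (A3)/(A4) restricted to feasible rank-one pairs give the $(\Gamma_{3s_1},\delta)$- and $(\Gamma_{3s_2},\delta)$-RIP of the half-step operators $\A_\mathrm{R}(v_{t-1})$, $\A_\mathrm{L}(u_t)$; the ROP (A5) applied to the orthogonal pair $\breve{u}v_{t-1}^\transpose \perp u(P_{R(v_{t-1})^\perp}v)^\transpose$ is exactly what attenuates the fed-back error by a factor $\delta$; the HTP guarantee, a projection-doubling argument, and an angle recursion with induction give linear convergence (this is Lemma~\ref{lemma:update_uv} plus Lemma~\ref{lemma:convtheta} in the paper). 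However, two of your steps have concrete holes. The first is the projection step. In Algorithm~\ref{alg:altminprojbd} the projection is computed in the $\Phi$-domain, $u_t = \Phi^{-1}P_{\C_{\mu_1}\cap\Phi\Gamma_{s_1}}(\Phi\tilde{u}_t)$, so optimality of the projection only gives $\norm{\Phi u_t - \Phi\tilde{u}_t}_2 \leq \norm{\Phi(v_{t-1}^*v)u - \Phi\tilde{u}_t}_2$, hence $\norm{\Phi(u_t-(v_{t-1}^*v)u)}_2 \leq 2\norm{\Phi(\tilde{u}_t-(v_{t-1}^*v)u)}_2$. Converting this into the coefficient-domain bound your angle argument needs requires a two-sided RIP of $\Phi$ itself on $\Gamma_{2s_1}$, and that property is \emph{not} implied by (A1)--(A5), which constrain $\A$ rather than $\Phi$ alone. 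The paper must supply it by a separate argument: $\Phi u = e_1 \conv \Phi u$, so $\Phi$ is $\A_\mathrm{R}(e_1)$ for $\Psi = I_n$ with the spectrally flat $e_1$, and then \cite[Corollary~1.5]{LeeJunge2015} (i.e., the randomness of $\Phi$) yields the $(\Gamma_{2s_1},\delta)$-RIP. Your sketch treats the doubling as if it happened directly on $u$, and as written it stalls exactly at this point.

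The second hole is your final conversion from angles to the lifted error. The claimed estimate $\fnorm{u_tv_t^\transpose-uv^\transpose}/\fnorm{uv^\transpose} \leq C\left(\sin\theta^{(u)}_t+\sin\theta^{(v)}_t\right)$ is false in general: the algorithm normalizes $u_t$ but not $v_t$, so the angles carry no information about scale. For instance $u_t = u$ (unit norm) and $v_t = 2(u_t^*u)v$ makes both angles zero while the relative Frobenius error equals one. The correct assembly, which is what the paper does, decomposes $uv^\transpose - u_tv_t^\transpose = u_t(\langle u_t,u\rangle v - v_t)^\transpose + P_{R(u_t)^\perp}uv^\transpose$, applies the Pythagorean theorem, and bounds $\norm{v_t - \langle u_t,u\rangle v}_2$ -- magnitude as well as direction -- by the HTP-plus-projection $\ell_2$ estimate for the last half-step. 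You already derived that $\ell_2$ estimate in your second step, so the repair is available to you, but the final step as you state it does not yield (\ref{eq:conv2set}).
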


\begin{prop}
\label{prop:good_init}
Suppose (A1)-(A5) hold for $\delta = 0.02$ and $\nu = 0.02$.
Suppose that $\norm{u}_\infty \geq \gamma \norm{u}_2$ and $\norm{v}_\infty \geq \gamma \norm{v}_2$ for $\gamma = 0.78$.
Suppose that there exists $s_0 \in \{1,\ldots,s_2\}$ such that (\ref{eq:maxs0}) is satisfied for all $J \subset [n]$ of size $s_0$.
Let $v_0$ denote the output of Algorithm~\ref{alg:thres_proj} for the input $M = \A^*(b)$.
Then, $v_0$ satisfies (\ref{eq:feas_initcond}) and (\ref{eq:conv_initcond}).
\end{prop}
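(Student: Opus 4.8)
The plan is to treat the two claims of the proposition separately: feasibility~(\ref{eq:feas_initcond}) is structural and follows from the design of Algorithm~\ref{alg:thres_proj}, while the angle bound~(\ref{eq:conv_initcond}) is where the quantitative work lies. Feasibility is almost immediate. The hypothesis guarantees that some $s_0 \in \{1,\dots,s_2\}$ satisfies~(\ref{eq:maxs0}) uniformly over all size-$s_0$ index sets $J$, so the \texttt{while} loop of Algorithm~\ref{alg:thres_proj} terminates at a value $s_0 \le s_2$ for which~(\ref{eq:maxs0}) holds with $J=\widehat{J}_2$. As in the discussion preceding the algorithm, the two chained inequalities bounding $\norm{F\Psi v_0}_\infty$ and $\norm{F\Psi v_0}_2$ in terms of $\norm{F\Psi_{\widehat{J}_2}}_{1\to\infty}$ and $\sigma_{\min}(F\Psi_{\widehat{J}_2})$ convert~(\ref{eq:maxs0}) into $v_0 \in \Psi^{-1}\C_{\mu_2}$, and $s_0 \le s_2$ gives $v_0 \in \Gamma_{s_2}$; together these yield~(\ref{eq:feas_initcond}). (That the required $s_0$ exists is what the separate condition $\mu_2 \ge C\log n$ with $\Psi$ i.i.d.\ Gaussian secures; it is invoked when assembling Theorem~\ref{thm:altminbd} rather than in this proof.)

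For the angle bound I would first reduce the data matrix $M=\A^*(b)$ to its signal part. Writing $M = \A^*\A(uv^\transpose) + \A^*(z)$, I would test the RAP/ROP assumptions (A3)--(A5) against rank-one matrices $ab^\transpose$ supported on the data-dependent block $\widehat{J}_1 \times \widehat{J}_2$. Since $\langle \A(ab^\transpose),\A(uv^\transpose)\rangle = \langle ab^\transpose, \A^*\A(uv^\transpose)\rangle$, taking the supremum over unit-norm $a,b$ supported on $\widehat{J}_1,\widehat{J}_2$ turns these inner-product bounds directly into the operator-norm estimate
\[
\norm{\Pi_{\widehat{J}_1}\bigl(\A^*\A-\id\bigr)(uv^\transpose)\Pi_{\widehat{J}_2}}_{2\to 2} \le \delta\,\fnorm{uv^\transpose},
\]
while (A2), combined with the same bounds, controls $\Pi_{\widehat{J}_1}\A^*(z)\Pi_{\widehat{J}_2}$ by a constant multiple of $\nu\norm{\A(uv^\transpose)}_2$. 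The essential point is that $\widehat{J}_1,\widehat{J}_2$ are produced by the algorithm and are therefore random, so these estimates must hold \emph{uniformly} over all admissible support pairs; this is exactly what the set-indexed RAP/ROP of Section~\ref{subsec:riplike} provides, the enlarged sparsity levels $2s_i$ and $3s_i$ accounting for unions of the true and estimated supports. The upshot is that $\Pi_{\widehat{J}_1}M\Pi_{\widehat{J}_2}$ is, in operator norm, a small perturbation of the rank-one matrix $(\Pi_{\widehat{J}_1}u)(\Pi_{\widehat{J}_2}v)^\transpose$.

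The remaining and most delicate step turns heavy peakedness into an alignment guarantee, and requires a bootstrapping argument between support estimation and matrix approximation. Because $\Pi_{\widehat{J}_1}M\Pi_{\widehat{J}_2}$ is close to the clean block and the $\ell_2$ norm of the $i$th row of $uv^\transpose$ equals $|u_i|\,\norm{v}_2$, the row- and column-thresholding that form $\widehat{J}_1$ and $\widehat{J}_2$ must select indices whose signal mass dominates the perturbation; in particular, since $\norm{u}_\infty\ge\gamma\norm{u}_2$ and $\norm{v}_\infty\ge\gamma\norm{v}_2$ with $\gamma=0.78$, the single largest coordinate of $v$ survives in $\widehat{J}_2$. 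Writing $v_\star := \Pi_{\widehat{J}_2}v/\norm{\Pi_{\widehat{J}_2}v}_2$ for the normalized right factor of the clean block (up to the conjugation implicit in the singular-value convention, which is immaterial below because coordinate restriction commutes with conjugation), this alone forces
\[
\norm{P_{R(v_\star)^\perp}P_{R(v)}} = \frac{\norm{\Pi_{\widehat{J}_2^c}v}_2}{\norm{v}_2} \le \sqrt{1-\gamma^2} \approx 0.63 .
\]
Since $v_\star$ is the first right singular vector of the clean block, a singular-subspace perturbation bound (Wedin's $\sin\theta$ theorem) bounds the angle between $v_0$ and $v_\star$ by the perturbation of the preceding paragraph divided by the surviving singular value $\norm{\Pi_{\widehat{J}_1}u}_2\norm{\Pi_{\widehat{J}_2}v}_2 \ge \gamma^2\fnorm{uv^\transpose}$, which is small once $\delta=\nu=0.02$. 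The triangle inequality for the gap metric on lines, $\norm{P_{R(v_0)^\perp}P_{R(v)}} \le \norm{P_{R(v_0)^\perp}P_{R(v_\star)}} + \norm{P_{R(v_\star)^\perp}P_{R(v)}}$, then keeps the total strictly below $0.98$, giving~(\ref{eq:conv_initcond}). I expect the main obstacle to be precisely this interaction between the \emph{adaptive} support selection in the \texttt{while} loop and the uniform perturbation bounds: one must show that whatever triple $(\widehat{J}_1,\widehat{J}_2,s_0)$ the algorithm returns, the dominant coordinates are captured and the signal singular value stays bounded away from the noise floor, which is where the peakedness constant $\gamma$ and the smallness of $\delta,\nu$ are used in tandem.
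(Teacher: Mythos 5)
Your feasibility argument for (\ref{eq:feas_initcond}) and your uniform operator-norm perturbation bounds are exactly the paper's (they appear as Lemmas~\ref{lemma:riplike_cross} and \ref{lemma:riplike_noise}, proved from the RAP/ROP in just the way you describe), but for the angle bound (\ref{eq:conv_initcond}) you take a genuinely different route. The paper applies the non-Hermitian $\sin\theta$ theorem (Lemma~\ref{lemma:sintheta}) once, with $Z = \Pi_{\widehat{J}_1} u v^\transpose$ projected on the row side only, so that $R(Z^\transpose) = R(v)$ and the signal mass killed by the column projection, $\Pi_{\widehat{J}_1} u v^\transpose \Pi_{\widehat{J}_2}^\perp$, is absorbed into the perturbation $E$; this produces the one-shot bound of Lemma~\ref{lemma:7.10}, which controls $\norm{P_{R(v_0)^\perp}P_{R(v)}}$ directly in terms of $\norm{\Pi_{\widehat{J}_1}u}_2/\norm{u}_2$ and $\norm{\Pi_{\widehat{J}_2}^\perp v}_2/\norm{v}_2$. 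You instead apply Wedin to the doubly projected clean block to control $\angle(v_0,v_\star)$, where $v_\star$ is your normalized $\Pi_{\widehat{J}_2}v$, and then add $\angle(v_\star,v)$ via the gap-metric triangle inequality. Your decomposition is sound and arguably more modular; what the paper's buys is that it never needs the peak index $k_0$ of $v$ to lie in $\widehat{J}_2$: after establishing $j_0\in\widehat{J}_1$ (Lemma~\ref{lemma:7.12}), the greedy choice of $\widehat{J}_2$ yields the captured-mass bound (\ref{eq:init_step3}) with no claim about which individual columns are kept.

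The genuine gap is at the step you yourself flag as the main obstacle: the assertions that $\widehat{J}_1$ captures $j_0$ and, especially, that ``the single largest coordinate of $v$ survives in $\widehat{J}_2$'' are not proved, and your $\sqrt{1-\gamma^2}$ bound on $\sin\angle(v_\star,v)$ rests entirely on the latter. This is exactly where the paper's work lies: $j_0 \in \widehat{J}_1$ is Lemma~\ref{lemma:7.12} (under condition (\ref{eq:lemma:7.12}), with proof inherited from the thresholding analysis of \cite{LeeWB2013spf}), while on the column side the paper deliberately proves only the weaker statement (\ref{eq:init_step3}). Your stronger claim does hold for $\gamma = 0.78$, $\delta = \nu = 0.02$, but it needs a two-stage argument you do not supply: first $j_0\in\widehat{J}_1$, giving $\norm{\Pi_{\widehat{J}_1}u}_2 \geq \gamma\norm{u}_2$; then a per-column comparison using singleton versions of Lemmas~\ref{lemma:riplike_cross} and \ref{lemma:riplike_noise}, where the clean-norm gap between column $k_0$ and any competitor is at least $\gamma(\gamma - \sqrt{1-\gamma^2})\norm{u}_2\norm{v}_2 \approx 0.12\,\norm{u}_2\norm{v}_2$, against a total perturbation of at most $2[\delta+(1+\delta)\nu]\norm{u}_2\norm{v}_2 \approx 0.08\,\norm{u}_2\norm{v}_2$. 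Alternatively, you can avoid the claim altogether: substituting the paper's (\ref{eq:init_step3}) (which gives $\norm{\Pi_{\widehat{J}_2}v}_2 \geq 0.54\,\norm{v}_2$) into your triangle inequality yields roughly $0.84 + 0.08 < 0.98$, so your architecture survives with the weaker lemma. As written, though, the peak-capture claims are the crux of Proposition~\ref{prop:good_init} and remain unproven in your proposal.
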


\begin{rem}
The numerical constants for $\delta$, $\nu$, and $\gamma$ in Propositions~\ref{prop:conv_w_good_init} and \ref{prop:good_init} are not optimized.
\end{rem}

\begin{proof}[Proof of Theorem~\ref{thm:altminbd}]
We first show that the condition on $s_0$ in Proposition~\ref{prop:good_init} is satisfied with high probability.
Theorem~\ref{thm:altminbd} assumes that $\mu_2 \geq c_1 \log n$ for an absolute constant $c_1$
and $\Psi \in \cz^{n \times n}$ is a random matrix whose entries are i.i.d. following $CN(0,1/n)$.
Then, there exist absolute constants $c_2, c_3 > 0$ for which the followings hold with overwhelming probability $1-n^{-\beta}$:
i) By an RIP argument (e.g., \cite{baraniuk2008simple}),
we have $\sigma_{\min}(F \Psi_J) > c_2$ for all $J \subset [n]$ of size $|J|$ up to a certain threshold, proportional to $n$;
ii) By a union bound argument and the distribution of a Gaussian random variable,
we also have $\sqrt{n} \norm{F \Psi_J}_{1 \to \infty} \leq \sqrt{n} \norm{F \Psi}_{1 \to \infty} \leq c_3 \log n$ for all $J \subset [n]$.
These two results imply that there exists $s_0$ so that (\ref{eq:maxs0}) is satisfied for all $J \subset [n]$ of size $s_0$.
Note that the other assumptions of Propositions~\ref{prop:conv_w_good_init} and \ref{prop:good_init}
are satisfied directly or via Theorems~\ref{thm:rap} and \ref{thm:rop}.
Therefore, we have proved Theorem~\ref{thm:altminbd}.
\end{proof}

\subsection{Performance guarantee with an approximate projection}

Next, we establish that (under certain technical conditions) the performance guarantee in Proposition~\ref{prop:conv_w_good_init} holds
without requiring the exact computation for the projection steps in Algorithm~\ref{alg:approx_proj}.
Suppose that $u_t$ (resp. $v_t$) is obtained from $\tilde{u}_t$ (resp. $\tilde{v}_t$)
using an approximate projection, e.g., the alternating projections in Algorithm~\ref{alg:approx_proj},
instead of the exact projection in Line 7 (resp. Line 10) of Algorithm~\ref{alg:altminprojbd}.
Suppose that the following conditions are satisfied by the approximate projections in addition to the assumptions of Proposition~\ref{prop:conv_w_good_init}:
\begin{enumerate}
  \item $u_t \in \Gamma_{s_1} \cap \Phi^{-1} \C_{\mu_1}$ and one of the following conditions is satisfied:
  \begin{align*}
  \norm{u_t - \tilde{u}_t}_2 {} & \leq \norm{(v_{t-1}^* v) u - \tilde{u}_t}_2, \\
  \norm{\Phi u_t - \Phi \tilde{u}_t}_2 {} & \leq \norm{\Phi (v_{t-1}^* v) u - \Phi \tilde{u}_t}_2.
  \end{align*}

  \item $v_t \in \Gamma_{s_2} \cap \Psi^{-1} \C_{\mu_2}$ and one of the following conditions is satisfied:
  \begin{align*}
  \norm{v_t - \tilde{v}_t}_2 {} & \leq \norm{(u_t^* u) v - \tilde{v}_t}_2, \\
  \norm{\Psi v_t - \Psi \tilde{v}_t}_2 {} & \leq \norm{\Psi (u_t^* u) v - \Psi \tilde{v}_t}_2.
  \end{align*}
\end{enumerate}
Then, the performance guarantee in Proposition~\ref{prop:conv_w_good_init} remains valid
with the replacement of the exact projections by the approximate projections in the projection steps of Algorithm~\ref{alg:approx_proj}.
Note that the above conditions are satisfied by the exact projections in Algorithm\ref{alg:altminprojbd}.
However, it remains an open question to show whether the they are also satisfied by the alternating projections in Algorithm~\ref{alg:approx_proj}.

\section{Proofs}
\label{sec:proofs}

\subsection{Proof of Proposition~\ref{prop:stability}}

Let $X = u v^\transpose$ and $\widehat{X} = \hat{u} \hat{v}^\transpose$.
Without loss of generality, we may assume that $\hat{u}, v \in \mathbb{S}^{n-1}$.

Let $\rho := \hat{u}^* u$.
Note that
\begin{align}
{} & \norm{\A(\hat{u} \hat{v}^\transpose) - \A(u v^\transpose)}_2^2 \nonumber\\
{} & = \norm{\A(\hat{u} \hat{v}^\transpose) - \rho \A(\hat{u} v^\transpose) + \rho \A(\hat{u} v^\transpose) - \A(u v^\transpose)}_2^2 \nonumber\\
{} & = \norm{\A[\hat{u} (\hat{v}- \rho v)^\transpose] + \A[(\rho \hat{u} - u) v^\transpose]}_2^2 \nonumber\\
{} & = \norm{\A[\hat{u} (\hat{v}- \rho v)^\transpose]}_2^2 + \norm{\A[(\rho \hat{u} - u) v^\transpose]}_2^2 \nonumber\\
{} & \quad + 2 \mathrm{Re}\langle \A[\hat{u} (\hat{v}- \rho v)^\transpose], \A[(\rho \hat{u} - u) v^\transpose] \rangle. \label{eq:proof:prop:stability:lb1}
\end{align}

Since $\hat{u} \in \Gamma_{s_1} \cap \Phi^{-1} \C_{\mu_1}$ and $\hat{v} - \rho v \in \Gamma_{2s_2}$,
we have $\hat{u} (\hat{v}- \rho v)^\transpose \in \calS_{s_1,2s_2;\mu_1,\infty}$.
Therefore, the $(\calM_{s_1,2s_2;\mu_1,\infty},\delta/2)$-RIP of $\A$ implies
\begin{equation}
\label{eq:proof:prop:stability:lb1comp1}
(1-\delta/2) \fnorm{\hat{u} (\hat{v}-\rho v)^\transpose}^2
\leq \norm{\A[\hat{u} (\hat{v}- \rho v)^\transpose]}_2^2
\leq (1+\delta/2) \fnorm{\hat{u} (\hat{v}-\rho v)^\transpose}^2.
\end{equation}

Since $v \in \Gamma_{s_2} \cap \Psi^{-1} \C_{\mu_2}$ and $\rho \hat{u} - u \in \Gamma_{2s_1}$,
by the $(\calM_{2s_1,s_2;\infty,\mu_2},\delta/2)$-RIP of $\A$,
\begin{equation}
\label{eq:proof:prop:stability:lb1comp2}
(1-\delta/2) \fnorm{(\rho \hat{u} - u) v^\transpose}^2
\leq \norm{\A[(\rho \hat{u} - u) v^\transpose]}_2^2
\leq (1+\delta/2) \fnorm{(\rho \hat{u} - u) v^\transpose}^2.
\end{equation}

Since we assumed $\norm{\hat{u}}_2 = 1$, it follows that
\[
\rho \hat{u} = \hat{u} \hat{u}^* u = P_{R(\hat{u})} u.
\]
Therefore,
\[
\langle u - \rho \hat{u}, \hat{u} \rangle = \langle P_{R(\hat{u})^\perp} u, \hat{u} \rangle = 0.
\]

Similar to the previous two cases, the $(\calM_{s_1,2s_2;\mu_1,\infty},\calM_{2s_1,s_2;\infty,\mu_2},\delta/2)$-ROP of $\A$ implies
that the magnitude of the last term in (\ref{eq:proof:prop:stability:lb1}) is upper-bounded by
\begin{equation}
\label{eq:proof:prop:stability:lb1comp3}
|\langle \A[\hat{u} (\hat{v}- \rho v)^\transpose], \A[(\rho \hat{u} - u) v^\transpose] \rangle|
\leq (\delta/2) \norm{\hat{u} (\hat{v} - \rho v)^\transpose}_{\mathrm{F}} \norm{(\rho \hat{u} - u) v^\transpose}_{\mathrm{F}}.
\end{equation}

By plugging (\ref{eq:proof:prop:stability:lb1comp1}), (\ref{eq:proof:prop:stability:lb1comp2}), and (\ref{eq:proof:prop:stability:lb1comp3})
into (\ref{eq:proof:prop:stability:lb1}), we continue as follows:
\begin{align*}
{} & \norm{\A(\hat{u} \hat{v}^\transpose) - \A(u v^\transpose)}_2^2 \\
{} & \geq (1-\delta/2) \norm{\hat{u} (\hat{v} - \rho v)^\transpose}_{\mathrm{F}}^2 + (1-\delta/2) \norm{(\rho \hat{u} - u) v^\transpose}_{\mathrm{F}}^2 \\
{} & \quad - \delta \norm{\hat{u} (\hat{v} - \rho v)^\transpose}_{\mathrm{F}} \norm{(\rho \hat{u} - u) v^\transpose}_{\mathrm{F}} \\
{} & \geq (1-\delta/2) \norm{\hat{u} (\hat{v} - \rho v)^\transpose}_{\mathrm{F}}^2 + (1-\delta/2) \norm{(\rho \hat{u} - u) v^\transpose}_{\mathrm{F}}^2 \\
{} & \quad - (\delta/2) \left(\norm{\hat{u} (\hat{v} - \rho v)^\transpose}_{\mathrm{F}}^2 + \norm{(\rho \hat{u} - u) v^\transpose}_{\mathrm{F}}^2\right) \\
{} & = (1-\delta) \left(\norm{\hat{u} (\hat{v} - \rho v)^\transpose}_{\mathrm{F}}^2 + \norm{(\rho \hat{u} - u) v^\transpose}_{\mathrm{F}}^2\right) \\
{} & = (1-\delta) \norm{\hat{u} (\hat{v} - \rho v)^\transpose + (\rho \hat{u} - u) v^\transpose}_{\mathrm{F}}^2 \\
{} & = (1-\delta) \norm{\hat{u} \hat{v}^\transpose - u v^\transpose}_{\mathrm{F}}^2,
\end{align*}
where the second step holds by the inequality of arithmetic and geometric means,
and fourth step follows since $\langle (\rho \hat{u} - u) v^\transpose, \hat{u} (\hat{v} - \rho v)^\transpose \rangle = 0$.

By symmetry, we also have
\begin{align*}
{} & \norm{\A(\hat{u} \hat{v}^\transpose) - \A(u v^\transpose)}_2^2 \\
{} & \leq (1+\delta/2) \norm{\hat{u} (\hat{v} - \rho v)^\transpose}_{\mathrm{F}}^2 + (1+\delta/2) \norm{(\rho \hat{u} - u) v^\transpose}_{\mathrm{F}}^2 \\
{} & \quad + \delta \norm{\hat{u} (\hat{v} - \rho v)^\transpose}_{\mathrm{F}} \norm{(\rho \hat{u} - u) v^\transpose}_{\mathrm{F}} \\
{} & \leq (1+\delta) \norm{\hat{u} \hat{v}^\transpose - u v^\transpose}_{\mathrm{F}}^2.
\end{align*}

\subsection{Proof of Proposition~\ref{prop:conv_w_good_init}}

Due to the similarity between Algorithm~\ref{alg:altminprojbd} and the SPF algorithm,
the performance guarantee for iterative updates in Algorithm~\ref{alg:altminprojbd} is derived
by modifying the corresponding part in the analysis of SPF \cite[Theorem~III.6 and its proof in Section VII.A]{LeeWB2013spf}.

We first show the convergence of the angle between $u_t$ and $u$ (resp. $v_t$ and $v$) given by
\[
\sin[\angle(u_t, u)] = \norm{P_{R(u_t)^\perp} P_{R(u)}}
\quad
\text{and}
\quad
\sin[\angle(v_t, v)] = \norm{P_{R(v_t)^\perp} P_{R(v)}}.
\]

In the previous work \cite{LeeWB2013spf}, it was first shown that
the RIP of $\A$ for all rank-2 and $(3s_1,3s_2)$-sparse matrices implies
the $(\Gamma_{3s_2},\delta)$-RIP of $\A_\mathrm{L}(\hat{u})$ for all $\hat{u} \in \Gamma_{s_1}$
and the $(\Gamma_{3s_1},\delta)$-RIP of $\A_\mathrm{R}(\hat{v})$ for all $\hat{v} \in \Gamma_{s_2}$.
Then, by an RIP-based performance guarantee of HTP, the decay of the angle is implied these RIPs \cite[Lemma~VII.3]{LeeWB2013spf}.

However, in subsampled blind deconvolution, the linear operator $\A$ does not provide the aforementioned RIP.
Instead, it provides the RIP-like properties in (A3)--(A5).
These RIP-like properties imply the RIPs of $\A_\mathrm{L}(\hat{u})$ and $\A_\mathrm{R}(\hat{v})$ for all $\hat{u} \in \Gamma_{s_1}$ and $\hat{v} \in \Gamma_{s_2}$.
More precisely, the followings hold: i) $(\calM_{s_1,3s_2,\mu_1,\infty},\calM_{s_1,3s_2,\mu_1,\infty},\delta)$-RAP of $\A$ implies $(\Gamma_{3s_2},\delta)$-RIP of $\A_\mathrm{L}(\hat{u})$ for all $\hat{u} \in \mathbb{S}^{n-1} \cap \Gamma_{s_1} \cap \Psi^{-1} \C_{\mu_1}$; ii) $(\calM_{3s_1,s_2,\infty,\mu_2},\calM_{3s_1,s_2,\infty,\mu_2},\delta)$-RAP of $\A$ implies $(\Gamma_{3s_1},\delta)$-RIP of $\A_\mathrm{R}(\hat{v})$ for all $\hat{v} \in \mathbb{S}^{n-1} \cap \Gamma_{s_2} \cap \Psi^{-1} \C_{\mu_2}$.
Together with the ROP $(\calM_{s_1,2s_2,\mu_1,\infty},\calM_{2s_1,s_2,\infty,\mu_2},\delta)$-ROP of $\A$,
we obtain Lemma~\ref{lemma:update_uv}, which provides recursive formulas for the sequences of angles.

\begin{lemma}[{Analog of \cite[Lemma~VII.3]{LeeWB2013spf} for blind deconvolution}]
\label{lemma:update_uv}
Assume the setup of Proposition~\ref{prop:conv_w_good_init}.
The iterates $((u_t,v_t))_{t \in \mathbb{Z}^+}$ by Algorithm~\ref{alg:altminprojbd} satisfy
\begin{equation}
\label{eq:update_u}
\sin[\angle(u_t, u)] \leq \frac{2 C_\delta^{\text{\rm HTP}} \sqrt{1+\delta}}{\sqrt{1-\delta}} \left( \delta \tan[\angle(v_{t-1}, v)] + (1+\delta) \sec[\angle(v_{t-1}, v)] \frac{\norm{z}_2}{\norm{\A(u v^\transpose)}_2} \right)
\end{equation}
and
\begin{equation}
\label{eq:update_v}
\sin[\angle(v_t, v)] \leq \frac{2 C_\delta^{\text{\rm HTP}} \sqrt{1+\delta}}{\sqrt{1-\delta}} \left( \delta \tan[\angle(u_t, u)] + (1+\delta) \sec[\angle(u_t, u)] \frac{\norm{z}_2}{\norm{\A(u v^\transpose)}_2} \right),
\end{equation}
where $C_\delta^{\text{\rm HTP}}$ is defined by
\begin{equation}
\label{eq:CdeltaHTP}
C_\delta^{\text{\rm HTP}} :=
\frac{\sqrt{2(1-\delta)}+\sqrt{1+\delta}}{\sqrt{1-\delta}(\sqrt{1-\delta^2}-\sqrt{2\delta^2})}.
\end{equation}
\end{lemma}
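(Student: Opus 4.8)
The plan is to prove the two recursions by a symmetric argument, so I will focus on \eqref{eq:update_u}; \eqref{eq:update_v} follows by interchanging the roles of $u$ and $v$ (using (A4) in place of (A3), and the opposite orientation of the ROP (A5)). The starting point is the identity $\A(uv^\transpose) = \A_\mathrm{R}(v)u$ together with the orthogonal decomposition $v = (v_{t-1}^* v)\,v_{t-1} + P_{R(v_{t-1})^\perp}v$, valid since $\norm{v_{t-1}}_2 = 1$. As $\A_\mathrm{R}$ is linear in its argument, the measurement rewrites as
\[
b = \A_\mathrm{R}(v_{t-1})\bigl[(v_{t-1}^* v)u\bigr] + e, \qquad e := \A_\mathrm{R}(P_{R(v_{t-1})^\perp}v)\,u + z .
\]
Thus the HTP call in Algorithm~\ref{alg:altminprojbd} is exactly a sparse-recovery problem with forward operator $\A_\mathrm{R}(v_{t-1})$, ground-truth $s_1$-sparse vector $x^\star := (v_{t-1}^* v)u$, and effective noise $e$ consisting of a structured ``cross-talk'' term plus the genuine noise $z$.

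First I would convert the RAP hypotheses into ordinary RIP for the reduced operators. Because $\A_\mathrm{R}(v_{t-1})w = \A(w\,v_{t-1}^\transpose)$ and $v_{t-1} \in \Gamma_{s_2}\cap\Psi^{-1}\C_{\mu_2}$ (feasible by \eqref{eq:feas_initcond} and the projection steps), every $w\,v_{t-1}^\transpose$ with $w \in \Gamma_{3s_1}$ lies in $\calM_{3s_1,s_2;\infty,\mu_2}$ with $\fnorm{w\,v_{t-1}^\transpose} = \norm{w}_2$; hence (A4), specialized to equal arguments, yields the $(\Gamma_{3s_1},\delta)$-RIP of $\A_\mathrm{R}(v_{t-1})$. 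This is the reduction claimed before the lemma, and it is here that the feasibility bookkeeping (which factor carries the spectral-flatness index) is essential.

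The crux is bounding the HTP error, which by the RIP-based guarantee for HTP is controlled by the constant $C_\delta^{\mathrm{HTP}}$ times $\norm{(\A_\mathrm{R}(v_{t-1}))_T^* e}_2$ over supports $T$ of size at most $2s_1$. I would split $e$ and treat its pieces differently. For the genuine noise, $\norm{(\A_\mathrm{R}(v_{t-1}))_T^* z}_2 \leq \sqrt{1+\delta}\,\norm{z}_2$ directly from the RIP just established. For the cross-talk term a plain RIP bound only gives a factor $\sqrt{1+\delta}$, which is \emph{not} enough for contraction; the gain comes from near-orthogonality. Writing $w\,v_{t-1}^\transpose$ for the signal direction and $u\,(P_{R(v_{t-1})^\perp}v)^\transpose$ for the interference, a trace computation gives $\langle w\,v_{t-1}^\transpose,\ u\,(P_{R(v_{t-1})^\perp}v)^\transpose\rangle = (w^* u)\,\langle P_{R(v_{t-1})^\perp}v,\ v_{t-1}\rangle = 0$. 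Since $w\,v_{t-1}^\transpose \in \calM_{2s_1,s_2;\infty,\mu_2}$ and $u\,(P_{R(v_{t-1})^\perp}v)^\transpose \in \calM_{s_1,2s_2;\mu_1,\infty}$, the ROP (A5) supplies the essential $\delta$-factor
\[
\norm{(\A_\mathrm{R}(v_{t-1}))_T^*\,\A_\mathrm{R}(P_{R(v_{t-1})^\perp}v)u}_2 \leq \delta\,\norm{u}_2\norm{v}_2\,\sin[\angle(v_{t-1},v)],
\]
so that $\norm{\tilde{u}_t - x^\star}_2$ is at most $C_\delta^{\mathrm{HTP}}$ times $\bigl(\delta\,\norm{u}_2\norm{v}_2\sin[\angle(v_{t-1},v)] + \sqrt{1+\delta}\,\norm{z}_2\bigr)$, up to the RIP normalization factors that the HTP analysis carries.

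Finally I would pass from $\tilde{u}_t$ to $u_t$ and then to the angle. The projection onto $\Gamma_{s_1}\cap\Phi^{-1}\C_{\mu_1}$, together with the fact that $x^\star = (v_{t-1}^* v)u$ already lies in this scale-invariant set, gives $\norm{u_t - x^\star}_2 \leq 2\norm{\tilde{u}_t - x^\star}_2$ by a triangle inequality with the projection's optimality — the source of the leading factor $2$. Since $x^\star$ is parallel to $u$ with $\norm{x^\star}_2 = |v_{t-1}^* v|\,\norm{u}_2 = \cos[\angle(v_{t-1},v)]\,\norm{u}_2\norm{v}_2$, the elementary bound $\sin[\angle(u_t,u)] \leq \norm{u_t - x^\star}_2/\norm{x^\star}_2$ converts the $\sin$ cross-talk into $\tan[\angle(v_{t-1},v)]$ and the noise into $\sec[\angle(v_{t-1},v)]$; normalizing the noise by $\norm{\A(uv^\transpose)}_2$ via the RIP bound $\norm{u}_2\norm{v}_2 \geq \norm{\A(uv^\transpose)}_2/\sqrt{1+\delta}$ produces the $(1+\delta)$ factor, while tracking the HTP-constant and the $\sqrt{1-\delta}$ from the least-squares step yields the stated $\sqrt{1+\delta}/\sqrt{1-\delta}$ prefactor. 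The main obstacle throughout is precisely the cross-talk estimate: a plain RIP bound is too lossy, and it is only the \emph{orthogonality} of the lifted signal and interference matrices — hence the ROP rather than the RIP — that supplies the $\delta$ contraction factor making the coupled recursion \eqref{eq:update_u}--\eqref{eq:update_v} a genuine contraction near the solution.
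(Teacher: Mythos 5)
Your overall route coincides with the paper's: the decomposition $b = [\A_\mathrm{R}(v_{t-1})]\{(v_{t-1}^* v)u\} + [\A_\mathrm{R}(P_{R(v_{t-1})^\perp}v)]u + z$, the reduction of the RAP in (A4) to the $(\Gamma_{3s_1},\delta)$-RIP of $\A_\mathrm{R}(v_{t-1})$, the ROP-based cross-talk bound $\delta \norm{u}_2 \norm{P_{R(v_{t-1})^\perp}v}_2$ exploiting orthogonality of the lifted matrices, the HTP guarantee with constant $C_\delta^{\mathrm{HTP}}$, the lower bound $\norm{u_t - (v_{t-1}^*v)u}_2 \geq \sin[\angle(u_t,u)]\norm{u}_2\cos[\angle(v_{t-1},v)]\norm{v}_2$, and the final normalization $\norm{u}_2\norm{v}_2 \geq \norm{\A(uv^\transpose)}_2/\sqrt{1+\delta}$. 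Up to the projection step, the proposal is sound and matches the paper.

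The genuine gap is your treatment of that projection step. You claim $\norm{u_t - x^\star}_2 \leq 2\norm{\tilde u_t - x^\star}_2$ ``by a triangle inequality with the projection's optimality.'' That would be valid if $u_t$ were the Euclidean projection of $\tilde u_t$ onto $\Gamma_{s_1}\cap\Phi^{-1}\C_{\mu_1}$ in the coefficient domain, but line 7 of Algorithm~\ref{alg:altminprojbd} computes $u_t = \Phi^{-1}P_{\C_{\mu_1}\cap\Phi\Gamma_{s_1}}(\Phi\tilde u_t)$: the projection is Euclidean in the $x = \Phi u$ domain. Optimality therefore only yields $\norm{\Phi u_t - \Phi\tilde u_t}_2 \leq \norm{\Phi x^\star - \Phi\tilde u_t}_2$, hence $\norm{\Phi u_t - \Phi x^\star}_2 \leq 2\norm{\Phi\tilde u_t - \Phi x^\star}_2$ --- a factor of $2$ in the $\Phi$-domain, not in the coefficient domain. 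Transferring this back requires a two-sided $(\Gamma_{2s_1},\delta)$-RIP of $\Phi$ itself, and that is exactly where the paper's prefactor $\sqrt{1+\delta}/\sqrt{1-\delta}$ originates:
\begin{equation*}
\norm{u_t - x^\star}_2 \leq \frac{\norm{\Phi(u_t - x^\star)}_2}{\sqrt{1-\delta}} \leq \frac{2\norm{\Phi(\tilde u_t - x^\star)}_2}{\sqrt{1-\delta}} \leq \frac{2\sqrt{1+\delta}}{\sqrt{1-\delta}}\norm{\tilde u_t - x^\star}_2.
\end{equation*}
Establishing this RIP of $\Phi$ is a separate, nontrivial ingredient: the paper observes $\Phi u = e_1 \conv \Phi u$, so $\Phi$ is $\A_\mathrm{R}(e_1)$ with $\Psi = I_n$ and $e_1 \in \C_{\mu}$ for any $\mu \geq 1$, and then invokes Corollary~1.5 of the companion paper. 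Your proposal instead attributes the $\sqrt{1+\delta}/\sqrt{1-\delta}$ factor to ``the least-squares step'' of HTP, whose constants are already absorbed in $C_\delta^{\mathrm{HTP}}$; as written, the inequality $\norm{u_t - x^\star}_2 \leq 2\norm{\tilde u_t - x^\star}_2$ is unjustified for the algorithm actually being analyzed, and the missing RIP of $\Phi$ is the idea needed to close the argument.
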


In fact, the choice of $\delta$ in Proposition~\ref{prop:conv_w_good_init} implies $2 \delta C_\delta^{\text{\rm HTP}} \sqrt{1+\delta} / \sqrt{1-\delta} < 1$.

\begin{proof}[Proof of Lemma~\ref{lemma:update_uv}]

We only prove the part in (\ref{eq:update_u}) that bounds the estimation error in $u_t$.
The proof for the other part in (\ref{eq:update_v}) follows directly by symmetry.

Recall that the previous iterate $v_{t-1}$ is set to satisfy $v_{t-1} \in \Psi^{-1} \C_{\mu_2} \cap \Gamma_{s_2}$.
(This is assumed for $v_0$ and satisfied by the projection step for $t \geq 1$.)
Furthermore, $v_{t-1}$ is normalized to have the unit $\ell_2$ norm, i.e., $v_{t-1} \in \mathbb{S}^{n-1}$.

The measurement vector $b$ is rewritten as
\begin{align*}
b {} & = \A(u v^\transpose) + z = [\A_\mathrm{R}(v)] u + z \\
{} & = [\A_\mathrm{R}(P_{R(v_{t-1})} v + P_{R(v_{t-1})^\perp} v)] u + z \\
{} & = [\A_\mathrm{R}(P_{R(v_{t-1})} v)] u  + [\A_\mathrm{R}(P_{R(v_{t-1})^\perp} v)] u + z \\
{} & = [\A_\mathrm{R}(v_{t-1})] \{(v_{t-1}^* v) u\} + [\A_\mathrm{R}(P_{R(v_{t-1})^\perp} v)] u + z \pl.
\end{align*}

Since $v_{t-1} \in \mathbb{S}^{n-1} \cap \Gamma_{s_2} \cap \Psi^{-1} \C_{\mu_2}$,
it follows from the $(\calM_{3s_1,s_2,\infty,\mu_2},\calM_{3s_1,s_2,\infty,\mu_2},\delta)$-RAP of $\A$
that $\A_\mathrm{R}(v_{t-1})$ satisfies the $(\Gamma_{3s_1},\delta)$-RIP.

Note that $P_{R(v_{t-1})^\perp} v \in \Gamma_{2s_2}$ follows from $v,v_{t-1} \in \Gamma_{s_2}$.
Let $J_1$ be an arbitrary subset of $[n]$ with $|J_1| \leq s_1$.
Then, we have
\begin{equation}
\label{eq:proof:lemma:update_uv:ub_angle}
\begin{aligned}
{} & \norm{\Pi_{J_1} [\A_\mathrm{R}(v_{t-1})]^* [\A_\mathrm{R}(P_{R(v_{t-1})^\perp} v)] u}_2 \\
{} & = \max_{\breve{u} \in \mathbb{S}^{n-1} \cap \Gamma_{s_1}}
\left| \langle \breve{u},[\A_\mathrm{R}(v_{t-1})]^* [\A_\mathrm{R}(P_{R(v_{t-1})^\perp} v)] u \rangle \right| \\
{} & = \max_{\breve{u} \in \mathbb{S}^{n-1} \cap \Gamma_{s_1}}
\left| \langle \A(\breve{u} v_{t-1}^*), \A(u v^* P_{R(v_{t-1})^\perp}) \rangle \right| \\
{} & \leq \delta \fnorm{\breve{u} v_{t-1}^*} \fnorm{u v^* P_{R(v_{t-1})^\perp}} \\
{} & = \delta \norm{u}_2 \norm{P_{R(v_{t-1})^\perp} v}_2,
\end{aligned}
\end{equation}
where the inequality holds by $(\calM_{s_1,2s_2,\mu_1,\infty},\calM_{s_1,s_2,\infty,\mu_2},\delta)$-ROP of $\A$
since $\breve{u} v_{t-1}^* \in \calM_{s_1,s_2,\infty,\mu_2}$, $u v^* P_{R(v_{t-1})^\perp} \in \calM_{s_1,2s_2,\mu_1,\infty}$,
and $\langle \breve{u} v_{t-1}^*, u v^* P_{R(v_{t-1})^\perp} \rangle = 0$.

An RIP-based performance guarantee of HTP \cite[Theorem~3.8]{Fou2011htp}\footnote{In fact, it is also possible to show that HTP converges in finite steps to $\tilde{u}_t$ with a comparable error bound, where only $C_\delta^{\text{\rm HTP}}$ increases by a constant factor (c.f. \cite[Lemma~7.1]{LeeWB2013spf}).} implies that the iterates in HTP converges to $\tilde{u}_t$ satisfying
\[
\norm{\tilde{u}_t - (v_{t-1}^* v) u}_2
\leq C_\delta^{\text{\rm HTP}} \norm{\Pi_{\widehat{J}} [\A_\mathrm{R}(v_{t-1})]^* ([\A_\mathrm{R}(P_{R(v_{t-1})^\perp} v)] u + z)}_2,
\]
where $\widehat{J}$ denotes the support of $\tilde{u}_t$ and the constant $C_\delta^{\text{\rm HTP}}$ is given by
\begin{equation}
\label{eq:Cdelta}
C_\delta^{\text{\rm HTP}} =
\frac{\sqrt{2(1-\delta)}+\sqrt{1+\delta}}{\sqrt{1-\delta}(\sqrt{1-\delta^2}-\sqrt{2\delta^2})}.
\end{equation}

Therefore, (\ref{eq:proof:lemma:update_uv:ub_angle}) and the $(\Gamma_{3s_1},\delta)$-RIP of $\A_\mathrm{R}(v_{t-1})$ provide
\begin{equation}
\label{eq:proof:lemma:update_uv:bndmse}
\norm{\tilde{u}_t - (v_{t-1}^* v) u}_2
\leq C_\delta^{\text{\rm HTP}} (\delta \norm{u}_2 \norm{P_{R(v_{t-1})^\perp} v}_2 + \sqrt{1+\delta} \norm{z}_2).
\end{equation}

Recall that $u_t = \Phi^{-1} P_{\C_{\mu_1} \cap \Phi \Gamma_{s_1}} (\Phi \tilde{u}_t)$.
By the optimality of the projection and the fact that $u \in \Phi^{-1} \C_{\mu_1} \cap \Gamma_{s_1}$,
it follows that
\[
\norm{\Phi u_t - \Phi \tilde{u}_t}_2 \leq \norm{\Phi (v_{t-1}^* v) u - \Phi \tilde{u}_t}_2.
\]

We show that $\Phi$ satisfies $(\Gamma_{2s_1},\delta)$-RIP under the assumption of Proposition~\ref{prop:conv_w_good_init}.
In fact, since
\[
\Phi u = e_1 \conv \Phi u,
\]
$\Phi$ corresponds to $\A_\mathrm{R}(e_1)$ where $\Psi = I_n$.
Since $\norm{F e_1}_\infty = n^{-1/2}$, $e_1 \in \C_{\mu_2}$ for any $\mu_2 \geq 1$.
Therefore, by \cite[Corollary~1.5]{LeeJunge2015}, $\Phi$ satisfies $(\Gamma_{2s_1},\delta)$-RIP.

By the above results, we have
\begin{equation}
\label{eq:proof:lemma:update_uv:ubmse}
\begin{aligned}
{} & \norm{\Phi u_t - \Phi (v_{t-1}^* v) u}_2 \\
{} & \leq \norm{\Phi u_t - \Phi \tilde{u}_t}_2 + \norm{\Phi \tilde{u}_t - \Phi (v_{t-1}^* v) u}_2 \\
{} & \leq 2 \norm{\Phi \tilde{u}_t - \Phi (v_{t-1}^* v) u}_2 \\
{} & = 2 \norm{\Phi \{\tilde{u}_t - (v_{t-1}^* v) u\}}_2 \\
{} & \leq 2 \sqrt{1+\delta} \norm{\tilde{u}_t - (v_{t-1}^* v) u}_2,
\end{aligned}
\end{equation}
where the last step follows from the fact that $\tilde{u}_t - (v_{t-1}^* v) u \in \Gamma_{2s_1}$ and $\Phi$ satisfies $(\Gamma_{2s_1},\delta)$-RIP.

On the other hand, since $u_t - (v_{t-1}^* v) u \in \Gamma_{2s_1}$, we have
\begin{equation}
\label{eq:proof:lemma:update_uv:lbmse}
\norm{\Phi u_t - \Phi (v_{t-1}^* v) u}_2 \geq \sqrt{1-\delta} \norm{u_t - (v_{t-1}^* v) u}_2.
\end{equation}

Combining (\ref{eq:proof:lemma:update_uv:bndmse}), (\ref{eq:proof:lemma:update_uv:ubmse}), and (\ref{eq:proof:lemma:update_uv:lbmse}) gives
\begin{equation}
\label{eq:proof:lemma:update_uv:bnd2}
\begin{aligned}
{} & \norm{u_t - (v_{t-1}^* v) u}_2 \\
{} & \leq \frac{2 C_\delta^{\text{\rm HTP}} \sqrt{1+\delta}}{\sqrt{1-\delta}}
(\delta \norm{u}_2 \norm{P_{R(v_{t-1})^\perp} v}_2 + \sqrt{1+\delta} \norm{z}_2) \\
{} & \leq \frac{2 C_\delta^{\text{\rm HTP}} \sqrt{1+\delta}}{\sqrt{1-\delta}}
(\delta \norm{u}_2 \sin[\angle(v_{t-1},v)] \norm{v}_2 + \sqrt{1+\delta} \norm{z}_2).
\end{aligned}
\end{equation}

The left-hand-side of (\ref{eq:proof:lemma:update_uv:bnd2}) is bounded from below as
\begin{equation}
\label{eq:proof:lemma:update_uv:bnd3}
\norm{u_t - (v_{t-1}^* v) u}_2
\geq \norm{P_{R(u_t)^\perp} (v_{t-1}^* v) u}_2
= \sin[\angle(u_t,u)] \norm{u}_2 \cos[\angle(v_{t-1},v)] \norm{v}_2.
\end{equation}

Combining (\ref{eq:proof:lemma:update_uv:bnd2}) and (\ref{eq:proof:lemma:update_uv:bnd3}) gives
\[
\sin[\angle(u_t,u)]
\leq \frac{2 C_\delta^{\text{\rm HTP}} \sqrt{1+\delta}}{\sqrt{1-\delta}}
\left(\delta \tan[\angle(v_{t-1},v)] + \sqrt{1+\delta} \frac{\norm{z}_2}{\norm{u}_2 \norm{v}_2} \sec[\angle(v_{t-1},v)]\right).
\]

Finally, by Theorem~\ref{prop:stability}, we have
\[
\frac{\norm{z}_2}{\norm{u}_2 \norm{v}_2}
= \frac{\norm{z}_2}{\norm{u v^\transpose}_\mathrm{F}}
\leq \sqrt{1+\delta} \frac{\norm{z}_2}{\norm{\A(u v^\transpose)}_2},
\]
which finishes the proof.
\end{proof}

Lemma~\ref{lemma:update_uv} implies the convergence of the angle via the following lemma.

\begin{lemma}[{\cite[Lemma~7.7]{LeeWB2013spf}}]
\label{lemma:convtheta}
Assume the setup of Proposition~\ref{prop:conv_w_good_init}.
Define
\begin{equation}
\label{eq:defOmega}
\Omega := \left\{ \omega \in [0, \pi/2) : \sin \omega \geq
\frac{2 C_\delta^{\text{\rm HTP}} \sqrt{1+\delta}}{\sqrt{1-\delta}} \left[ \delta \tan \omega + (1+\delta) \nu \sec \omega \right] \right\}.
\end{equation}
(Note that $\Omega$ is not empty when $\delta$ and $\nu$ are set as in Proposition~\ref{prop:conv_w_good_init}.)
If $\angle (v_0,v) < \sup \Omega$, then
\[
\lim_{t \to \infty} \max(0, \angle(u_t, u) - \inf \Omega) = 0
\]
and the convergence rate is linear.
Moreover,
\[
\sin (\inf \Omega) \leq C_1 \frac{\norm{z}_2}{\norm{\A(u v^\transpose)}_2}.
\]
where $C_1$ given by
\[
C_1 = \frac{\sqrt{1-\delta}(1+\delta)}{\sqrt{1-\delta} \cos (\inf \Omega) - 2 \delta C_\delta^{\text{\rm HTP}} \sqrt{1+\delta}}.
\]
\end{lemma}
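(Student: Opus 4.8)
The plan is to collapse the two intertwined angle recursions of Lemma~\ref{lemma:update_uv} into the dynamics of a single scalar map and then carry out a fixed-point analysis. Abbreviate $r := \norm{z}_2/\norm{\A(u v^\transpose)}_2$, so that (A2) gives $r \leq \nu$, and write $K := 2 C_\delta^{\text{\rm HTP}} \sqrt{1+\delta}/\sqrt{1-\delta}$ together with $g(\omega) := K[\delta \tan\omega + (1+\delta)\nu\sec\omega]$, the right-hand side of (\ref{eq:defOmega}). Feeding $r \leq \nu$ into (\ref{eq:update_u}) and (\ref{eq:update_v}) yields the scalar bounds $\sin\angle(u_t,u) \leq g(\angle(v_{t-1},v))$ and $\sin\angle(v_t,v) \leq g(\angle(u_t,u))$. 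Since $g$ is increasing on $[0,\pi/2)$ and $g(\omega) < g(\sup\Omega) = \sin(\sup\Omega) < 1$ for every $\omega < \sup\Omega$, I may set $h := \arcsin \circ\, g$ and restate the recursions as $\angle(u_t,u) \leq h(\angle(v_{t-1},v))$ and $\angle(v_t,v) \leq h(\angle(u_t,u))$, with $h$ monotonically increasing and well defined on $[0,\sup\Omega)$.

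Next I would dominate both angle sequences by the orbit of a single scalar iteration. Setting $b_0 := \angle(v_0,v)$ and $b_{k+1} := h(b_k)$, monotonicity of $h$ and a one-line induction give $\angle(u_t,u) \leq b_{2t-1}$ and $\angle(v_t,v) \leq b_{2t}$, so the entire problem reduces to understanding $b_k$. The fixed points of $h$ in $[0,\pi/2)$ are exactly the solutions of $\sin\omega = g(\omega)$, i.e.\ the boundary of $\Omega$. Writing $\phi(\omega) := \sin\omega - g(\omega)$, I note $\phi(0) = -K(1+\delta)\nu < 0$ and $\phi(\omega)\to-\infty$ as $\omega\to\pi/2$, so for the parameters of Proposition~\ref{prop:conv_w_good_init} (which make $\Omega$ nonempty) $\phi$ is positive precisely on $(\inf\Omega,\sup\Omega)$, and $\inf\Omega<\sup\Omega$ are the two fixed points. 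On $(\inf\Omega,\sup\Omega)$ one has $h(\omega)<\omega$ and on $[0,\inf\Omega)$ one has $h(\omega)>\omega$; together with $h(\inf\Omega)=\inf\Omega$ and monotonicity, this shows that whenever $b_0<\sup\Omega$ the orbit stays in $[0,\sup\Omega)$, is monotone, and converges to $\inf\Omega$. Hence $\angle(v_0,v)<\sup\Omega$ forces $\limsup_t\angle(u_t,u)\leq\inf\Omega$, that is $\max(0,\angle(u_t,u)-\inf\Omega)\to0$, and likewise for $v_t$.

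For the linear rate I would examine $h$ near the attracting fixed point: because $g(\inf\Omega)=\sin(\inf\Omega)$ gives $\sqrt{1-g(\inf\Omega)^2}=\cos(\inf\Omega)$, we have $h'(\inf\Omega)=g'(\inf\Omega)/\cos(\inf\Omega)$. The fact that $\phi$ crosses zero from below at $\inf\Omega$ means $\phi'(\inf\Omega)\geq0$, i.e.\ $g'(\inf\Omega)\leq\cos(\inf\Omega)$, which is strict for the chosen $\delta,\nu$; thus $h'(\inf\Omega)<1$, $h$ is a local contraction, and $b_k-\inf\Omega$ decays geometrically, yielding the asserted linear convergence (and underwriting the $t\geq 2.27\log(1/\epsilon)$ count used in Proposition~\ref{prop:conv_w_good_init}). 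For the error bound I would rerun the fixed-point computation driven by the true ratio $r$ instead of $\nu$: solving $\sin\omega = K[\delta\tan\omega+(1+\delta)r\sec\omega]$ and multiplying by $\cos\omega$ gives $\sin\omega(\cos\omega-K\delta)=K(1+\delta)r$, so the limiting angle $\omega^\star$ obeys $\sin\omega^\star = \tfrac{K(1+\delta)}{\cos\omega^\star-K\delta}\,r$. Since $r\leq\nu$ forces $\omega^\star\leq\inf\Omega$, hence $\cos\omega^\star\geq\cos(\inf\Omega)$, bounding the denominator below by $\cos(\inf\Omega)-K\delta$ and substituting the value of $K$ produces $\sin(\inf\Omega)\leq C_1\,r$ with $C_1$ of the stated form.

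The main obstacle is the monotone-iteration reduction together with the basin analysis: one must verify that the orbit of $h$ genuinely remains below $\sup\Omega$ and converges to $\inf\Omega$ from either side, and confirm the strict contraction $h'(\inf\Omega)<1$ at the specified numerical constants. The alternation between the $u$- and $v$-updates (handled cleanly by interleaving into one scalar orbit) and the bookkeeping of running the basin argument with the conservative $\nu$ while recovering the sharper $r$-dependence in the final bound are the points demanding the most care.
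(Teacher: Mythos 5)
First, a point of reference: the paper never proves this lemma at all --- it is imported verbatim as \cite[Lemma~7.7]{LeeWB2013spf} --- so there is no internal proof to compare against, and your argument must stand on its own. Its skeleton (collapse the two recursions of Lemma~\ref{lemma:update_uv} into the scalar map $h=\arcsin\circ g$, interleave the $u$- and $v$-angles into a single orbit, and read off convergence and rate from the fixed points of $h$) is the right one, and the interleaving induction is handled correctly. However, one load-bearing claim is asserted rather than proved: that $\sin\omega-g(\omega)$ is positive \emph{precisely} on $(\inf\Omega,\sup\Omega)$, equivalently that $\Omega$ is a single interval and $h$ has exactly the two fixed points $\inf\Omega,\sup\Omega$. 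Knowing only that this function is negative at $0$, tends to $-\infty$ at $\pi/2$, and is somewhere nonnegative does not rule out $\Omega$ having several components, and your entire basin analysis ($h>\mathrm{id}$ below $\inf\Omega$, $h<\mathrm{id}$ between the endpoints, hence monotone convergence to $\inf\Omega$) rests on it. The missing step is short and should be supplied: writing $K:=2C_\delta^{\text{\rm HTP}}\sqrt{1+\delta}/\sqrt{1-\delta}$, multiplying the defining inequality by $\cos\omega>0$ and substituting $x=\sin\omega$ turns it into $x\sqrt{1-x^2}-K\delta x-K(1+\delta)\nu\geq 0$, whose left side is concave in $x$ on $[0,1]$; hence its superlevel set is an interval. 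Strict concavity on $(0,1)$ also yields the strict inequality $g'(\inf\Omega)<\cos(\inf\Omega)$, i.e.\ $h'(\inf\Omega)<1$, which you likewise only asserted.

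The ``Moreover'' step contains a genuine non sequitur. Your computation bounds $\sin\omega^\star$, where $\omega^\star$ is the left endpoint of the region defined with the true ratio $r$ in place of $\nu$; since $r\leq\nu$ gives $\omega^\star\leq\inf\Omega$, an upper bound on $\sin\omega^\star$ implies nothing about $\sin(\inf\Omega)$ --- the substitution runs in the wrong direction. What your computation does establish is the statement the lemma is actually used for in Proposition~\ref{prop:conv_w_good_init}, namely $\limsup_t\sin[\angle(u_t,u)]\leq C_1\, r$, because the recursions of Lemma~\ref{lemma:update_uv} are driven by $r$, not $\nu$. Indeed, the literal claim $\sin(\inf\Omega)\leq C_1 r$ with $\Omega$ defined through $\nu$ cannot hold when $r<\nu$ (take $z=0$: then $\inf\Omega>0$ while the right side vanishes), so the lemma only makes sense with $\nu$ read as the exact ratio $\norm{z}_2/\norm{\A(uv^\transpose)}_2$, in which case the bound is just the fixed-point identity $\sin(\inf\Omega)\,[\cos(\inf\Omega)-K\delta]=K(1+\delta)\nu$. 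Finally, your claim to recover ``$C_1$ of the stated form'' is inaccurate: the computation produces the numerator $2C_\delta^{\text{\rm HTP}}(1+\delta)^{3/2}$, not $\sqrt{1-\delta}(1+\delta)$, i.e.\ a constant larger than the printed $C_1$ by the factor $2C_\delta^{\text{\rm HTP}}\sqrt{1+\delta}/\sqrt{1-\delta}\approx 5$ at $\delta=0.02$. The fixed-point identity shows the larger constant is the correct one (with the printed $C_1$ the bound already fails at $r=\nu$), so the stated $C_1$ appears to be a transcription error from \cite{LeeWB2013spf}; a sound proof must flag this discrepancy rather than claim to reproduce the printed constant.
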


Then, it remains to bound the estimation error $\fnorm{u v^\transpose - u_t v_t^\transpose}$.
Without loss of generality, we may assume $\norm{u}_2 = 1$.
When $v_t$ is updated, $u_t$ is normalized in the $\ell_2$ norm.
The error is decomposed as
\[
u v^\transpose - u_t v_t^\transpose
= P_{R(u_t)} u v^\transpose - u_t v_t^\transpose + P_{R(u_t)^\perp} u v^\transpose
= u_t (\langle u_t, u \rangle v - v_t)^\transpose + P_{R(u_t)^\perp} u v^\transpose.
\]

As shown in the proof of Lemma~\ref{lemma:update_uv}, we have
\[
\norm{v_t - \langle u_t, u \rangle v}_2
\leq \frac{2 C_\delta^{\text{\rm HTP}} \sqrt{1+\delta}}{\sqrt{1-\delta}} (\delta \norm{P_{R(u_t)^\perp} u}_2 \norm{v}_2 + \sqrt{1+\delta} \norm{z}_2).
\]

Therefore, via the Pythagorean theorem, we derive an upper bound of the estimation error given by
\begin{align*}
\fnorm{u v^\transpose - u_t v_t^\transpose}^2
{} & = \fnorm{u_t (\langle u_t, u \rangle v - v_t)^\transpose}^2 + \fnorm{P_{R(u_t)^\perp} u v^\transpose}^2 \\
{} & = \norm{\langle u_t, u \rangle v - v_t}_2^2 + \norm{P_{R(u_t)^\perp} u}_2 \norm{v}_2^2 \\
{} & \leq \left[1 + \left(\frac{2 \delta C_\delta^{\text{\rm HTP}} \sqrt{1+\delta}}{\sqrt{1-\delta}}\right)^2\right] \norm{P_{R(u_t)^\perp} P_{R(u)}}^2 \fnorm{u v^\transpose}^2 + 4 (1+\delta) (C_\delta^{\text{\rm HTP}})^2 \norm{z}_2^2,
\end{align*}
which implies
\[
\fnorm{u v^\transpose - u_t v_t^\transpose}
\leq \sqrt{1 + \left(\frac{2 \delta C_\delta^{\text{\rm HTP}} \sqrt{1+\delta}}{\sqrt{1-\delta}}\right)^2} \norm{P_{R(u_t)^\perp} P_{R(u)}} \fnorm{u v^\transpose} + 2 \sqrt{1+\delta} C_\delta^{\text{\rm HTP}} \norm{z}_2.
\]

Finally, by applying Lemma~\ref{lemma:convtheta}, we get
\begin{align*}
{} & \limsup_{t \to \infty} 
\frac{\fnorm{u v^\transpose - u_t v_t^\transpose}}{\fnorm{u v^\transpose}} \\
{} & \leq \limsup_{t \to \infty} \sqrt{1 + \left(\frac{2 \delta C_\delta^{\text{\rm HTP}} \sqrt{1+\delta}}{\sqrt{1-\delta}}\right)^2} \norm{P_{R(u_t)^\perp} P_{R(u)}}
+ 2 (1+\delta) C_\delta^{\text{\rm HTP}} \frac{\norm{z}_2}{\norm{\A(u v^\transpose)}_2} \\
{} & \leq C_1 \left[\sqrt{1 + \left(\frac{2 \delta C_\delta^{\text{\rm HTP}} \sqrt{1+\delta}}{\sqrt{1-\delta}}\right)^2} + 2 (1+\delta) C_\delta^{\text{\rm HTP}}\right] \frac{\norm{z}_2}{\norm{\A(u v^\transpose)}_2}.
\end{align*}

\subsection{Proof of Proposition~\ref{prop:good_init}}

First, we show that $v_0$ satisfies the feasibility condition in (\ref{eq:feas_initcond}).
Note that if (\ref{eq:maxs0}) is satisfied for a particular set $\widehat{J}$,
then it is also satisfied for all subsets of $\widehat{J}$.
Since (\ref{eq:maxs0}) is satisfied for all $J \subset [n]$ of size $s_0$,
indeed, it is satisfied for all $J \subset [n]$ of size up to $s_0$.
In particular, (\ref{eq:maxs0}) is satisfied for $J = \widehat{J}_2$,
where $\widehat{J}_2$ is the particular set explored in Algorithm~\ref{alg:thres_proj}.
Recall that Algorithm~\ref{alg:thres_proj} choose $\widehat{J}_2$ so that $|\widehat{J}_2| \leq s_2$.
Therefore, as was shown in Section~\ref{sec:algorithm},
the estimate $v_0$ by Algorithm~\ref{alg:thres_proj} belongs to the constraint set $\Gamma_{s_2} \cap \Psi^{-1} \C_{\mu_2}$.

Next, we show that $v_0$ satisfies (\ref{eq:conv_initcond}).
This part of the proof is similar to the corresponding part of the previous work \cite[Section~7.2]{LeeWB2013spf}.
In the derivation of the performance guarantees for SPF (\cite[Section~7.2]{LeeWB2013spf} and \cite[Appendix~B]{LeeWB2013spf}),
there were steps implied by the RIP of $\A$ that holds all rank-two and $(2s_1,2s_2)$-sparse matrices at near optimal sample complexity.
However, the linear operator $\A$ in subsampled blind deconvolution does not provide this property.
In this proof, we show that these steps hold by the RIP-like properties and (\ref{eq:maxs0}) as a series of lemmas.
Then, we will obtain the desired property in (\ref{eq:conv_initcond}).

\begin{lemma}
\label{lemma:riplike_cross}
Assume the setup of Proposition~\ref{prop:good_init}. Then,
\begin{equation}
\label{eq:proof:lemma:spf_lemma7.10:ineq1}
\norm{\Pi_{\widehat{J}_1} [(\A^*\A - \id)(u v^\transpose)] \Pi_{\widehat{J}_2}}_{2 \to 2} \leq \delta \norm{u v^\transpose}_{\mathrm{F}}.
\end{equation}
\end{lemma}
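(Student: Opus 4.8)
The plan is to reduce the operator norm to a bilinear form and to recognize that form as an instance of the restricted angle-preserving property in (A4). Writing $N := (\A^*\A - \id)(uv^\transpose)$ and absorbing the coordinate projections into the test vectors, I would first record that
\[
\norm{\Pi_{\widehat{J}_1} N \Pi_{\widehat{J}_2}}_{2\to 2}
= \sup_{\substack{\breve{u},\breve{v}\in\mathbb{S}^{n-1}\\ \supp{\breve{u}}\subset\widehat{J}_1,\,\supp{\breve{v}}\subset\widehat{J}_2}}
\left|\breve{u}^* N \breve{v}\right| .
\]
Using the trace identity $\breve{u}^* N \breve{v} = \langle \breve{u}\breve{v}^*, N\rangle$ (with the matrix inner product $\langle A,B\rangle = \trace(A^*B)$) together with the definition of the adjoint $\A^*$, each term in the supremum becomes
\[
\breve{u}^* N \breve{v}
= \langle \A(\breve{u}\breve{v}^*), \A(uv^\transpose)\rangle - \langle \breve{u}\breve{v}^*, uv^\transpose\rangle ,
\]
which is precisely the quantity an RAP controls, with $w = uv^\transpose$ and $w' = \breve{u}\breve{v}^*$.

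Next I would check that both $w$ and $w'$ lie in $\calM_{3s_1,s_2;\infty,\mu_2}$, so that (A4) applies. For $w = uv^\transpose$ this is immediate from $u\in\Gamma_{s_1}\subset\Gamma_{3s_1}$ and $v\in\Gamma_{s_2}\cap\Psi^{-1}\C_{\mu_2}$ (the true factor in the underlying model). For $w' = \breve{u}\breve{v}^* = \breve{u}(\overline{\breve{v}})^\transpose$, the left factor obeys $\breve{u}\in\Gamma_{s_1}\subset\Gamma_{3s_1}$ because $|\widehat{J}_1| = s_1$, and the right factor $\overline{\breve{v}}$ is supported on $\widehat{J}_2$, hence lies in $\Gamma_{s_0}\subset\Gamma_{s_2}$. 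Note that only the flatness of the \emph{right} factors is needed, which is why (A4) rather than (A3) is the relevant property.

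The crux of the argument is to show that this \emph{unconstrained} right factor still meets the spectral flatness bound, i.e. $\overline{\breve{v}}\in\Psi^{-1}\C_{\mu_2}$, even though $\breve{v}$ is an arbitrary unit vector on $\widehat{J}_2$ carrying no flatness information a priori. Here I would invoke the elementary estimate already set up in Section~\ref{sec:algorithm}: for any vector $w$ supported on a set $J$ for which (\ref{eq:maxs0}) holds, the bounds $\norm{F\Psi w}_\infty \leq \norm{F\Psi_J}_{1\to\infty}\sqrt{|J|}\,\norm{w}_2$ and $\norm{F\Psi w}_2 \geq \sigma_{\min}(F\Psi_J)\norm{w}_2$ combine with (\ref{eq:maxs0}) to force $\textsf{sf}(\Psi w)\leq\mu_2$. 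By the hypothesis of Proposition~\ref{prop:good_init}, (\ref{eq:maxs0}) holds for every $J$ of size $s_0$, in particular for $J=\widehat{J}_2$; and since conjugation preserves support, $\overline{\breve{v}}$ is also supported on $\widehat{J}_2$, so the bound applies to it verbatim. This yields $w'\in\calM_{3s_1,s_2;\infty,\mu_2}$ uniformly over the supremum. I expect this flatness verification to be the main obstacle; the rest is bookkeeping with the trace/adjoint identities and the set inclusions.

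Finally, applying the $(\calM_{3s_1,s_2;\infty,\mu_2},\calM_{3s_1,s_2;\infty,\mu_2},\delta/2)$-RAP from (A4) and using $\hsnorm{\breve{u}\breve{v}^*} = \norm{\breve{u}}_2\norm{\breve{v}}_2 = 1$ (the Hilbert-Schmidt norm coincides with the Frobenius norm on matrices) gives
\[
\left|\breve{u}^* N \breve{v}\right|
\leq \frac{\delta}{2}\,\hsnorm{uv^\transpose}\,\hsnorm{\breve{u}\breve{v}^*}
= \frac{\delta}{2}\,\norm{uv^\transpose}_{\mathrm{F}} .
\]
Taking the supremum over $\breve{u},\breve{v}$ yields the claim, in fact with the slightly stronger constant $\delta/2\leq\delta$.
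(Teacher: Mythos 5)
Your proof is correct and follows essentially the same route as the paper's: reduce the $2\to2$ norm to an inner product against rank-one matrices $\breve{u}\breve{v}^*$, verify via (\ref{eq:maxs0}) that every unit vector supported on $\widehat{J}_2$ lies in $\Psi^{-1}\C_{\mu_2}$, and then invoke the RAP of (A4). The only differences are cosmetic (you spell out the spectral-flatness verification that the paper cites from Section~\ref{sec:algorithm}, and you track the constant $\delta/2$ where the paper states $\delta$).
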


\begin{proof}[Proof of Lemma~\ref{lemma:riplike_cross}]
First, we verify that (\ref{eq:proof:lemma:spf_lemma7.10:ineq1}) is derived from
the $(\calM_{s_1,s_2,\infty,\mu_2},\calM_{s_1,s_2,\infty,\mu_2},\delta)$-RAP of $\A$ as follows:
\begin{align*}
{} & \norm{\Pi_{\widehat{J}_1} [(\A^*\A - \id)(u v^\transpose)] \Pi_{\widehat{J}_2}}_{2 \to 2} \\
{} & \leq \max_{\breve{u} \in \Gamma_{s_1} \cap \mathbb{S}^{n-1}}
\max_{\begin{subarray}{c} \breve{v} \in \mathbb{S}^{n-1} \\ \supp{\breve{v}} \subset \widehat{J}_2 \end{subarray}}
|\langle \breve{u} \breve{v}^\transpose ,\pl (\A^*\A - \id)(u v^\transpose) \rangle| \\
{} & \leq \max_{\breve{u} \in \Gamma_{s_1} \cap \mathbb{S}^{n-1}}
\max_{\begin{subarray}{c} \breve{v} \in \mathbb{S}^{n-1} \\ \supp{\breve{v}} \subset \widehat{J}_2 \end{subarray}}
\delta \norm{u v^\transpose}_\mathrm{F} \norm{\breve{u} \breve{v}^\transpose}_\mathrm{F} \\
{} & = \delta \norm{u v^\transpose}_\mathrm{F},
\end{align*}
where the first step follows from the definition of the spectral norm,
and the second step from the fact that all $s_0$-sparse $\breve{v}$ supported on $\widehat{J}_2$ belong to $\Psi^{-1} \C_{\mu_2}$ since $s_0$ satisfies (\ref{eq:maxs0}).
\end{proof}

\begin{lemma}
\label{lemma:riplike_noise}
Assume the setup of Proposition~\ref{prop:good_init}. Then,
\begin{equation}
\label{eq:proof:lemma:spf_lemma7.10:ineq2}
\norm{\Pi_{\widehat{J}_1} [\A^*(z)] \Pi_{\widehat{J}_2}}_{2 \to 2} \leq \sqrt{1+\delta} \norm{z}_2.
\end{equation}
\end{lemma}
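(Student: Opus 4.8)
The plan is to mirror the proof of Lemma~\ref{lemma:riplike_cross} almost verbatim, with the single change that the RAP step (which controlled a two-signal inner product) is replaced by a Cauchy–Schwarz estimate governed by the RIP of $\A$ on $\calM_{s_1,s_2;\infty,\mu_2}$. First I would rewrite the spectral norm variationally. Because $\Pi_{\widehat{J}_1}$ and $\Pi_{\widehat{J}_2}$ annihilate every coordinate outside $\widehat{J}_1$ and $\widehat{J}_2$, the operator norm is attained on unit vectors supported on those index sets, so
\begin{align*}
\norm{\Pi_{\widehat{J}_1} [\A^*(z)] \Pi_{\widehat{J}_2}}_{2 \to 2}
= \max_{\breve{u} \in \Gamma_{s_1} \cap \mathbb{S}^{n-1}}
\max_{\begin{subarray}{c} \breve{v} \in \mathbb{S}^{n-1} \\ \supp{\breve{v}} \subset \widehat{J}_2 \end{subarray}}
|\langle \breve{u} \breve{v}^\transpose,\pl \A^*(z) \rangle|,
\end{align*}
exactly as in the first display of the proof of Lemma~\ref{lemma:riplike_cross} (relaxing $\breve{u}$ to all of $\Gamma_{s_1}\cap\mathbb{S}^{n-1}$ only enlarges the feasible set and hence keeps an upper bound).

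Next I would move $\A$ across the inner product using the adjoint identity $\langle \breve{u}\breve{v}^\transpose, \A^*(z)\rangle = \langle \A(\breve{u}\breve{v}^\transpose), z\rangle$ and apply Cauchy–Schwarz:
\begin{align*}
|\langle \breve{u} \breve{v}^\transpose, \A^*(z)\rangle|
\leq \norm{\A(\breve{u} \breve{v}^\transpose)}_2 \, \norm{z}_2.
\end{align*}
It then remains to bound $\norm{\A(\breve{u} \breve{v}^\transpose)}_2$ uniformly over all competitors. Invoking the same support-based flatness observation used in Lemma~\ref{lemma:riplike_cross}, every vector supported on $\widehat{J}_2$ lies in $\Psi^{-1}\C_{\mu_2}$, so $\breve{u}\breve{v}^\transpose \in \calM_{s_1,s_2;\infty,\mu_2}$. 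The $(\calM_{s_1,s_2;\infty,\mu_2},\delta)$-RIP of $\A$, which is implied by the RAP assumption (A4) together with $\calM_{s_1,s_2;\infty,\mu_2} \subset \calM_{3s_1,s_2;\infty,\mu_2}$, gives
\begin{align*}
\norm{\A(\breve{u} \breve{v}^\transpose)}_2 \leq \sqrt{1+\delta}\, \fnorm{\breve{u}\breve{v}^\transpose} = \sqrt{1+\delta},
\end{align*}
using $\fnorm{\breve{u}\breve{v}^\transpose} = \norm{\breve{u}}_2 \norm{\breve{v}}_2 = 1$. Combining the three displays and taking the maximum over $\breve{u},\breve{v}$ yields (\ref{eq:proof:lemma:spf_lemma7.10:ineq2}).

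The only substantive point—what I expect to be the main obstacle, though it is genuinely mild here—is justifying that $\breve{u}\breve{v}^\transpose \in \calM_{s_1,s_2;\infty,\mu_2}$ for \emph{every} competitor $\breve{v}$ supported on $\widehat{J}_2$, not merely for the particular output $v_0$; this uniformity is what lets a single RIP constant dominate the entire maximum. It follows from the fact, recorded after (\ref{eq:maxs0}), that the bounds $\norm{F\Psi \breve{v}}_\infty \leq \norm{F\Psi_{\widehat{J}_2}}_{1\to\infty}\sqrt{s_0}\,\norm{\breve{v}}_2$ and $\norm{F\Psi \breve{v}}_2 \geq \sigma_{\min}(F\Psi_{\widehat{J}_2})\norm{\breve{v}}_2$ depend only on $\supp{\breve{v}}\subset\widehat{J}_2$, so condition (\ref{eq:maxs0}) forces $\textsf{sf}(\Psi \breve{v}) \leq \mu_2$ for all such $\breve{v}$ (and conjugation does not change the support, so the transpose/conjugate bookkeeping in the inner product is harmless). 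Everything else is the adjoint identity and Cauchy–Schwarz, which are routine.
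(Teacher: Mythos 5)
Your proposal is correct and follows essentially the same route as the paper's proof: the variational form of the spectral norm over unit vectors supported on $\widehat{J}_1$ and $\widehat{J}_2$, the adjoint identity plus Cauchy--Schwarz, and the key observation that (\ref{eq:maxs0}) puts \emph{every} unit vector supported on $\widehat{J}_2$ in $\Gamma_{s_2} \cap \Psi^{-1}\C_{\mu_2}$, so that a single RIP constant controls the whole maximum. The only (cosmetic) difference is bookkeeping: the paper phrases the final bound as the $(\Gamma_{s_1},\delta)$-RIP of the matrix $\A_{\mathrm{R}}(\breve{v})$ implied by the RAP of $\A$, whereas you invoke the equivalent $(\calM_{s_1,s_2;\infty,\mu_2},\delta)$-RIP of $\A$ on rank-one matrices directly.
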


\begin{proof}[Proof of Lemma~\ref{lemma:riplike_noise}]
Similarly, (\ref{eq:proof:lemma:spf_lemma7.10:ineq2}) is derived as follows:
\begin{align*}
{} & \norm{\Pi_{\widehat{J}_1} [\A^*(z)] \Pi_{\widehat{J}_2}}_{2 \to 2} \\
{} & = \max_{\breve{u} \in \Gamma_{s_1} \cap \mathbb{S}^{n-1}}
\max_{\begin{subarray}{c} \breve{v} \in \mathbb{S}^{n-1} \\ \supp{\breve{v}} \subset \widehat{J}_2 \end{subarray}}
|\langle \breve{u} \breve{v}^\transpose ,\pl \A^*(z) \rangle| \\
{} & = \max_{\breve{u} \in \Gamma_{s_1} \cap \mathbb{S}^{n-1}}
\max_{\begin{subarray}{c} \breve{v} \in \mathbb{S}^{n-1} \\ \supp{\breve{v}} \subset \widehat{J}_2 \end{subarray}}
|\langle \A(\breve{u} \breve{v}^\transpose) ,\pl z \rangle| \\
{} & = \max_{\breve{u} \in \Gamma_{s_1} \cap \mathbb{S}^{n-1}}
\max_{\begin{subarray}{c} \breve{v} \in \mathbb{S}^{n-1} \\ \supp{\breve{v}} \subset \widehat{J}_2 \end{subarray}}
|\breve{u}^* [\A_{\mathrm{R}}(\breve{v})]^* z| \\
{} & \leq \max_{\breve{u} \in \Gamma_{s_1} \cap \mathbb{S}^{n-1}}
\max_{\begin{subarray}{c} \breve{v} \in \mathbb{S}^{n-1} \\ \supp{\breve{v}} \subset \widehat{J}_2 \end{subarray}}
\sqrt{1+\delta} \norm{\breve{u}}_2 \norm{\breve{v}}_2 \norm{z}_2 \\
{} & = \sqrt{1+\delta} \norm{z}_2,
\end{align*}
where the inequality holds since
the $(\calM_{s_1,s_2,\infty,\mu_2},\calM_{s_1,s_2,\infty,\mu_2},\delta)$-RAP of $\A$
implies the $(\Gamma_{s_1},\delta)$-RIP of $\A_{\mathrm{R}}(\breve{v})$ for all $\breve{v} \in \Gamma_{s_2} \cap \Psi^{-1} \C_{\mu_2} \cap \mathbb{S}^{n-1}$,
which follows from the fact that all $\breve{v}$ supported on $\widehat{J}_2$ belong to $\Gamma_{s_2} \cap \Psi^{-1} \C_{\mu_2}$
since $s_0$ satisfies (\ref{eq:maxs0}).
\end{proof}

Next, we derive (\ref{eq:conv_initcond}) using Lemmas~\ref{lemma:riplike_cross} and \ref{lemma:riplike_noise}.
The first step is to compute an upper bound on the angle between $v_0$ and $v$ using
the non-Hermitian $\sin\theta$ theorem \cite[pp. 102--103]{Wed1972perturbation}.
This step is analogous to \cite[Lemma~7.10]{LeeWB2013spf} and summarize as the following lemma.

\begin{lemma}[{Analog of \cite[Lemma~7.10]{LeeWB2013spf} for blind deconvolution}]
Under the setup of Proposition~\ref{prop:good_init}, we have
\label{lemma:7.10}
\begin{equation}
\label{eq:init_step4}
\norm{P_{R(v_0)^\perp} P_{R(v)}}
\leq \frac{(\norm{\Pi_{\widehat{J}_1} u}_2 / \norm{u}_2) (\norm{\Pi_{\widehat{J}_2}^\perp v}_2 / \norm{v}_2) + \delta + (1+\delta) \nu}{(\norm{\Pi_{\widehat{J}_1} u}_2 / \norm{u}_2) - \delta - (1+\delta) \nu}.
\end{equation}
\end{lemma}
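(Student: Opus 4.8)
The plan is to recognize $\Pi_{\widehat{J}_1} M \Pi_{\widehat{J}_2} = \Pi_{\widehat{J}_1}\A^*(b)\Pi_{\widehat{J}_2}$ as a spectral perturbation of a clean rank-one matrix and then invoke the non-Hermitian $\sin\theta$ theorem. First I would substitute $b = \A(u v^\transpose) + z$ and write $\A^*\A = \id + (\A^*\A-\id)$, which splits the compressed matrix as $\Pi_{\widehat{J}_1} M \Pi_{\widehat{J}_2} = N + E$, where $N := \Pi_{\widehat{J}_1}(u v^\transpose)\Pi_{\widehat{J}_2} = (\Pi_{\widehat{J}_1} u)(\Pi_{\widehat{J}_2} v)^\transpose$ is rank one and $E := \Pi_{\widehat{J}_1}[(\A^*\A-\id)(u v^\transpose)]\Pi_{\widehat{J}_2} + \Pi_{\widehat{J}_1}[\A^*(z)]\Pi_{\widehat{J}_2}$ collects the cross term and the noise term. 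The top right singular vector of $N+E$ is exactly the initialization $v_0$, while the clean top right singular direction of $N$ is proportional to $\Pi_{\widehat{J}_2} v$ (up to the transpose/conjugate convention of the measurement model), with singular value $\sigma_1(N) = \norm{\Pi_{\widehat{J}_1} u}_2 \norm{\Pi_{\widehat{J}_2} v}_2$ and $\sigma_2(N) = 0$.

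Next I would control $\norm{E}_{2\to 2}$. Lemma~\ref{lemma:riplike_cross} bounds the cross term by $\delta \norm{u v^\transpose}_{\mathrm{F}}$ and Lemma~\ref{lemma:riplike_noise} bounds the noise term by $\sqrt{1+\delta}\,\norm{z}_2$, so by the triangle inequality $\norm{E}_{2\to 2} \leq \delta \norm{u v^\transpose}_{\mathrm{F}} + \sqrt{1+\delta}\,\norm{z}_2$. To convert the noise bound into the SNR parameter $\nu$, I would use the high-SNR assumption (A2), namely $\norm{z}_2 \leq \nu \norm{\A(u v^\transpose)}_2$, together with the upper RIP inequality $\norm{\A(u v^\transpose)}_2 \leq \sqrt{1+\delta}\,\norm{u v^\transpose}_{\mathrm{F}}$ guaranteed by Proposition~\ref{prop:stability}. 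This yields the clean estimate $\norm{E}_{2\to 2} \leq [\delta + (1+\delta)\nu]\,\norm{u}_2 \norm{v}_2$, expressing the perturbation in the same units, $\norm{u}_2 \norm{v}_2$, as the clean singular value $\sigma_1(N)$.

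With the decomposition and the perturbation bound in hand, I would apply the non-Hermitian $\sin\theta$ theorem \cite{Wed1972perturbation} to $N$ and $N+E$. Since $N$ has a clean singular gap ($\sigma_1(N) > 0 = \sigma_2(N)$) and, by Weyl's inequality, $\sigma_2(N+E) \leq \norm{E}_{2\to 2}$, the theorem bounds the angle between $v_0$ and the clean right singular direction $\Pi_{\widehat{J}_2} v / \norm{\Pi_{\widehat{J}_2} v}_2$ by a ratio whose numerator is $\norm{E}_{2\to 2}$ and whose denominator is the gap $\sigma_1(N) - \sigma_2(N+E) \geq \sigma_1(N) - \norm{E}_{2\to 2}$. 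Finally, since $v_0$ must be compared with $v$ rather than with its support restriction, I would add the deterministic angle between $\Pi_{\widehat{J}_2} v$ and $v$, which equals $\norm{\Pi_{\widehat{J}_2}^\perp v}_2 / \norm{v}_2$, via an angle-addition (triangle) step for the projected quantity $\norm{P_{R(v_0)^\perp} P_{R(v)}}$. Normalizing $\sigma_1(N)$ and $\norm{E}_{2\to 2}$ by $\norm{u}_2 \norm{v}_2$ and simplifying then produces the stated bound (\ref{eq:init_step4}), in which $\norm{\Pi_{\widehat{J}_1} u}_2 / \norm{u}_2$ plays the role of the relative signal strength in the denominator and the support-leakage term $(\norm{\Pi_{\widehat{J}_1} u}_2/\norm{u}_2)(\norm{\Pi_{\widehat{J}_2}^\perp v}_2/\norm{v}_2)$ appears in the numerator.

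The main obstacle I expect is the careful bookkeeping in the last step: correctly identifying the clean left/right singular directions of $N$ under the transpose/conjugate conventions of the model, choosing the form of the non-Hermitian $\sin\theta$ theorem whose effective gap is $\sigma_1(N) - \norm{E}_{2\to 2}$ rather than a weaker separation, and combining the Wedin angle with the support-mismatch angle so that the two contributions assemble \emph{exactly} into the rational expression (\ref{eq:init_step4}) instead of into a looser sum. The spectral-norm perturbation bound itself is routine given Lemmas~\ref{lemma:riplike_cross} and \ref{lemma:riplike_noise}; the delicate part is the tight angle combination and the reduction from $\Pi_{\widehat{J}_2} v$ to $v$.
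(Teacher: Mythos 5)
Your first two steps (splitting $\Pi_{\widehat{J}_1}[\A^*(b)]\Pi_{\widehat{J}_2}$ into a clean rank-one matrix plus a perturbation controlled by Lemmas~\ref{lemma:riplike_cross} and \ref{lemma:riplike_noise}, (A2), and the upper RIP inequality) match the paper, but the final assembly step you flagged as delicate is exactly where the argument breaks, and it does not merely need bookkeeping -- it fails. Because you doubly project the clean matrix, $N=(\Pi_{\widehat{J}_1}u)(\Pi_{\widehat{J}_2}v)^\transpose$, Wedin only compares $v_0$ with $\Pi_{\widehat{J}_2}v$, with gap $\sigma_1(N)=\norm{\Pi_{\widehat{J}_1}u}_2\norm{\Pi_{\widehat{J}_2}v}_2$, and you must then add the support-mismatch angle. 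Write $p=\norm{\Pi_{\widehat{J}_1}u}_2/\norm{u}_2$, $q=\norm{\Pi_{\widehat{J}_2}^\perp v}_2/\norm{v}_2$, $r=\sqrt{1-q^2}$, and $\epsilon=\delta+(1+\delta)\nu$. The always-valid triangle inequality for sines gives you at best $\epsilon/(pr-\epsilon)+q$, whereas the lemma asserts $(pq+\epsilon)/(p-\epsilon)$, and the former does not imply the latter: with the proposition's own constants $\delta=\nu=0.02$ (so $\epsilon\approx 0.0404$), taking $p=1$ and $q=0.9$ (so $r\approx 0.436$) yields $\epsilon/(pr-\epsilon)+q\approx 1.002$, which is vacuous, while the claimed bound is $\approx 0.98<1$. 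The sharper addition formula $\sin\alpha\cos\beta+\cos\alpha\sin\beta$ would close this, but it is valid only when $\alpha+\beta\leq\pi/2$, which requires separate justification; moreover your Wedin gap $pr-\epsilon$ can even be nonpositive when $\widehat{J}_2$ captures little of $v$'s energy, collapsing the argument entirely. So as written the proposal proves a genuinely different inequality, not (\ref{eq:init_step4}); it could perhaps be salvaged by a case analysis (the regime where the stated right-hand side exceeds $1$ is trivial, and in the complementary regime one can show $\alpha+\beta\leq\pi/2$), but none of that machinery appears in your plan.

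The paper's proof avoids the angle addition altogether by an asymmetric choice you are missing: it takes the unperturbed matrix to be $Z=\Pi_{\widehat{J}_1}uv^\transpose$ with \emph{no} right projection, so that $R(Z^\transpose)=R(v)$ and Lemma~\ref{lemma:sintheta} directly bounds $\norm{P_{R(v_0)^\perp}P_{R(v)}}$. The leakage term $\Pi_{\widehat{J}_1}uv^\transpose\Pi_{\widehat{J}_2}^\perp$ is then part of the perturbation $E$, contributing the $pq$ term to the numerator, but it never contaminates the denominator: there, $\norm{Z+E-\widehat{Z}}$ is bounded by comparing the best rank-one approximation $\widehat{Z}$ against the doubly projected matrix $\Pi_{\widehat{J}_1}uv^\transpose\Pi_{\widehat{J}_2}$, which eliminates the leakage and leaves only $\delta+(1+\delta)\nu$, while $\norm{Z}=\norm{\Pi_{\widehat{J}_1}u}_2\norm{v}_2$ retains the full $\norm{v}_2$. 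That bookkeeping is what makes the ratio assemble exactly into (\ref{eq:init_step4}) with denominator $p-\epsilon$ instead of your weaker $pr-\epsilon$.
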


\begin{proof}[Proof of Lemma~\ref{lemma:7.10}]

Let $Z := \Pi_{\widehat{J}_1} u v^\transpose$.
Let $\widehat{Z}$ denote the best rank-1 approximation of $\Pi_{\widehat{J}_1} [\A^*(b)] \Pi_{\widehat{J}_2}$ in the spectral norm.
Let $E := \Pi_{\widehat{J}_1} [\A^*(b)] \Pi_{\widehat{J}_2} - Z$.
Then,
\begin{align*}
\norm{Z + E - \widehat{Z}}
{} & = \norm{\Pi_{\widehat{J}_1} [\A^*(b)] \Pi_{\widehat{J}_2} - \widehat{Z} } \\
{} & \leq \norm{ \Pi_{\widehat{J}_1} [\A^*(b)] \Pi_{\widehat{J}_2} - \Pi_{\widehat{J}_1} u v^\transpose \Pi_{\widehat{J}_2}} \\
{} & = \norm{\Pi_{\widehat{J}_1} [(\A^* \A - \id)(u v^\transpose)] \Pi_{\widehat{J}_2} + \Pi_{\widehat{J}_1} [\A^*(z)] \Pi_{\widehat{J}_2}} \\
{} & \leq \norm{\Pi_{\widehat{J}_1} [(\A^* \A - \id)(u v^\transpose)] \Pi_{\widehat{J}_2}} + \norm{\Pi_{\widehat{J}_1} [\A^*(z)] \Pi_{\widehat{J}_2}} \\
{} & \leq \delta \fnorm{u v^\transpose} + \sqrt{1+\delta} \norm{z}_2 \\
{} & \leq \fnorm{u v^\transpose} [\delta  + (1+\delta) \nu],
\end{align*}
where the second inequality follows from Lemmas~\ref{lemma:riplike_cross} and \ref{lemma:riplike_noise}, and the last step holds since
\begin{equation}
\label{eq:ub_noise}
\norm{z}_2 \leq \nu \norm{\A(u v^\transpose)}_2 \leq \nu \sqrt{1+\delta} \fnorm{u v^\transpose}.
\end{equation}

Similarly, the difference $E$ is bounded in the spectral norm by
\begin{align*}
\norm{E}
{} & = \norm{\Pi_{\widehat{J}_1} [\A^*(b)] \Pi_{\widehat{J}_2} - \Pi_{\widehat{J}_1} u v^\transpose (\Pi_{\widehat{J}_2} + \Pi_{\widehat{J}_2}^\perp)} \\
{} & = \norm{\Pi_{\widehat{J}_1} [(\A^* \A - \id)(u v^\transpose)] \Pi_{\widehat{J}_2} - \Pi_{\widehat{J}_1} u v^\transpose \Pi_{\widehat{J}_2}^\perp + \Pi_{\widehat{J}_1} [\A^*(z)] \Pi_{\widehat{J}_2}} \\
{} & \leq \norm{\Pi_{\widehat{J}_1} [(\A^* \A - \id)(u v^\transpose)] \Pi_{\widehat{J}_2}} + \norm{\Pi_{\widehat{J}_1} u v^\transpose \Pi_{\widehat{J}_2}^\perp} + \norm{\Pi_{\widehat{J}_1} [\A^*(z)] \Pi_{\widehat{J}_2}} \\
{} & \leq \delta \fnorm{u v^\transpose} + \norm{\Pi_{\widehat{J}_1} u v^\transpose \Pi_{\widehat{J}_2}^\perp} + \sqrt{1+\delta} \norm{z}_2 \\
{} & \leq \fnorm{u v^\transpose} \left[
\delta + (\norm{\Pi_{\widehat{J}_1} u}_2 / \norm{u}_2) (\norm{\Pi_{\widehat{J}_2}^\perp v}_2 / \norm{v}_2) + (1+\delta) \nu
\right],
\end{align*}
where the second inequality follows from Lemmas~\ref{lemma:riplike_cross} and \ref{lemma:riplike_noise},
and the last step follows from \ref{eq:ub_noise}.

Finally, note that $\norm{Z}$ is rewritten as $\norm{Z} = \norm{\Pi_{\widehat{J}_1} u v^\transpose} = \norm{\Pi_{\widehat{J}_1} u}_2 \norm{v}_2$.

Since $R(Z^\transpose) = R(v)$ and $R(v_0) = R(\widehat{Z}^\transpose)$, the proof completes by applying Lemma~\ref{lemma:sintheta}.

\begin{lemma}[{Non-Hermitian $\sin\theta$ theorem \cite{Wed1972perturbation}: rank-1 case}]
\label{lemma:sintheta}
Let $Z \in \cz^{n \times d}$ be a rank-1 matrix.
Let $\widehat{Z} \in \cz^{n \times d}$ be the best rank-1 approximation of $Z + E$ in the Frobenius norm.
Then,
\[
\norm{P_{R(\widehat{Z}^\transpose)^\perp} P_{R(Z^\transpose)}}
\leq \frac{\max(\norm{P_{R(Z)} E},\norm{E P_{R(Z^*)}})}{\norm{Z} - \norm{Z + E - \widehat{Z}}}.
\]
\end{lemma}

\end{proof}

By Lemma~\ref{lemma:7.10}, we get a sufficient condition for (\ref{eq:conv_initcond}) given by
\begin{equation}
\label{eq:suff_conv_initcond}
\delta + (1+\delta) \nu \leq
\frac{(\norm{\Pi_{\widehat{J}_1} u}_2 / \norm{u}_2) \left(\sin (\sup \Omega)) - (\norm{\Pi_{\widehat{J}_2}^\perp v}_2 / \norm{v}_2) \right)}{1 + \sin (\sup \Omega)},
\end{equation}
where the set $\Omega$ is defined in (\ref{eq:defOmega}).
Recall that $\sin (\sup \Omega)$ depends only on $\delta$ and $\nu$.

Similar to \cite[Section~3.2]{LeeWB2013spf}, there exist $\delta$, $\nu$, $\gamma$ that satisfy
\begin{equation}
\label{eq:suff_conv_initcond_simple}
\delta + \nu + \delta \nu \leq
\frac{\gamma^2 + \sin (\sup \Omega) - \sqrt{2}}{3 + \sin (\sup \Omega)}.
\end{equation}
For example, the choice of $\delta$, $\nu$, and $\gamma$ in Proposition~\ref{prop:good_init} satisfies (\ref{eq:suff_conv_initcond_simple}).

It remains to show (\ref{eq:suff_conv_initcond_simple}) implies (\ref{eq:suff_conv_initcond}).
We first show that $\widehat{J}_1$ includes the index of the largest element of $u$ in magnitude using the following lemma.

\begin{lemma}[{Analog of \cite[Lemma~7.12]{LeeWB2013spf} for blind deconvolution}]
\label{lemma:7.12}
Under the setup of Proposition~\ref{prop:good_init}, suppose that there exists $\widetilde{J}_1 \subset \supp{u}$ satisfying
\begin{equation}
\label{eq:lemma:7.12}
2 \delta + 2(1 + \delta) \nu < \min_{j \in \widetilde{J}_1} |\langle e_j, u \rangle|.
\end{equation}
Then, $\widetilde{J}_1 \subset \widehat{J}_1$.
\end{lemma}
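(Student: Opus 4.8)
The plan is to exhibit a strict gap between the selection statistic $\zeta_k$ on the true support of $u$ and off it, and then finish by a counting argument. Normalize so that $\norm{u}_2 = \norm{v}_2 = 1$, and write $M = \A^*(b) = u v^\transpose + E$ with $E := (\A^*\A - \id)(u v^\transpose) + \A^*(z)$. By construction in Algorithm~\ref{alg:thres_proj}, $\zeta_k$ is the $\ell_2$ norm of the best $s_0$-sparse approximation of the $k$th row of $M$, i.e. $\zeta_k = \max_{|J| = s_0} \fnorm{\Pi_{\{k\}} M \Pi_J}$, and $\widehat{J}_1$ collects the $s_1$ indices with largest $\zeta_k$. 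Since $\norm{u}_0 \le s_1$, to prove $\widetilde{J}_1 \subset \widehat{J}_1$ it suffices to show $\zeta_j > \zeta_k$ for every $j \in \widetilde{J}_1 \subset \supp{u}$ and every $k \notin \supp{u}$: if some $j_0 \in \widetilde{J}_1$ were excluded from the top $s_1$, then $\widehat{J}_1$ could contain at most $|\supp{u}| - 1 \le s_1 - 1$ indices of $\supp{u}$, hence at least one index $k \notin \supp{u}$ with $\zeta_k \ge \zeta_{j_0}$, contradicting the gap.

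First I would pin down a uniform error floor by bounding $\zeta_k$ for $k \notin \supp{u}$. For such $k$ the $k$th row of $u v^\transpose$ vanishes, so $\zeta_k = \max_{|J| = s_0} \fnorm{\Pi_{\{k\}} E \Pi_J}$. Fix such a $J$ and a unit $\breve{v}$ with $\supp{\breve{v}} \subset J$. Because $|J| = s_0$ satisfies (\ref{eq:maxs0}), any such $\breve{v}$ lies in $\Psi^{-1}\C_{\mu_2}$, so $e_k \breve{v}^\transpose \in \calM_{s_1,s_2;\infty,\mu_2}$. Exactly as in Lemmas~\ref{lemma:riplike_cross} and \ref{lemma:riplike_noise}, the RAP controls the cross term, $|\langle e_k \breve{v}^\transpose, (\A^*\A-\id)(u v^\transpose)\rangle| \le \delta$, while the induced $(\Gamma_{s_1},\delta)$-RIP of $\A_{\mathrm{R}}(\breve{v})$ controls the noise term, $|\langle e_k \breve{v}^\transpose, \A^*(z)\rangle| = |\langle \A(e_k \breve{v}^\transpose), z\rangle| \le \sqrt{1+\delta}\,\norm{z}_2 \le (1+\delta)\nu$, using $\norm{z}_2 \le \nu\sqrt{1+\delta}$ from (A2). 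Maximizing over $\breve{v}$ and $J$ gives $\zeta_k \le \delta + (1+\delta)\nu$ uniformly.

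Next I would bound $\zeta_j$ from below for $j \in \widetilde{J}_1$. Restricting the best $s_0$-sparse approximation to the support $J^\star$ of the $s_0$ largest entries of $v$, I get $\zeta_j \ge \fnorm{\Pi_{\{j\}} M \Pi_{J^\star}} \ge |u_j|\,\norm{\Pi_{J^\star} v}_2 - \fnorm{\Pi_{\{j\}} E \Pi_{J^\star}}$. Since $|J^\star| = s_0$ again satisfies (\ref{eq:maxs0}), the same RAP/RIP estimate gives $\fnorm{\Pi_{\{j\}} E \Pi_{J^\star}} \le \delta + (1+\delta)\nu$, and as $J^\star$ contains the peak of $v$, the peakedness hypothesis gives $\norm{\Pi_{J^\star} v}_2 \ge \norm{v}_\infty \ge \gamma$. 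Hence $\zeta_j \ge \gamma\,|u_j| - \delta - (1+\delta)\nu$, so $\zeta_j > \zeta_k$ holds once $\gamma\,|u_j| > 2\delta + 2(1+\delta)\nu$. Combined with the counting argument this yields $\widetilde{J}_1 \subset \widehat{J}_1$. I note the bare threshold (\ref{eq:lemma:7.12}) is the $\gamma = 1$ idealization of this condition; in the intended application $\widetilde{J}_1 = \{\argmax_k |u_k|\}$, the peakedness of $u$ makes $|u_j| = \norm{u}_\infty \ge \gamma$ and the constants $\delta = \nu = 0.02$, $\gamma = 0.78$ of Proposition~\ref{prop:good_init} clear even the stronger requirement.

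The main obstacle is precisely this lower bound on $\zeta_j$. Because $s_0 \le s_2$ and $v$ need not be $s_0$-sparse, the best $s_0$-sparse approximation of the $j$th row of $M$ may fail to recover the full energy $|u_j|$ of the noiseless row $u_j v^\transpose$; it is the peakedness of $v$ that forces a fixed fraction of that energy to survive truncation to $J^\star$, which is what lets the signal term clear the error floor $\delta + (1+\delta)\nu$, and matching the bare constant in (\ref{eq:lemma:7.12}) is where one must be careful with this factor. A secondary care point is that the RIP-like bounds must hold \emph{uniformly} over all $s_0$-sized column supports, so that they apply to the optimal $J$ defining each $\zeta_k$; this is guaranteed because every such support feeds a spectrally flat test vector through (\ref{eq:maxs0}), placing all the relevant rank-one test matrices in $\calM_{s_1,s_2;\infty,\mu_2}$ where both the RAP and the induced RIP of $\A_{\mathrm{R}}$ are valid.
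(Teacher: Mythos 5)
Your proof follows the same route the paper itself takes (the paper outsources the details to \cite[Appendix~B.3]{LeeWB2013spf}, with Lemmas~\ref{lemma:riplike_cross} and \ref{lemma:riplike_noise} swapped in for the RIP of $\A$): bound the selection statistic $\zeta_k$ above by the uniform error floor $\delta + (1+\delta)\nu$ off the support of $u$, bound it below on $\widetilde{J}_1$ by signal minus error, and finish with the counting argument. Those steps are sound, and your care in checking that every size-$s_0$ column support feeds a test vector lying in $\Psi^{-1}\C_{\mu_2}$ via (\ref{eq:maxs0}), so that the RAP/ROP machinery applies uniformly over the maximization defining each $\zeta_k$, is exactly the substitution the paper gestures at.

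However, as you yourself flag, what you prove is not the stated lemma: your separation requires $\gamma\,|u_j| > 2\delta + 2(1+\delta)\nu$, i.e.\ the threshold in (\ref{eq:lemma:7.12}) inflated by $1/\gamma$. This is not a defect of your reasoning but a genuine discrepancy in the paper. In SPF the row statistic is computed with $s_2$-sparse approximations, so the full support of $v$ is admissible and the signal row retains its entire energy $|u_j|\,\norm{v}_2$; there the bare constant is correct. Here the approximation level is $s_0 \leq s_2$ (forced by the flatness condition (\ref{eq:maxs0})), and when $v$ is not $s_0$-sparse the attenuation factor $\norm{\Pi_{J^\star} v}_2 / \norm{v}_2 < 1$ is unavoidable; the paper's claim that the proof is ``identical'' to \cite[Lemma~7.12]{LeeWB2013spf} silently ignores this, and the statement with the bare constant does not follow from the error bounds available. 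The factor can only be salvaged through the peakedness of $v$, which is part of the setup of Proposition~\ref{prop:good_init}, exactly as you do, yielding $\gamma = 0.78$. Since the lemma is applied only with $\widetilde{J}_1 = \{j_0\}$ and $|u_{j_0}| = \norm{u}_\infty \geq \gamma$, your weaker version (needing $\gamma^2 = 0.6084 > 0.0808 = 2\delta + 2(1+\delta)\nu$) still carries the proof of Proposition~\ref{prop:good_init} through, consistent with the fact that the downstream condition (\ref{eq:suff_conv_initcond_simple}) already involves $\gamma^2$. In short: same approach, correct execution, and a legitimate correction to the constant in the statement.
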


\begin{proof}[Proof of Lemma~\ref{lemma:7.12}]
The proof is identical to that of \cite[Lemma~7.12]{LeeWB2013spf} except that
Lemmas~\ref{lemma:riplike_cross} and \ref{lemma:riplike_noise} replace the RIP of $\A$;
hence, we omit the proof and refer the details to \cite[Appendix~B.3]{LeeWB2013spf}.
\end{proof}

Let $j_0$ denote the index for the largest entry of $u$ in magnitude. Then, $\norm{\Pi_{\{j_0\}} u}_2 = \norm{u}_\infty$.
We apply Lemma~\ref{lemma:7.12} with $\widetilde{J}_1 = \{j_0\}$,
Since (\ref{eq:suff_conv_initcond_simple}) implies (\ref{eq:lemma:7.12}),
it follows that $j_0$ is included in $\widehat{J}_1$.

Then, we continue deriving a sufficient condition for (\ref{eq:suff_conv_initcond}).
Let $k_0$ denote the index for the largest entry of $v$ in magnitude.
Recall that $\widehat{J}_2$ is selected so that
\[
\widehat{J}_2 = \argmax_{\widetilde{J} \subset [n_2], |\widetilde{J}| \leq s_0} \fnorm{\Pi_{\widehat{J}_1} [\A^*(b)] \Pi_{\widetilde{J}}}.
\]
Therefore,
\begin{equation}
\label{eq:res_ell2_thres}
\fnorm{\Pi_{\widehat{J}_1} [\A^*(b)] \Pi_{\widehat{J}_2}}
\geq \fnorm{\Pi_{\widehat{J}_1} [\A^*(b)] \Pi_{\{k_0\}}}
\geq \fnorm{\Pi_{\{j_0\}} [\A^*(b)] \Pi_{\{k_0\}}},
\end{equation}
where the last step holds since $j_0 \in \widehat{J}_1$.

By Lemmas~\ref{lemma:riplike_cross} and \ref{lemma:riplike_noise},
the left-hand-side of (\ref{eq:res_ell2_thres}) is upper-bounded by
\[
\fnorm{\Pi_{\widehat{J}_1} u v^\transpose \Pi_{\widehat{J}_2}} + \delta \fnorm{u v^\transpose} + \sqrt{1+\delta} \norm{z}_2,
\]
and the right-hand-side of (\ref{eq:res_ell2_thres}) is lower-bounded by
\[
\fnorm{\Pi_{\{j_0\}} u v^\transpose \Pi_{\{k_0\}}} - \delta \fnorm{u v^\transpose} - \sqrt{1+\delta} \norm{z}_2.
\]
Therefore,
\[
\norm{\Pi_{\widehat{J}_1} u}_2 \norm{\Pi_{\widehat{J}_2} v}_2
\geq \norm{u}_\infty \norm{v}_\infty - 2 \delta \fnorm{u v^\transpose} - \sqrt{1+\delta} \norm{z}_2,
\]
which implies
\begin{equation}
\label{eq:init_step3}
(\norm{\Pi_{\widehat{J}_1} u}_2 \norm{u}_2) (\norm{\Pi_{\widehat{J}_2} v}_2 / \norm{v}_2)
\geq (\norm{u}_\infty / \norm{u}_2) (\norm{v}_\infty / \norm{v}_2) - 2 \delta - (1+\delta) \nu.
\end{equation}

Finally, we verify that (\ref{eq:suff_conv_initcond_simple}) together with (\ref{eq:init_step3}) implies (\ref{eq:suff_conv_initcond}).
This procedure is identical to the corresponding part in the previous work and we refer the details to \cite[p. 30, Section~7.2.2]{LeeWB2013spf}.

\section{Numerical Results}
\label{sec:numres}
In this section, we conduct numerical experiments to test the empirical performance of the alternating minimization algorithm (Algorithm~\ref{alg:altminprojbd}). In particular, we verify the following aspects:
\begin{enumerate}
  \item When the measurement is noiseless, the largest sparsity level, for which the reconstruction in Algorithm~\ref{alg:altminprojbd} is successful, is proportional to the length $m$ of the measurement vector.
  \item When the measurement is noisy, Algorithm~\ref{alg:altminprojbd} can produce stable reconstruction in the same optimal regime.
\end{enumerate}

In all the experiments, we synthesize the dictionaries $\Phi,\Psi\in\mathbb{R}^{n\times n}$ such that the entries are independent and identically distributed following a Gaussian distribution $N(0,1/n)$. The coefficient vectors $u,v\in\mathbb{R}^n$ have $s=s_1=s_2$ nonzero entries. The supports are chosen independently and uniformly at random, and the nonzero entries are independent and identically distributed following a Gaussian distribution $N(0,1)$. The measurement vector is synthesized using \eqref{eq:defcalA}, and the reconstruction $\hat{X}$ using Algorithm~\ref{alg:altminprojbd} is observed. The signal-to-noise ratio of the measurement is computed as
\[
\mathrm{SNR} := -20 \log_{10}\left(\frac{\|z\|_2}{\|\mathcal{A}(X)\|_2}\right).
\]
The reconstruction is declared successful if the reconstruction signal-to-distortion ratio (RSDR), defined by
\[
\mathrm{RSDR} := -20 \log_{10}\left(\frac{\|\hat{X}-X\|_F}{\|X\|_F}\right),
\]
is larger than certain threshold. We conduct experiments under three noise levels, where the measurement SNR's are $\infty$ (noiseless), $40$dB, and $20$dB, respectively. The cut-off thresholds of RSDR for the three noise levels are $60$dB, $30$dB, and $10$dB, respectively.

We first consider the measurement model without subsampling, where every element of the convolution is observed and $m=n$. To observe how the largest sparsity level $s$ for successful reconstruction varies with $m$, we compute the empirical success rate for different values of $m$ and $s$. The results for three noise levels are shown in Figures \ref{fig:successrate_a}, \ref{fig:successrate_b}, and \ref{fig:successrate_c}.
We also conduct an experiment for the measurement model with subsampling. The sampling operator $S_\Omega$ is uniform subsampling by a factor of $2$. In this case, \emph{every other} element of the convolution is observed and the number of measurements is $m=n/2$. The result for noiseless measurements is shown in Figure \ref{fig:successrate_d}.

Judging by the results in Figures \ref{fig:successrate_a} and \ref{fig:successrate_d}, we can tell that, with or without subsampling in the measurements, when the length $m$ of the measurement vector is large, the largest sparsity level $s$ for successful reconstruction is proportional to $m$. This observation verifies the prediction of our theoretical analysis.
Comparing the results from Figures \ref{fig:successrate_a}, \ref{fig:successrate_b}, and \ref{fig:successrate_c}, we find that Algorithm~\ref{alg:altminprojbd} is stable even with medium to high levels of noise. Although the reconstruction SDR decreases with the measurement SNR, the regime of successful reconstruction largely remains unchanged. Therefore, despite the high SNR requirement in the theoretical guarantee (Propositions~\ref{prop:conv_w_good_init} and \ref{prop:good_init}, and Theorem~\ref{thm:altminbd}), the algorithm performs consistently as the noise level increases.

\begin{figure}[htbp]
  \centering
  \subfloat[\label{fig:successrate_a}]{
%
%
\begin{tikzpicture}[scale=0.6]

\begin{axis}[%
width=4in,
height=2in,
scale only axis,
point meta min=0,
point meta max=1,
axis on top,
xmin=0.5,
xmax=16.5,
xtick={2,4,6,8,10,12,14,16},
xticklabels={{2048},{4096},{6144},{8192},{10240},{12288},{14336},{16384}},
ymin=0.5,
ymax=7.5,
ytick={1,3,5,7},
yticklabels={{1/256},{3/256},{5/256},{7/256}},
axis background/.style={fill=white}
]
\addplot [forget plot] graphics [xmin=0.5,xmax=16.5,ymin=0.5,ymax=7.5] {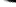};
\end{axis}
\end{tikzpicture}%
  }\quad
  \subfloat[\label{fig:successrate_b}]{
%
%
\begin{tikzpicture}[scale=0.6]

\begin{axis}[%
width=4in,
height=2in,
scale only axis,
point meta min=0,
point meta max=1,
axis on top,
xmin=0.5,
xmax=16.5,
xtick={2,4,6,8,10,12,14,16},
xticklabels={{2048},{4096},{6144},{8192},{10240},{12288},{14336},{16384}},
ymin=0.5,
ymax=7.5,
ytick={1,3,5,7},
yticklabels={{1/256},{3/256},{5/256},{7/256}},
axis background/.style={fill=white}
]
\addplot [forget plot] graphics [xmin=0.5,xmax=16.5,ymin=0.5,ymax=7.5] {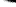};
\end{axis}
\end{tikzpicture}%
  }\\
  \subfloat[\label{fig:successrate_c}]{
%
%
\begin{tikzpicture}[scale=0.6]

\begin{axis}[%
width=4in,
height=2in,
scale only axis,
point meta min=0,
point meta max=1,
axis on top,
xmin=0.5,
xmax=16.5,
xtick={2,4,6,8,10,12,14,16},
xticklabels={{2048},{4096},{6144},{8192},{10240},{12288},{14336},{16384}},
ymin=0.5,
ymax=7.5,
ytick={1,3,5,7},
yticklabels={{1/256},{3/256},{5/256},{7/256}},
axis background/.style={fill=white}
]
\addplot [forget plot] graphics [xmin=0.5,xmax=16.5,ymin=0.5,ymax=7.5] {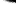};
\end{axis}
\end{tikzpicture}%
  }\quad
  \subfloat[\label{fig:successrate_d}]{
%
%
\begin{tikzpicture}[scale=0.6]

\begin{axis}[%
width=4in,
height=2in,
scale only axis,
point meta min=0,
point meta max=1,
axis on top,
xmin=0.5,
xmax=16.5,
xtick={2,4,6,8,10,12,14,16},
xticklabels={{1024},{2048},{3072},{4096},{5120},{6144},{7168},{8192}},
ymin=0.5,
ymax=7.5,
ytick={1,3,5,7},
yticklabels={{1/256},{3/256},{5/256},{7/256}},
axis background/.style={fill=white}
]
\addplot [forget plot] graphics [xmin=0.5,xmax=16.5,ymin=0.5,ymax=7.5] {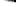};
\end{axis}
\end{tikzpicture}%
  }
  \caption{Empirical success rate. The $x$-axis represents the length $m$, and the $y$-axis represents the ratio $s/m$. The empirical success rate is represented by the grayscale (white for one and black for zero). (a) The measurement is noiseless. (b) The measurement SNR is $40$dB. (c) The measurement SNR is $20$dB. (d) The measurement is noiseless and uniformly subsampled by a factor of $2$.}
\label{fig:successrate}
\end{figure}

\section{Conclusion}
\label{sec:concl}

Near optimal performance guarantees for the subsampled blind deconvolution problem are studied in this paper.
Since the pioneering work that achieved a near optimal performance guarantee for blind deconvolution under subspace priors \cite{ahmed2014blind},
attempts have been made to involve a more general union of subspaces model, which correspond to a sparsity model with unknown support.
Even without subsampling, which is indeed a difficult component that makes the recovery in blind deconvolution significantly more challenging,
existing theoretic analyses were not able to provide a performance guarantee at near optimal relation among model parameters.
Similarly to the previous near optimal performance guarantee for blind deconvolution given a subspace model \cite{ahmed2014blind},
we assumed a stronger prior than the conventional sparsity, which also enforces spectral flatness on the unknown signals.
Under this prior, we proposed an iterative algorithm for recovery in subsampled blind deconvolution
and derived its performance guarantee, which shows stable reconstruction with high probability at near optimal sample complexity.

\section*{Acknowledgement}
K. Lee thanks A. Ahmed and F. Krahmer for discussions, which inspired the random dictionary model in this paper. 
This work was supported in part by the National Science Foundation under Grants CCF 10-18789, DMS 12-01886, and IIS 14-47879.

\appendix

\section{Proof of Proposition~\ref{Proposition 2.1}}
\label{appendix:prop2.1}
Note that because $F$ is a unitary matrix, $F \Phi$ has the same distribution as $\Phi$.
By an RIP result on a matrix with gaussian distributed entries (cf. \cite{baraniuk2008simple}),
it follows that if (\ref{A1}) is satisfied, then with high probability, we have
\begin{equation}
\label{eq:bnd_sf_den}
\sup_{u \in B_2^n \cap \Gamma_s} |\norm{F \Phi u}_2^2 - 1| \leq \eta,
\end{equation}
where $B_2^n$ is the Euclidean unit ball in $\cz^n$.
On the other hand, since $B_2^n \cap \Gamma_s \subset \sqrt{s} B_1^n$,
where $B_1^n$ denotes the $\ell_1$ unit ball in $\cz^n$, we have
\begin{align*}
\sup_{u \in B_2^n \cap \Gamma_s} n \norm{F \Phi u}_\infty^2
\leq \sup_{u \in \sqrt{s} B_1^n} n \norm{F \Phi u}_\infty^2
= s \max_{i,j} |g_{i,j}|^2,
\end{align*}
where $g_{i,j}$ denotes the $(i,j)$th entry of $\sqrt{n} F \Phi$.
Note that $|g_{i,j}|^2$'s are i.i.d. following a Chi-squared distribution with one degree of freedom,
whose tail distribution $P(|g_{i,j}|^2 > z)$ is upper-bounded by $\sqrt{ze^{1-z}}$.
The tail distribution of the order statistic $\max_{i,j} |g_{i,j}|^2$ is then upper-bounded by
\[
1 - (1 - \sqrt{ze^{1-z}})^{n^2} \leq n^2 \sqrt{ze^{1-z}}.
\]
Therefore, with probability $1 - n^{-\beta}$,
\begin{equation}
\label{eq:bnd_sf_num}
\sup_{u \in B_2^n \cap \Gamma_s} n \norm{F \Phi u}_\infty^2 \leq s \max_{i,j} |g_{i,j}|^2 \leq c_1 s \log n
\end{equation}
for absolute constants $c_1, \beta > 0$.

Combining (\ref{eq:bnd_sf_den}) and (\ref{eq:bnd_sf_num}) completes the proof.

\section{Proof of Proposition~\ref{Proposition 2.2}}
\label{appendix:prop2.2}
As shown in the proof of Proposition~\ref{Proposition 2.1},
if (\ref{A1}) is satisfied, then (\ref{eq:bnd_sf_den}) holds with high probability, which implies
\[
\norm{F \Phi u}_2^2 \leq 1+\eta, \quad \forall u \in B_2^n \cap \Gamma_s.
\]

On the other hand, by fixing the support of $u$ to $\{j_1,\ldots,j_s\}$ and choosing the $i$th row of $F \Phi$
in the maximization with respect to $u$, we get a lower bound for $\sup_{u \in B_2^n \cap \Gamma_s} n \norm{F \Phi u}_\infty^2$ given by
\[
\sup_{u \in B_2^n \cap \Gamma_s} n \norm{F \Phi u}_\infty^2
\geq \sum_{k=1}^s |g_{i,j_k}|^2,
\]
where $g_{i,j_k}$ is a standard Gaussian random variable obtained as the $j_k$th entry of the $i$th row of $\sqrt{n} F \Phi$.
Note that this lower bound is a Chi-squared random variable with $s$ degrees of freedom;
hence, its mean and median are $s$ and $s[1-2/(9s)]^3$, respectively.
In other words, with probability 0.5, there exists $u \in B_2^n \cap \Gamma_s$ such that $n \norm{F \Phi u}_\infty^2 \geq s[1-2/(9s)]^3$.

\section{Proof of Proposition~\ref{Proposition 2.3}}
\label{appendix:prop2.3}
Without loss of generality, we may assume $\norm{u}_2 = 1$.
Then, $n \norm{F \Phi u}_2^2$ and the squared magnitude of each entry of $\sqrt{n} F \Phi u$
correspond to Chi-square random variables with $n$ and $1$ degrees of freedom, respectively.

The cumulative density function of $n \norm{F \Phi u}_2^2$ is upper-bounded by
\[
P(n \norm{F \Phi u}_2^2 < (1-\alpha) n) \leq [(1-\alpha) e^\alpha]^{n/2}
\]
for $0 < \alpha < 1$. Note that $0 < (1-\alpha) e^\alpha < 1$.
Therefore, except with exponentially decaying probability in $n$,
\begin{equation}
\label{eq:prop2.3:num}
\frac{1}{\norm{F \Phi u}_2^2} > \frac{1}{1-\alpha}.
\end{equation}

Similar to the proof of Proposition~\ref{Proposition 2.1}, via a tail distribution of an order statistic,
we have
\[
P(n \norm{F \Phi u}_\infty^2 > z) \leq 1 - (1 - \sqrt{ze^{1-z}})^n \leq n \sqrt{ze^{1-z}}.
\]
Therefore, with probability $1 - n^{-\beta}$,
\begin{equation}
\label{eq:prop2.3:den}
n \norm{F \Phi u}_\infty^2 \leq c_1 s \log n
\end{equation}
for absolute constants $c_1, \beta > 0$.
Combining (\ref{eq:prop2.3:num} and (\ref{eq:prop2.3:den}) completes the proof.

\section{Proof of Theorem~\ref{thm:proj2cone}}
\label{appendix:thm:proj2cone}
Note that the image of $\C_{\mu}$ by the DFT is another cone given by
\begin{equation}
\label{eq:defcheckCmu}
\widecheck{\C}_{\mu} := F \C_{\mu} = \{w \in \cz^n :\pl \norm{w}_\infty \leq \sqrt{\mu/n} \norm{w}_2 \}.
\end{equation}
Since $F$ is a unitary matrix, which preserves $\ell_2$ norm, the projection satisfies:
\[
P_{\C_\mu}(x) = F^* P_{\widecheck{\C}_\mu}(F x).
\]
Therefore, it suffices to show that $\norm{\xi}_2^{-2} \xi \xi^* \zeta$ is the projection of $\zeta = F x$ onto the cone $\widecheck{\C}_\mu$.

Since $\widecheck{\C}_\mu$ is a cone, the condition $P_{\widecheck{\C}_\mu}(\zeta) = \norm{\xi}_2^{-2} \xi \xi^* \zeta$ is equivalent to
\begin{equation}
\label{eq:proj_cone_eq1}
\xi \in \arg\max_{\tilde{\xi} \in \widecheck{\C}_\mu} \frac{|\langle \tilde{\xi}, \zeta \rangle|}{\norm{\tilde{\xi}}_2}.
\end{equation}

When the magnitudes of variable $\tilde{\xi}$ is fixed,
the denominator of the objective function in (\ref{eq:proj_cone_eq1}) is fixed regardless of the phase of $\tilde{\xi}$, 
and the numerator is maximized by choosing the phase of $\tilde{\xi}$ identical to that of $\zeta$.
In other words, a maximizer to (\ref{eq:proj_cone_eq1}) and $\zeta$ have the same phase.

Algorithm~\ref{alg:proj_cone} constructs $\xi$ so that $\xi$ and $\zeta$ have the same phase.
Therefore, the aforementioned necessary condition for the optimality of $\xi$ is satisfied.
It remains to show the optimality of the magnitudes of $\xi$.

Let $a = [a_1,\ldots,a_n]^\transpose$ where $a_i = |\xi_i|^2$ for $i=1,\ldots,n$.
Since $\xi$ and $\zeta$ have the same phase, we only need to show that
\begin{equation}
\label{eq:proj_cone_eq2}
a \in \arg\max_{\tilde{a} \in \widecheck{\C}_\mu} \frac{\sum_{i=1}^n |\zeta_i| \sqrt{\tilde{a}_i}}{\norm{\tilde{a}}_2},
\end{equation}
where $\tilde{a} = [\tilde{a}_1,\ldots,\tilde{a}_n]^\transpose$.

In Algorithm~\ref{alg:proj_cone}, only the projection onto the span of $\xi$ is used.
Therefore, without loss of generality, we may assume that $\norm{\xi}_2^2 = \sum_{i=1}^n a_i = n$.
Then, (\ref{eq:proj_cone_eq2}) is satisfied
if $a$ is a minimizer to the following convex optimization problem
\begin{equation}
\label{eq:proj_cone_eq3}
\begin{array}{ll}
\displaystyle \min_{\tilde{a} \in \mathbb{R}_+^n} & - \sum_{i=1}^n {|\zeta_i| \sqrt{\tilde{a}_i}} \\
\text{s.t.} & \sum_{i=1}^n {\tilde{a}_i} = n, \\
& 0 \leq \tilde{a}_i \leq \mu, \quad \forall i=1,\ldots,n.
\end{array}
\end{equation}

The vector $\xi$ in algorithm~\ref{alg:proj_cone} satisfies $\norm{\xi}_2^2 = \sum_{i=1}^n {a_i} = n$.
Furthermore, the magnitudes of $\xi$ satisfy
\[
|\xi_i| =
\begin{cases}
\sqrt{\mu}, & \text{if $|\zeta_i| > 2\sqrt{\mu}\lambda$}, \\
\frac{|\zeta_i|}{2\lambda}, & \text{if $|\zeta_i|\leq 2\sqrt{\mu}\lambda$},
\end{cases}
\qquad \forall i=1,\ldots,n,
\]
for some $\lambda > 0$.
This implies that $\tilde{a}$ satisfies the KKT condition (for some $[\mu_1,\ldots,\mu_n]^\transpose \in \mathbb{R}_+^n$):
\begin{align*}
& \sqrt{a_i} = \frac{|\zeta_i|}{2(\lambda+\mu_i)},\\
& \text{if $\mu_i >0$, then $\sqrt{a_i} = \sqrt{\mu}$}= \frac{|\zeta_i|}{2(\lambda+\mu_i)}< \frac{|\zeta_i|}{2\lambda},\\
& \text{if $\mu_i =0$, then $\sqrt{a_i} = \frac{|\zeta_i|}{2\lambda}$},
\end{align*}
for $i=1,\ldots,n$.
Therefore, $a$ is a minimizer to (\ref{eq:proj_cone_eq3}), which completes the proof.

\bibliographystyle{IEEEtran}
\bibliography{IEEEabrv,lra,linalg,cs,bdconv,preprint,appl}

\begin{thebibliography}{10}
\providecommand{\url}[1]{#1}
\csname url@samestyle\endcsname
\providecommand{\newblock}{\relax}
\providecommand{\bibinfo}[2]{#2}
\providecommand{\BIBentrySTDinterwordspacing}{\spaceskip=0pt\relax}
\providecommand{\BIBentryALTinterwordstretchfactor}{4}
\providecommand{\BIBentryALTinterwordspacing}{\spaceskip=\fontdimen2\font plus
\BIBentryALTinterwordstretchfactor\fontdimen3\font minus
  \fontdimen4\font\relax}
\providecommand{\BIBforeignlanguage}[2]{{%
\expandafter\ifx\csname l@#1\endcsname\relax
\typeout{** WARNING: IEEEtran.bst: No hyphenation pattern has been}%
\typeout{** loaded for the language `#1'. Using the pattern for}%
\typeout{** the default language instead.}%
\else
\language=\csname l@#1\endcsname
\fi
#2}}
\providecommand{\BIBdecl}{\relax}
\BIBdecl

\bibitem{kundar1996blind}
D.~Kundur and D.~Hatzinakos, ``Blind image deconvolution,'' \emph{{IEEE} Signal
  Process. Mag.}, vol.~13, no.~3, pp. 43--64, May 1996.

\bibitem{liu2005relevant}
Y.~Lin and D.~Lee, ``Relevant deconvolution for acoustic source estimation,''
  in \emph{Proc. ICASSP}, Mar. 2005, pp. v/529--v/532.

\bibitem{kazemi2014sparse}
N.~Kazemi and M.~D. Sacchi, ``Sparse multichannel blind deconvolution,''
  \emph{Geophysics}, vol.~79, no.~5, pp. V143--V152, 2014.

\bibitem{kotera2013blind}
J.~Kotera, F.~{\v{S}}roubek, and P.~Milanfar, ``Blind deconvolution using
  alternating maximum a posteriori estimation with heavy-tailed priors,'' in
  \emph{Comput. Anal. Images Patterns}.\hskip 1em plus 0.5em minus 0.4em\relax
  Springer, 2013, pp. 59--66.

\bibitem{sroubek2007unified}
F.~{\v{S}}roubek, G.~Crist{\'o}bal, and J.~Flusser, ``A unified approach to
  superresolution and multichannel blind deconvolution,'' \emph{{IEEE} Trans.
  Image Process.}, vol.~16, no.~9, pp. 2322--2332, 2007.

\bibitem{ying2007joint}
L.~Ying and J.~Sheng, ``Joint image reconstruction and sensitivity estimation
  in {SENSE} ({JSENSE}),'' \emph{Magnet. Reson. Med.}, vol.~57, no.~6, pp.
  1196--1202, 2007.

\bibitem{abed1997blind}
K.~Abed-Meraim, W.~Qiu, and Y.~Hua, ``Blind system identification,''
  \emph{Proc. {IEEE}}, vol.~85, no.~8, pp. 1310--1322, 1997.

\bibitem{ahmed2014blind}
A.~Ahmed, B.~Recht, and J.~Romberg, ``Blind deconvolution using convex
  programming,'' \emph{{IEEE} Trans. Inf. Theory}, vol.~60, no.~3, pp.
  1711--1732, Mar. 2014.

\bibitem{ling2015self}
S.~Ling and T.~Strohmer, ``Self-calibration and biconvex compressive sensing,''
  \emph{arXiv preprint arXiv:1501.06864}, 2015.

\bibitem{chi2015guaranteed}
Y.~Chi, ``Guaranteed blind sparse spikes deconvolution via lifting and convex
  optimization,'' \emph{arXiv preprint arXiv:1506.02751}, 2015.

\bibitem{LeeWB2013spf}
K.~Lee, Y.~Wu, and Y.~Bresler, ``Near optimal compressed sensing of sparse
  rank-one matrices via sparse power factorization,'' \emph{arXiv preprint
  arXiv:1312.0525}, 2013.

\bibitem{choudhary2014identifiability}
S.~Choudhary and U.~Mitra, ``Identifiability scaling laws in bilinear inverse
  problems,'' \emph{arXiv preprint arXiv:1402.2637}, 2014.

\bibitem{candes2010power}
E.~J. Cand{\`e}s and T.~Tao, ``The power of convex relaxation: Near-optimal
  matrix completion,'' \emph{{IEEE} Trans. Inf. Theory}, vol.~56, no.~5, pp.
  2053--2080, 2010.

\bibitem{choudhary2014sparse}
S.~Choudhary and U.~Mitra, ``Sparse blind deconvolution: What cannot be done,''
  in \emph{Proc. ISIT}, Jun. 2014, pp. 3002--3006.

\bibitem{LeeJunge2015}
K.~Lee and M.~Junge, ``{RIP}-like properties in subsampled blind
  deconvolution,'' \emph{arXiv preprint arXiv:1511.06146}.

\bibitem{li2015unified}
Y.~Li, K.~Lee, and Y.~Bresler, ``A unified framework for identifiability
  analysis in bilinear inverse problems with applications to subspace and
  sparsity models,'' \emph{arXiv preprint arXiv:1501.06120}, 2015.

\bibitem{li2015identifiability}
------, ``Identifiability in blind deconvolution with subspace or sparsity
  constraints,'' \emph{arXiv preprint arXiv:1505.03399}, 2015.

\bibitem{Fou2011htp}
S.~Foucart, ``Hard thresholding pursuit: an algorithm for compressive
  sensing,'' \emph{SIAM J. Numer. Anal.}, vol.~49, no.~6, pp. 2543--2563, 2011.

\bibitem{needell2009cosamp}
D.~Needell and J.~A. Tropp, ``{CoSaMP}: Iterative signal recovery from
  incomplete and inaccurate samples,'' \emph{Appl. Comput. Harmon. Anal.},
  vol.~26, no.~3, pp. 301--321, 2009.

\bibitem{dai2009subspace}
W.~Dai and O.~Milenkovic, ``Subspace pursuit for compressive sensing signal
  reconstruction,'' \emph{{IEEE} Trans. Inf. Theory}, vol.~55, no.~5, pp.
  2230--2249, May 2009.

\bibitem{candes2005decoding}
E.~J. Cand{\`e}s and T.~Tao, ``Decoding by linear programming,'' \emph{{IEEE}
  Trans. Inf. Theory}, vol.~51, no.~12, pp. 4203--4215, 2005.

\bibitem{candes2008restricted}
E.~J. Cand{\`e}s, ``The restricted isometry property and its implications for
  compressed sensing,'' \emph{Comptes Rendus Mathematique}, vol. 346, no.~9,
  pp. 589--592, 2008.

\bibitem{baraniuk2008simple}
R.~Baraniuk, M.~Davenport, R.~DeVore, and M.~Wakin, ``A simple proof of the
  restricted isometry property for random matrices,'' \emph{Constr. Approx.},
  vol.~28, no.~3, pp. 253--263, 2008.

\bibitem{Wed1972perturbation}
P.-{\AA}. Wedin, ``Perturbation bounds in connection with singular value
  decomposition,'' \emph{BIT Numer. Math.}, vol.~12, no.~1, pp. 99--111, 1972.

\end{thebibliography}

\end{document}